\newcommand{\cov}{\mathrm{Cov}}
\newcommand{\indic}[1]{\mathds{1}_{#1}}
\newcommand{\Nx}{N_x}
\newcommand{\Nt}{N_t}
\newcommand{\Dom}{D}
\newcommand{\ta}{\tau}
\newcommand{\suchthat}{\;\ifnum\currentgrouptype=16 \middle\fi|\;}
\newtheorem{theorem}{Theorem}[section]
\newtheorem{pro}[theorem]{Proposition}
\newtheorem{rem}[theorem]{Remark}
\def \R{{\mathbb R}}
\def \Cxi{{\mathcal C}}
\def \Ixi{{\mathcal I}}
\def\annu#1{_{%
  \vbox{\hrule height .2pt 
    \kern 1pt 
    \hbox{$\scriptstyle {#1}\kern 1pt$}%
  }\kern-.05pt 
  \vrule width .2pt 
}}
\title{Yield curve interpolation using constraint kriging techniques}
\begin{document}
\begin{center}
\Large{\textbf{Kriging of financial term-structures}}
\end{center}

\vspace{0.4cm}

\begin{center}
 Areski Cousin$^{\dagger}$\footnote{areski.cousin@univ-lyon1.fr}, Hassan Maatouk$^{\ddagger }$\footnote{hassan.maatouk@mines-stetienne.fr} and Didier Rulli\`ere$^{\dagger}$\footnote{didier.rulliere@univ-lyon1.fr}
\end{center}
 
\vskip0.2cm

\begin{center} 
($\dagger$) 
{\small Universit\'{e} de Lyon, Universit\'{e} Claude Bernard Lyon 1, ISFA, Laboratoire SAF  EA2429, 50 av. Tony Garnier, 69366 Lyon, France\\}

($\ddagger$) {\small Mines de Saint-Etienne, 158 Cours Fauriel, 42023 Saint-\'Etienne, France\\}
\end{center}
%

\begin{abstract}
Due to the lack of reliable market information, building financial term-structures may be associated with a significant degree of uncertainty.
In this paper, we propose a new term-structure interpolation method that extends classical spline techniques by additionally allowing for quantification of uncertainty. 
The proposed method is based on a  generalization of kriging models with linear equality constraints (market-fit conditions) and  shape-preserving conditions such as monotonicity or positivity (no-arbitrage conditions). 
We define the most likely curve and show how to build confidence bands.
The Gaussian process covariance hyper-parameters under the construction constraints are estimated using cross-validation techniques.
Based on observed market quotes at different  dates, we demonstrate the efficiency of the method by building curves together with confidence intervals for 
term-structures of OIS discount rates, of zero-coupon swaps rates and of CDS implied default probabilities. We also show how to construct interest-rate surfaces or default probability surfaces by considering time (quotation dates) as an additional dimension.
\end{abstract}


\paragraph{JEL classification} C63; E43; G12

\paragraph{Keywords} Model risk; interest-rate curve; yield curve; OIS discount curve; implied default distribution; kriging; no-arbitrage constraints

\section{Introduction}


%
%
Constructing  term-structures is at the heart of asset pricing and risk management. 
A term-structure is a curve which describes the evolution of some financial or economic quantities as a function of time horizon. Typical examples are the term-structure of risk-free interest-rates, the term-structure of bond yields or credit spreads, the term-structure of default probabilities or the term-structure of stock return implied volatilities. These curve are typically not directly observed in the market. Thus, the curve construction is based on a  benchmark set of contingent financial instruments whose values explicitly depend on some part of the curve.  In practice, market quotes of these products only provide a partial information on the term-structure since they can only be considered to be reliable  for a small set of liquid maturities.  
The problem is then to transform a small set of market quotes into a continuum set of values representing the evolution of the underlying quantity of interest with respect to time horizon.\\

On practical grounds, the curve is assumed to belong to a family of parametric functions (Nielson-Siegel functional~\cite{nelson1987parsimonious}, polynomial splines, \citet{SW:2001}) and its construction consists in finding the underlying parameters that best fits observed market quotes for all available maturities.
In \citet{deAndresGomez2004}, the interest-rate term-structure is estimated from bid-ask spreads of underlying instruments using fuzzy regression techniques.
 \citet{Hagan2006} provide a review of different interpolation techniques for curve construction. They introduce a monotone convex method and postulate a series of quality criterion such as ability to fit market quotes, arbitrage-freeness, smoothness, locality of interpolation scheme, stability of forward rate and consistency of hedging strategies. \citet{Andersen2007} analyzes the use of hyperbolic tension splines for construction of interest-rate term structures. The underlying optimization allows the user to control the relative importance of fit precision with respect to shape preservation (smoothness of the curve, penalization of oscillations and excess convexity/concavity).  In the same vein, \citet{Chiu_el2008} shows that $L_{1}$ cubic splines minimizes the curve oscillation without sacrificing good approximation of the data.
\citet{Iwashita2013} makes a survey of non-local spline interpolation techniques which preserve stability of forward rates. 
Other papers such as  \citet{Ametrano2009}, \citet{Chibane2009}, \citet{Kenyon2012} or \citet{Fries2013} are concerned with the adaptation of curve construction methods in a multi-curve interest-rate environment. 
Note that, in terms of interpolation scheme, there is no consensus towards a particular best practice method in all circumstances. In addition, the previous  approaches does not account for the uncertainty embedded in the process of curve construction. This could be of primary importance given that the market inputs may be  unreliable or even inexistent for some  maturities.
\\

This issue is related to the study of model uncertainty and its impact on risk management. This topic has been studied since a certain time period and, following the recent financial crisis, has received a particular interest. Impact of model risk on valuation and hedging of financial derivatives have been treated by, among others,  \citet{Derman1996}, \citet{Eberlein1997}, \citet{ElKaroui1998}, \citet{Green1999}, \citet{Branger2004}, \citet{Cont2006}, \citet{Davis2007}, \citet{Henaff2010}, \citet{Morini2011}. In most papers, the question of model risk is restricted to the class of  derivative products. One of the main objective is to quantify model uncertainty, for instance  to obtain bounds for the arbitrage free value of some derivative instruments, given some information on the underlying securities, such as marginal distribution of its price at some particular time horizons.
In contrast, the question of model risk embedded in the construction of marginal distribution or term-structure function themselves has not been investigated as a main object, whatever it  may concern discount curves, zero-coupon curves, swap basis curves, bond term structures or CDS-implied survival curves. \\


From \citet{KW:1970} and \citet{MKGL:1996}, it is well-known that spline fitting is a special case of kriging  \citep[see also][]{bay:hal-01136466, 2016arXiv160202714B}. In addition, kriging allows to account for quantification of uncertainty. Kriging has been developed in geosatistics to estimate the density of some mineral resource in the ground given a relatively small set of borehole, see \cite{krige1951statistical}, \cite{matheron1963principles}, \cite{cressie1990origins}.  Its principle relies on the determination of the conditional distribution of a spatial random field given a set of observed values of the field. The main interest of this method is that it allows to build a predictor of quantities of interest at other locations, as well as uncertainties relying on this prediction.

Kriging is now widely used in many fields like hydrology, air pollution, epidemiology, weather prediction, etc. to interpolate some quantity of interest given some known values at different locations. Despite its popularity, there are relatively few works concerning kriging in actuarial sciences or in finance. Many reference academic journal of these fields give only few or even no entry corresponding to the word ``kriging''. The method is however sometimes referred to using the terms ``Gaussian Processes'' and ``machine learning''. Some existing works using kriging methodology in actuarial sciences and finance concern for example dynamic lifetime adjustments \citep{Debon2010327}, variable annuities valuation \citep{Gan2013795, Gan2015138}, nested simulation of expected shortfall \citep{liu2010stochastic}, Vasicek model calibration \citep{sousa2012machine}, stock market linkages \citep{Asgharian20134738} or credit scoring \citep{FernandesArtes2015}. Other works using spatial techniques are~\cite{Kanevski2008} on interest rates,~\cite{Benth:2015} for energy futures prices.
Some preprints or conference papers also mention the fit of some financial models \citep{stutvoet2007fitting}, spatial insurance \citep{paulson2006spatial}, trading and hedging strategies \citep{baysal2008response}, valuation of Bermudan options \citep{Ludkovski:2015}. Kriging methods naturally rely on some assumptions on the underlying random fields, and one must carefully consider all conditions that must be satisfied before constructing a kriging model.\\

In practice, the term-structure under construction has to satisfy several type of conditions.
One of the most important condition is the compatibility of the curve with market data, i.e., if the curve is used to value a benchmark set of instruments (under a specific pricing rule), the resulting values shall be as close as possible to the observed market quotes. In many classical situations, the market-fit condition translates into a system of linear constraints which can be easily incorporated in kriging techniques.
In addition, kriging can also handle the presence of noisy observations (using the so-called \textit{nugget effect}). This may be relevant in situation where, due to the lack of liquidity, market quotes cannot be considered to be reliable. It is then possible to incorporate an additional level of uncertainty (degree of confidence) associated with market observations.
Monotonicity constraints also appears to be important in many applications. For instance, the price of default-free zero-coupon bonds (or risk-free discount factors) is a non-increasing  function of time-to-maturities under no-arbitrage assumption. Survival functions inferred from CDS spread term-structures are $[0,1]$-valued non-increasing functions.\\

Recently, some authors have studied the integration of monotonicity constraints into Gaussian process emulators, see e.g. \cite{doi:10.1137/140976741} and \cite{kleijnen2012monotonicity}. However, these methods do not guaranty monotonicity constraints in the entire domain. The article~\cite{abrahamsen2001kriging} also deals with the introduction of constraints at some location of a Gaussian process. In \cite{maatouk:hal-01096751}, classical kriging has been improved to tackle monotonicity, positivity constraints or bounds constraints on the curve values. In the present paper, we show how ``constrained'' kriging techniques can be used to extend the classical spline interpolation approaches by additionally quantifying the uncertainty in some illiquid part of the curve. \\

The paper is organized as follows. Section~\ref{Sec:term-structure} states the term-structure construction problem and gives specific examples of market-fit conditions and shape preserving constraints. In Section~\ref{IntModels}, we briefly recall Gaussian process modeling with interpolation conditions or with more general linear equality constraints. In Section~\ref{IntModIConst}, we present the model defined in \cite{maatouk:hal-01096751} to incorporate monotonicity constraints into a Gaussian process emulator. We then study the associated properties such as convergence to the constrained interpolation spline and the estimation of the covariance hyper-parameters.
 In Section~\ref{EmpInv}, based on real market data, we construct curves together with confidence intervals  for different financial term-structures such as OIS discount curves, zero-coupon swap curves and CDS-implied default distributions. 

\section{The term-structure construction problem}
\label{Sec:term-structure}

The main ingredients in the construction of a term-structure function is a set of  financial products whose value depends on some points of the curve. Then, observing the price of these products provides an indirect (and partial) information on the curve. The first step is then to specify the relation between the value of these products and the values of the curve at different time horizons. In this paper, we restrict ourselves to situations where this relation is linear. As we will see, this is the case in many practical situations such as the construction of corporate or sovereign bond yield curve, the construction of OIS discount curves, the construction of forward curves based on fixed-vs-floating interest-rate swaps or  the construction of implied default rates based on CDS spreads.\\

\subsection{Market-fit  and shape-preserving conditions}

The aim is to construct at some quotation date $t$ a term-structure function $T \rightarrow P(t, T)$, based on the observation  of a series of market quotes $S_1(t) \ldots, S_n(t)$ corresponding to the market value of $n$  financial instruments with time-to-maturities $T_1, \ldots, T_n$. In what follows, the quantity $T$ denotes a time length (as opposed to a calendar date), so that $P(t,T)$ corresponds to the value of the curve at time horizon $T$ or at calendar date $t+T$. The observation of market quotes at time $t$ provides a partial information on the curve at a set of time horizons or points $X =  \left(\tau_1, \ldots, \tau_m\right)$, i.e., at some calendar dates $t+\tau_1, \ldots, t+\tau_m$ which typically correspond to  payment dates of cash-flows.\\

The curve is (fully) compatible at time $t$ with market observations if the vector $P(t, X):=\left(P(t,\tau_1), \ldots, P(t,\tau_m)\right)^{\top}$ satisfies a linear system of the form 
\begin{equation}
\label{eq:market_fit}
A_t \cdot P(t, X)=\boldsymbol{b}_t,
\end{equation}
where $A_t$ is $n\times m$ real-valued matrix and $\boldsymbol{b}_t$ is a $n$-dimensional column vector. Of course, $A_t$ and $\boldsymbol{b}_t$ may depend on market quotes $S_1(t) \ldots, S_n(t)$, on the characteristics of the product cash-flows but also on the hypothesis made for assessing the value of these products at time $t$.
 Note that the number of observations $n$ is typically smaller than the number of points $m$, so that the solution of  system~\eqref{eq:market_fit} lives in a linear space  with dimension $m-n$.
\\

When constructing a financial term-structure, one may consider some additional information on the shape of the curve. For instance, the function $T \rightarrow P(t, T)$ may be known to be decreasing with respect to time horizon $T$ and its values may be bounded and belong to the interval $[0,1]$. This is  typically the case when one wants to construct a curve of discount factors (default-free zero-coupon bond prices) or an implied survival function (survival probabilities of a CDS reference entity). Violating these kinds of shape-preserving conditions results in term-structure functions that are typically not arbitrage-free.
\\

\begin{rem}
\label{rem:surface}
The term-structure construction problem can be stated in a two-dimensional setting when the evolution of time (quotation dates) is added as a second dimension. In that case, the aim is to construct a surface $(t,T) \rightarrow P(t, T)$ based on a series of market quotes $S_1(t) \ldots, S_n(t)$ observed at several quotation dates $t=t_1, \ldots, t_N$. Note that the cash-flows characteristics of these products may depend on time $t$. In particular, the underlying maturity dates could be different at every time $t$. If for any date $t=t_1, \ldots, t_N$, the market-fit condition translates into a linear system, then this condition can again be expressed by a single system by concatenating  for every time $t=t_1, \ldots, t_N$ the $N$ systems  given as in Equation~\eqref{eq:market_fit}. The shape-preserving condition can be expressed as the intersection of shape-preserving conditions at time $t=t_1, \ldots, t_N$.
\end{rem}



In the following, we give some examples where  the construction of term-structures  involves combination of linear equality constraints and shape preserving conditions such as monotony or positivity.

\subsection{Classical examples of term-structures}


In what follows, $t$  denotes a particular quotation date, i.e., the date at which the market quotes are observed for the underlying family of contingent products. 
For each example, we only consider a single financial product in this family and we provide the linear relation which characterize its value. These relation then corresponds to one particular row of the linear system~\eqref{eq:market_fit}.\\

\noindent \textbf{Corporate or sovereign bond yield curves}\\

Let $S$ be the observed market price  of a corporate or a sovereign bond with time-to-maturity  $T$ and with a fixed coupon rate $c$. The price $S$ and the coupon rate $c$ are expressed in percentage of invested nominal. The set of coupon payment dates is given by $(t+\ta_1,\ldots, t+\ta_p)$ where $\ta_1<\ldots<\ta_p=T$. The  year fraction $\delta_k$ represents the  time length $\ta_{k}- \ta_{k-1}$, $k=1, \ldots, p$ where  $\ta_0=0$. The present value of this bond can be defined as a linear combination of some default-free zero-coupon bonds, i.e., 
\begin{equation}
\label{eq:bond}
 c\sum_{k=1}^{p}{\delta_k P^{B}(t,\ta_k)} + P^{B}(t, \tau_p) = S,
\end{equation}
where $P^{B}(t,\ta)$ represents the price at time $t$ of a default-free zero-coupon bond with time-to-maturity $\ta$. Note that, even if representation~\eqref{eq:bond} obviously relies on a default-free assumption, it is commonly employed as an intermediary step in the computation of the so-called bond yield-to-maturity.\footnote{The bond yield associated with time-to-maturity $T$ is defined as the constant rate of return $Y(t,T)$ such that the present value relation~\eqref{eq:bond} holds exactly when all the involved ZC bonds have this rate of return.} In this example, the curve $T\rightarrow P^B(t,T)$ will be inferred from a set of market fit conditions similar to~\eqref{eq:bond}, each of them corresponding to specific debt products but with different maturities.
As a result, this set of conditions can be easily represented in the form of a linear system as in Equation~\eqref{eq:market_fit}. In addition, the default-free assumption implies that the curve  $T\rightarrow P^B(t,T)$ is decreasing if arbitrage opportunities are precluded.\\

%
%

\noindent \textbf{Discount curves based on Overnight Indexed Swaps par rates}\\

Due to legal terms of standard collateral agreements, a possible choice to build discount curves is to use market quotes of OIS-like instruments \citep[see, for instance][for more details]{Hull2013}.
Let $S$ be the par swap rate of an overnight indexed swap with maturity $T$ and fixed leg payment dates $\ta_1<\ldots<\ta_p = T$. The  year fraction $\delta_k$ represents the  time length $\ta_{k}- \ta_{k-1}$, $k=1, \ldots, p$ where  $\ta_0=0$. For overnight-index swaps this time length is typically equal to one year.
The swap equilibrium relation takes the following linear form
\begin{equation}
\label{eq:OIS}
S \sum_{k=1}^{p}  \delta_k P^{D}(t,\ta_k)  = 1 - P^{D}(t, T), 
\end{equation}
where $P^{D}(t,\ta)$ is the discount factor at time $t$ associated with a time horizon $\tau$. In the previous equation, the left hand side represents the fixed leg present value whereas  the right hand side corresponds to the floating leg present value. For more details on the derivation of (\ref{eq:OIS}), the reader is referred to \citet{Fujii2010}.
\\

Under some circumstances, discount cuves can also be extracted from par rates of fixed versus Euribor swaps with Euribor tenor of 3 months or 6 months. This is the case for instance in the LTGA framework of Solvability 2  prudential regulation where the basis risk-free rates used for euro are constructed from the euro swap rates with a small adjustment for credit spread \citep[see, e.g.][]{CFO-CRO-2010}.
The resulting market-fit condition will exactly have the same form as~\eqref{eq:OIS}.\\

Note that, in the banking industry, discount curves are now understood as OIS based curves, see e.g. \cite{pallavicini2010interest}. The next example explains how to infer Euribor forward rates from quoted OIS rates and Euribor Swap rates.\\


%
%

\noindent \textbf{Forward curves based on OIS and fixed versus Ibor-floating interest-rate swaps}\\


 Let $S$ be the observed par rate of an interest rate swap with maturity time $T$ and floating payments linked to a Libor or an Euribor rate associated with a tenor $j$ (typically, $j=3$ months or $j=6$ months). The fixed-leg payment scheme is given by $\ta_1<\cdots<\ta_p=T$ and  the floating-leg payment scheme is given by $\tilde{\ta}_1<\cdots<\tilde{\ta}_{q}=T$. 
  For most liquid products, payment on the fixed  leg are made with an annual frequency, so that $\ta_k$ corresponds to $k$ years ahead from the current date $t$. 
The  year fraction $\delta_k$ represents the  time length $\ta_{k}- \ta_{k-1}$, $k=1, \ldots, p$ ($\ta_0=0$) whereas the year fraction  $\tilde{\delta}_i$ represents the time length  $\tilde{\ta}_{i}-\tilde{\ta}_{i-1}$, $i=1, \ldots, q$ ($\tilde{\ta}_0=0$).   
Note that the length between two consecutive dates on the floating leg should correspond to the Libor or Euribor tenor, i.e., depending on the case, $\tilde{\delta}_i \simeq$ 3 months or $\tilde{\delta}_i \simeq$ 6 months. As a result, given an OIS discount curve $P^{D}$, the swap equilibrium relation can be represented in a linear form with respect to some forward Libor or Euribor rates, i.e.,  
\begin{equation}
\label{eq:IRS}
S \sum_{k=1}^{p}  \delta_k P^{D}(t,\ta_k) = \sum_{i=1}^{q} P^{D}(t, \tilde{\ta}_i) \tilde{\delta}_i F_j(t, \tilde{\ta}_{i} ),
\end{equation}
where $P^{D}(t, \tilde{\ta})$ is a risk-free discount factor at time $t$ for maturity $\ta$ and  $F_j(t, \tilde{\ta}_{i} ) := F(t,\tilde{\ta}_{i-1}, \tilde{\ta}_{i} )$ is the forward Libor or Euribor rate defined as the fixed rate to be exchanged at time $\tilde{\ta}_i$ against the $j$-tenor Libor or Euribor rate established at time $\tilde{\ta}_{i-1}$ so that the swap has zero value at time $t$. As in the previous example, the left hand side of~\ref{eq:IRS} represents the fixed leg present value whereas the right hand side corresponds to the floating leg present value. For more details, see, for instance \citet{Chibane2009}. Given a (pre-constructed) discount curve $P^D$ and  a set of swap par rates $S = S_1, \ldots, S_n$ corresponding to time-$t$ market quotes of Euribor or Libor swaps with maturities $T=T_1, \ldots, T_n$, the forward curve $T\rightarrow F_j(t, T)$ under construction has to satisfy a linear market-fit condition. This condition takes the form of a linear system whose each line is given by~\eqref{eq:IRS}. In addition, one can  require that forward rates are positive, so that the underlying pseudo zero-coupon prices form a decreasing function of time horizon. This particular shape preserving property (positivity) can also be enforced in the interpolation procedure we will described in the next section.\\

\noindent \textbf{Credit curves based on Credit Default Swaps spreads}\\

Let $S$ be the fair spread of a credit default swap with protection time  horizon $T$ and with premium payment dates $\ta_1<\cdots<\ta_p=T$. If we denote by $R$ the  expected recovery rate of the reference entity  and by 
$\delta_k$ the year fraction corresponding to time length $\ta_{k}- \ta_{k-1}$ ($\ta_0 = 0$), then 
 the CDS swap equilibrium relation can be expressed as
\begin{equation}
\label{eq:CDS}
  S \sum_{k=1}^p \delta_k P^D(t, \ta_k) Q(t, \ta_k)  = -(1-R)\int_{0}^T P^D(t, \ta) dQ(t,\ta),
\end{equation}
where $P^{D}(t, \ta)$ is the risk-free discount factor at time $t$ for time horizon $\ta$ and where $Q(t,\ta)$ is the probability (at time $t$) that the underlying reference entity has not defaulted before time horizon $\ta$. Then, $Q(t,\ta)$ is the survival probability of the  debt issuer in the time horizon $\ta$.  The left hand side of~\eqref{eq:CDS} represents the premium leg present value whereas the right hand side corresponds to the protection leg (or default leg) present value.
We implicitly assume here that recovery, default and interest rates are stochastically independent.
Using an integration by parts, it is straightfoward to show that survival probabilities $Q(t,\ta)$, $0 \leq \ta \leq T$,  are linked through the following linear relation~:
\begin{equation}
\label{eq:linear_CDS}
\begin{split}
S &\sum_{k=1}^p \delta_k P^D(t,\ta_k) Q(t, \ta_k) + (1-R) P^D(t,T)Q(t,T)\\ 
& \hspace{0.5 cm} + (1-R)\int_{0}^T f^D(t,\ta)P^D(t, \ta)Q(t,\ta)d\ta = 1 - R
\end{split}
\end{equation}
where $f^D(t,\ta)$ is the instantaneous forward rate\footnote{Instantaneous forward rates can be derived from discount factors through the following relation~: $f^D(t, \ta)P^D(t, \ta) = - \frac{\partial P}{\partial \ta} (t, \ta)$.} at time $t$ for  time horizon $\ta$.
In practice, the integral involved in the expression of the protection leg present value is classically discretized on the premium time grid $\ta_1<\cdots<\ta_n=T$, so that the continuous linear condition~\eqref{eq:linear_CDS} can be stated as a discrete one given by
\begin{equation}
\label{eq:discrete_linear_CDS}
\begin{split}
 &\sum_{k=1}^{p-1} \left(S\delta_k P^D(t,\ta_k) + (1-R)(P^D(t,\ta_{k-1}) - P^D(t,\ta_k))\right) Q(t, \ta_k) \\ 
& \hspace{0.5 cm} + \left(S\delta_p P^D(t,T) + (1-R) P^D(t,\ta_{p-1})\right) Q(t, T) = 1 - R.
\end{split}
\end{equation}
At some fixed quotation date $t$, CDS protection is usually available for a set of liquid maturities $T_1, \ldots, T_n$. Then the construction of an implied survival function $T\rightarrow Q(t,T)$ consists in building a $[0,1]$-valued decreasing function that satisfies a system of $n$  linear equality constraints as of~\eqref{eq:discrete_linear_CDS}.



%
%
%
%

%

\section{Kriging under linear equality constraints}

\label{IntModels}

The term-structure construction approach we propose relies on an extension of kriging to linear equality and shape preserving constraints. In this section, we give a formal presentation of this interpolation technique when only linear equality constraints are considered. Section~\ref{IntModIConst} explains how this technique can be generalized when some monotonicity constraints are added.\\ 

Kriging or Gaussian process regression is a method of interpolation for which the interpolated values are modeled by a Gaussian Process (GP) with a prior covariance function. This method is widely used in the domain of spatial analysis and computer experiments \citep[see, e.g.][]{Rasmussen:2005:GPM:1162254}. More formally, we consider the model $y=f(\boldsymbol{x})$, where $f$ is an unknown real-valued function of a $d$-dimensional input variable $\boldsymbol{x}\in\mathbb{R}^d$. In the case where the computation of $f$ is expensive and time-consuming or in the case where $f$ is known only at some input locations, this function can be estimated using so-called \emph{kriging} techniques. 
In that case, $f$  is seen as a realization of a Gaussian Process (GP) $Y$ defined as
\begin{equation*}
Y(\boldsymbol{x}):=\mu(\boldsymbol{x})+Z(\boldsymbol{x}),
\end{equation*}
where the deterministic function $\mu \ : \ \boldsymbol{x}\in \mathbb{R}^d \longrightarrow \mu(\boldsymbol{x})\in\mathbb{R}$
is the mean of $Y$ and $Z$ is a zero-mean GP with covariance function
\begin{equation*}
K \ : \ (\boldsymbol{x},\boldsymbol{x}')\in \mathbb{R}^d\times\mathbb{R}^d \longrightarrow K(\boldsymbol{x},\boldsymbol{x}')=\cov(Y(\boldsymbol{x}),Y(\boldsymbol{x}'))\in\mathbb{R}.
\end{equation*}
The quantity $K(\boldsymbol{x},\boldsymbol{x}')$ is thus the covariance of the values of the random field $Y$ at two input locations $\boldsymbol{x}$ and $\boldsymbol{x}'$.
We assume that the covariance function $K$ is such that the random field $Y$ have continuous and differentiable sample paths with probability one, see \cite{abrahamsen1997review}. Figure~\ref{GPsimkernels} (right) gives an illustration of some  Gaussian processes sample paths.  In numerical illustrations of Section~\ref{EmpInv}, we consider Gaussian processes with $d$-dimensional covariance functions given as a tensor product, i.e., for $\boldsymbol{x}=(x_1,\ldots,x_d)$ and $\boldsymbol{x}'=(x'_1,\ldots,x'_d)$~:
\begin{equation*}
K(\boldsymbol{x},\boldsymbol{x}')=\sigma^2\prod_{i=1}^dC_i(x_i-x'_i,\theta_i),
\end{equation*}
where $\boldsymbol{\theta}=(\theta_1,\ldots,\theta_d)\in\mathbb{R}^d$ and $\sigma^2$ are respectively called \textit{length} and \textit{variance hyper parameters}.
The function $C_i$ are kernel correlation functions which depend on the length parameter $\theta_i$ and on $x_i-x'_i, \ i=1,\ldots,d$, see Table~\ref{kernel} for some popular kernel correlation functions. Note that the length parameter $\theta$ can be interpreted as a correlation parameter\footnote{For a Gaussian covariance kernel, it can be shown that, for any $n$,  increasing $\boldsymbol{\theta}$ yields an increase of the vector $(Y(x^{1}), \ldots, Y(x^{n}))$ with respect to the supermodular order, a stochastic order which is well-known for comparing the degree of dependence.} as it controls the degree of dependence amongst the values of the Gaussian process at any two points. The parameter $\sigma$ controls the initial Gaussian process variance.

\subsection{Classical kriging}
In classical kriging, the real function $f$ is known to take some values $y_1,\ldots,y_n$ at some $d$-dimensional design points $x^{(1)},\ldots,x^{(n)}$, so that $f(X)=\boldsymbol{y}$, where the design points are given as the rows of the $n\times d$ matrix $X=\left(x^{(1)},\ldots,x^{(n)}\right)^\top\in\mathbb{R}^{n\times d}$, $f(X)=\left(f(x^{(1)}),\ldots,f(x^{(n)})\right)^\top\in\mathbb{R}^n$ and $\boldsymbol{y}=(y_1,\ldots,y_n)^\top\in\mathbb{R}^n$. One advantage of using a GP emulator is that, conditionally to the observation data $\boldsymbol{y}$, the conditional process $Y \suchthat Y(X)=\boldsymbol{y}$ is still a GP. This process is characterized by its (marginal) mean 
\begin{equation}
\label{eq:kriging_mean}
\eta(\boldsymbol{x})=\mu(\boldsymbol{x})+\boldsymbol{k}(\boldsymbol{x})^\top\mathbb{K}^{-1}(\boldsymbol{y}-\boldsymbol{\mu}), \quad  
\boldsymbol{x}\in\mathbb{R}^d
\end{equation}
and its covariance function $\tilde{K}$ given by
\begin{equation}
\tilde{K}(\boldsymbol{x},\boldsymbol{x}')= K(\boldsymbol{x},\boldsymbol{x}')-\boldsymbol{k}(\boldsymbol{x})^\top\mathbb{K}^{-1}\boldsymbol{k}(\boldsymbol{x}'),\quad  
\boldsymbol{x},\boldsymbol{x}' \in\mathbb{R}^d
\end{equation}
where $\boldsymbol{\mu}=\mu(X)=\left(\mu(x^{(1)}),\ldots,\mu(x^{(n)})\right)^\top\in\mathbb{R}^n$ is the trend vector at the design points, $\mathbb{K}$ is the covariance matrix
of $Y(X)$ and $\boldsymbol{k}(\boldsymbol{x})=\left(K\left(\boldsymbol{x},x^{(1)}\right),\ldots,K\left(\boldsymbol{x},x^{(n)}\right)\right)^\top$ is the vector of covariance between $Y(\boldsymbol{x})$ and $Y(X)$. The conditional mean $\eta(\boldsymbol{x})$ given the observation data $Y(X)=\boldsymbol{y}$ is the Best Linear Unbiased Estimator (BLUE) of $Y(\boldsymbol{x})$, which is known as \textit{kriging mean} \citep[see][]{jones1998}. 
One remarkable property is that the covariance function $\tilde{K}$ of the conditional Gaussian process does not depend on the observation data $\boldsymbol{y}$. 
In addition, the regularity of the kriging mean predictor function $\eta$ inherits from  the regularity of the mean function $\mu$ and from the regularity of the
covariance function $K$ of the original GP $Y$. 
Then, the choice of this covariance function  is essential because it drives the smoothness of the kriging metamodel. Table~\ref{kernel} gives some popular  kernel correlation functions, ordered by decreasing degree of smoothness.
Figure~\ref{GPsimkernels} shows three alternative covariance functions with  associated Gaussian process sample paths.  
\begin{figure}[hptb]
\begin{minipage}{.5\linewidth}
\centering
\includegraphics[scale=.4]{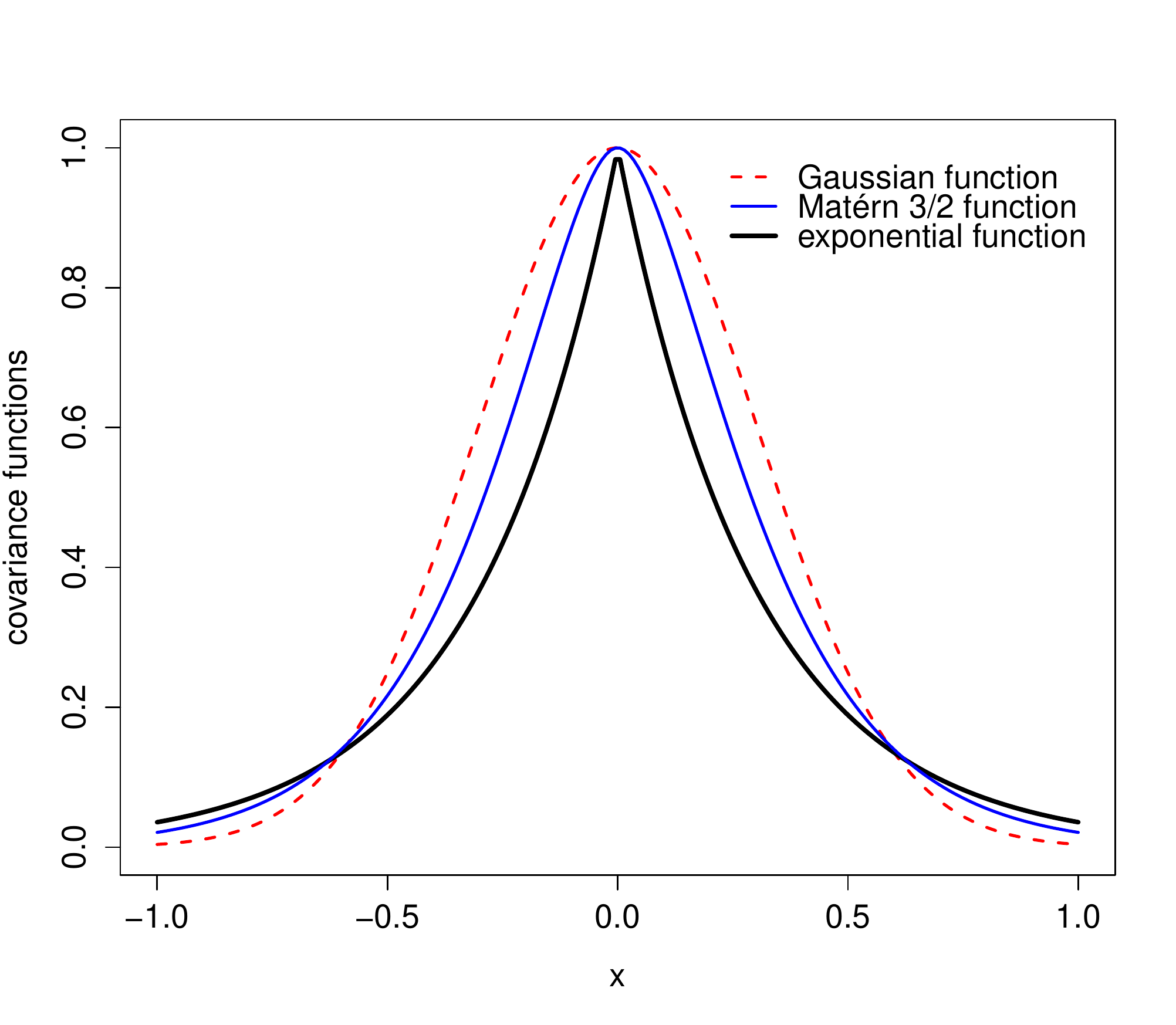}
\end{minipage}%
\begin{minipage}{.5\linewidth}
\centering
\includegraphics[scale=.4]{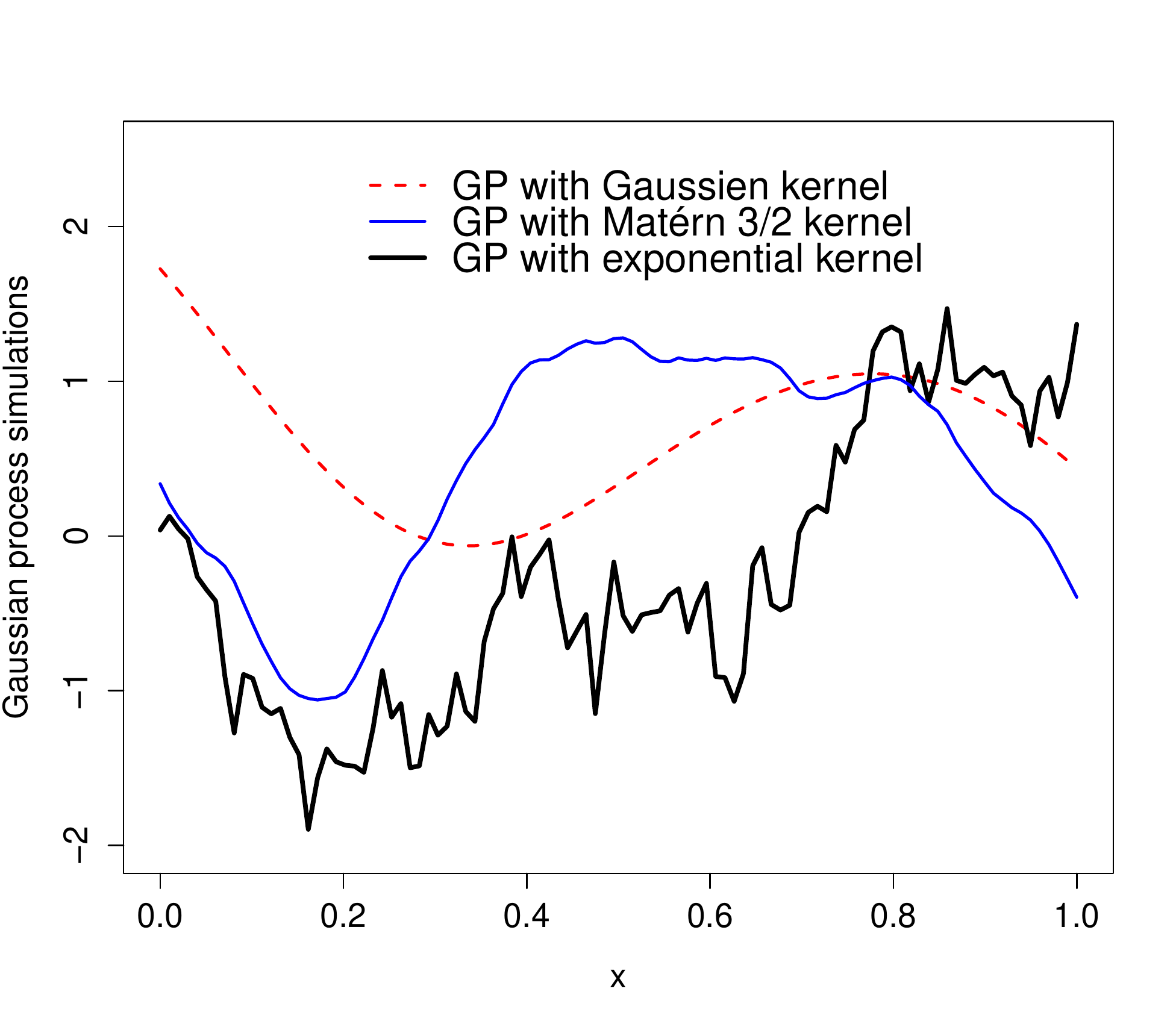}
\end{minipage}
\caption{Some Gaussian process covariance functions (left) and associated sample paths (right).
The covariance parameters are fixed to $(\theta, \sigma)=(0.3,\, 1.0)$.}
\label{GPsimkernels}
\end{figure}

\begin{table}[hptb]
\centering
\caption{Some popular kernel correlation functions $C(x-x',\theta)$ used in kriging methods.}
\label{kernel}
\begin{tabular}{ccc}
\hline \hline
Name  &  Expression &  Class   \\   
\hline
Gaussian       & $\exp\left(-\frac{(x-x')^2}{2\theta^2}\right)$ 	& $\mathcal{C}^{\infty}$       \\
Mat\'ern 5/2   & $\left(1+\frac{\sqrt{5}| x-x' |}{\theta}+\frac{5(x-x')^2}{3\theta^2}\right)\exp\left(-\frac{\sqrt{5} | x-x'|}{\theta}\right)$
& $\mathcal{C}^2$      \\
Mat\'ern 3/2   &  $\left(1+\frac{\sqrt{3}| x-x' |}{\theta}\right)\exp\left(-\frac{\sqrt{3}|x-x'|}{\theta}\right)$   & $\mathcal{C}^1$      \\
Exponential    &  $\exp\left(-\frac{|x-x'|}{\theta}\right)$ & $\mathcal{C}^0$      \\   
\hline
\end{tabular}
\end{table}

\subsection{Extension to linear equality constraints}\label{ILC}

The previous setting can be generalized by considering linear equality constraints instead of pure interpolation constraints. This is of primary importance if one wants to construct term-structures which are compatible with linear market-fit conditions. We are under the situation where $n$ relevant financial products are considered to construct the curve and their market quotes provide information on the curve at the $m$ points $x^{(1)},\ldots,x^{(m)}$. Then, the (unknown) real function $f$ satisfies some linear constraints of the form
\begin{equation}\label{initLinConst}
A\cdot f(X)=\boldsymbol{b},
\end{equation}
where $A$ is a given matrix of dimension $n\times m$, $n,m\in\mathbb{N}$, $X=\left(x^{(1)},\ldots,x^{(m)}\right)^\top\in\mathbb{R}^{m\times d}$, $f(X)=\left(f(x^{(1)}),\ldots,f(x^{(m)})\right)^\top \in \mathbb{R}^m$ and $\boldsymbol{b}\in\mathbb{R}^n$. 
In that case, the conditional process $Y \suchthat A\cdot Y(X)=\boldsymbol{b}$ is still a Gaussian process with mean
\begin{equation}
\label{eq:mean}
\eta(\boldsymbol{x})=\mu(\boldsymbol{x})+\left(A\boldsymbol{k}(\boldsymbol{x})\right)^\top \left(A\mathbb{K}A^\top\right)^{-1}(\boldsymbol{b}-A\boldsymbol{\mu}), \quad  
\boldsymbol{x}\in\mathbb{R}^d
\end{equation}
and covariance function 
\begin{equation}
\label{eq:cov}
\tilde{K}(\boldsymbol{x},\boldsymbol{x}')= K(\boldsymbol{x},\boldsymbol{x}')-\left(A\boldsymbol{k}(\boldsymbol{x})\right)^\top \left(A\mathbb{K}A^\top\right)^{-1}A\boldsymbol{k}(\boldsymbol{x}'),\quad  
\boldsymbol{x},\boldsymbol{x}' \in\mathbb{R}^d
\end{equation}
where $\boldsymbol{\mu}=\mu(X)=\left(\mu(x^{(1)}),\ldots,\mu(x^{(m)})\right)^\top\in\mathbb{R}^m$ is the trend vector at the design points, $\mathbb{K}$ is the covariance matrix
of $Y(X)$ and $\boldsymbol{k}(\boldsymbol{x})=\left(K\left(\boldsymbol{x},x^{(1)}\right),\ldots,K\left(\boldsymbol{x},x^{(m)}\right)\right)^\top$ is the vector of covariance between $Y(\boldsymbol{x})$ and $Y(X)$.
Note that when $A$ is a square identity matrix, the linear constraints become interpolation constraints.

\subsection{Kriging in the presence of  noisy observations}\label{subsec:noisy}

The previous framework can be further extended to situations where market observations cannot be considered to be fully reliable due to, e.g., market microstructure effect. We assume that this uncertainty blurred the market information on $f$ in such a way that the previous measurement equation \eqref{initLinConst} is flawed by an additive error term $\epsilon$ :
$$
 \boldsymbol{b} = A\cdot f(X) + \epsilon.
$$
We consider that this error term is one realization of an independent zero-mean Gaussian noise $\boldsymbol{\varepsilon}$ in $\R^{m}$ with  covariance matrix $\Sigma$. 
If the measurement error $\boldsymbol{\varepsilon}$ is assumed to be  independent of the Gaussian process $Y$,  the conditional process $Y \suchthat  \boldsymbol{b} = A\cdot Y(X) + \boldsymbol{\varepsilon}$ is still a Gaussian process with mean
\begin{equation}
\label{eq:mean_noisy}
\eta(\boldsymbol{x})=\mu(\boldsymbol{x})+\left(A\boldsymbol{k}(\boldsymbol{x})\right)^\top \left(A\mathbb{K}A^\top + \Sigma\right)^{-1}(\boldsymbol{b}-A\boldsymbol{\mu}), \quad  
\boldsymbol{x}\in\mathbb{R}^d
\end{equation}
and covariance function 
\begin{equation}
\label{eq:cov_noisy}
\tilde{K}(\boldsymbol{x},\boldsymbol{x}')= K(\boldsymbol{x},\boldsymbol{x}')-\left(A\boldsymbol{k}(\boldsymbol{x})\right)^\top \left(A\mathbb{K}A^\top + \Sigma\right)^{-1}A\boldsymbol{k}(\boldsymbol{x}'),\quad  
\boldsymbol{x},\boldsymbol{x}' \in\mathbb{R}^d
\end{equation}
where $\boldsymbol{\mu}=\mu(X)=\left(\mu(x^{(1)}),\ldots,\mu(x^{(m)})\right)^\top\in\mathbb{R}^m$ is the trend vector at the design points, $\mathbb{K}$ is the covariance matrix
of $Y(X)$ and $\boldsymbol{k}(\boldsymbol{x})=\left(K\left(\boldsymbol{x},x^{(1)}\right),\ldots,K\left(\boldsymbol{x},x^{(m)}\right)\right)^\top$ is the vector of covariance between $Y(\boldsymbol{x})$ and $Y(X)$.
Note that the only difference compared to noiseless kriging equations \eqref{eq:mean} and  \eqref{eq:cov} is the replacement, at every occurrence, of the covariance matrix $A\mathbb{K}A^\top$ by an inflated matrix $A\mathbb{K}A^\top + \Sigma$. In addition, the curve associated with the kriging mean function \eqref{eq:mean_noisy} does not satisfy the noiseless market fit condition.\\


The possibility to handle the presence of noisy observations is quite important. The presence of measurement errors and observations with contaminations may have important consequences on the dynamics of yield curves, as noticed for example in~\cite{LauriniOhashi2015}. The previous framework can be used to construct term structure functions in the presence of illiquid securities. Assume that the vector $\boldsymbol{b}$ represents mid-prices of some reference products in a market situation where bid-ask spreads are large. For instance, the covariance matrix $\Sigma$ can be chosen in such a way that the range of $(\boldsymbol{b}-\boldsymbol{\varepsilon})$ is concentrated on the bid-ask interval.
One possible way to do that is to define $\Sigma$ as a diagonal matrix where each positive term is the square difference between ask and mid prices.\\ 

Note moreover that this extension can also be useful when one wants to explicitly look for an attractive trade-off between precision and curve smoothness. Indeed, considering that observed information is uncertain allows to weaken the market-fit condition which in turn benefits to curve smoothness.
Our methodology can be then adapted both for best-fitting of noisy market prices and for the construction of an exact interpolatory term structure to a set of liquid instruments.

\section{Kriging under additional monotonicity constraints}\label{IntModIConst}

As mentioned in Section~\ref{Sec:term-structure}, the studied real function $f$ may be known to satisfy some shape-preserving constraints such as monotonicity or positivity. 
\emph{No-Arbitrage constraints} lead to a theoretical need of monotonicity in curve constructions. For example, quantities such as prices of default-free zero-coupon bond, discount factors or implied survival probabilities shall be non-increasing with respect to time horizons under the no-arbitrage condition.  In this section, we propose to extend constrained spline techniques to constrained kriging in order to build term-structure functions.\\ 

The use of monotonic splines  for term-structure construction has been a subject of great interest in the literature. This has been investigated by among other \cite{Barzanti-Corradi99}, \cite{ramponi2003adaptive},  \cite{Chiu_el2008}. The relationship of shape and monotonicity restrictions with no-arbitrage conditions is detailed for example in the recent literature, see~\cite{Laurini:2010} and~\cite{Fengler:2015}. 
Additional motivations of the use of monotonicity constraints will be developed in the numerical Section~\ref{sec:motivMonotonie}.\\

A natural extension of kriging under monotonicity constraints is to consider conditional Gaussian processes with monotone paths. However, the difficulty is that the conditional monotone process is not a Gaussian process any more. We adopt here the approach introduced by \cite{maatouk:hal-01096751} where Gaussian processes are approximated by finite-dimensional versions, so that the monotonicity constraints can  be checked very efficiently in the entire domain. \\ 

In the following, we consider linear equality constraints as described in Section~\ref{ILC} and we explain how to incorporate supplementary monotonicity constraints. Note that, even if we focus on monotonicity constraints, other shape-preserving constraints can be incorporated using similar ideas \citep[see][]{maatouk:hal-01096751}.
We first introduce monotonicity constraints under 
 the  one-dimensional case, where one has to retrieve a monotonic curve at a given quotation date (Section~\ref{ODC}). We then explain how to extend the one-dimensional kriging construction method to dimension two, where information at several quotation dates can be jointly used  (Section~\ref{TDC}).


\subsection{One dimensional case}\label{ODC}

In this section, we assume that the input variable $x$ belongs to an interval $\Dom = [\underline{x}, \overline{x}]$  of $\mathbb{R}$ and we consider an original Gaussian process $Y$ with covariance function $K$. For simplicity, we assume that $Y$ is a zero-mean GP. 
The aim of this section is to explain how to construct a process that both satisfies linear equality constraints  and monotonicity constraints.\\

 The first step is to approximate the original Gaussian process $Y$ by a finite-dimensional version which is monotonic under some simple finite-dimensional linear inequality conditions.
  We begin by discretizing the input interval $\Dom$ into  a regular subdivision $u_0<\ldots < u_N$ with $u_0=\underline{x}$, $u_N=\overline{x}$ and with a constant mesh $\delta$, so that $u_j=u_0+j\delta$, $j=0, \ldots, N$. We then consider an associated set of basis functions  $\phi_j$,  $j=0, \ldots, N$ defined as 
\begin{equation*}
\phi_j(x)=\int_{\underline{x}}^xh_j(u)du, \qquad x\in \Dom,
\end{equation*}
where $h_j(x):=\max\left(1-\frac{|x-u_j|}{\delta},0\right)$ is a hat function centered at the $j^{\text{th}}$ knot $u_j$ of the input subdivision.  Note that the basis functions  $\phi_j$,  $j=0, \ldots, N$ are increasing as primitives of positive functions.
We then define the finite-dimensional approximation of $Y$ on $\Dom$ as the process $Y^N$ such that 
%
%
\begin{equation}\label{propModel}
Y^N(x)=\eta+\sum_{j=0}^N\xi_j\phi_j(x),\qquad x\in\Dom,
\end{equation} 
where $\boldsymbol{\xi}=(\eta,\xi_0,\ldots,\xi_N)^\top$ is a zero-mean Gaussian vector.
If the Gaussian process $Y$ has almost surely  differentiable paths, choosing $\eta = Y(u_0)$ and $\xi_{j} = Y'(u_j)$, $j=0,\ldots,N$ guarantees that the finite-dimensional process $Y^N$ uniformly converges on $D$ towards $Y$ almost surely as $N$ tends to infinity \citep[see][]{maatouk:hal-01096751}. In that case, the
 covariance matrix $\Gamma^N$  of $\boldsymbol{\xi}$ is given as
\begin{equation}
\label{eq:cov_mat_one_dim}
\Gamma^N=\left[\begin{matrix}
K(u_0,u_0) && \frac{\partial K}{\partial x'}(u_0,u_j) \\\\
\frac{\partial K}{\partial x}(u_i,u_0) && \frac{\partial^2K}{\partial x\partial x'}(u_i,u_j)\\
\end{matrix}
\right]_{0\leq i,j\leq N},
\end{equation}
where $K$ is the covariance matrix of the original GP $Y$ and $u_j, \ j=0,\ldots,N$ are the knots of the input subdivision. 

\begin{pro}[Monotonicity]
\label{pro:monotone_1D}
The process $Y^N$  defined in Equation~\eqref{propModel} is non-increasing (resp. non-decreasing) \textit{if and only if} all the coefficients $\xi_j, \ j=0,\ldots,N$ are nonpositive (resp. nonnegative).
\end{pro}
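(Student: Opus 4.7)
My plan is to differentiate $Y^N$ and exploit the fact that the hat functions $h_j$ form a partition of unity on $D$, which reduces the monotonicity question to a pointwise condition on the coefficients $\xi_j$.

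First I would compute $(Y^N)'(x)$ directly. Since $\phi_j(x)=\int_{\underline{x}}^x h_j(u)\,du$, the fundamental theorem of calculus gives $\phi_j'(x)=h_j(x)$, so
\begin{equation*}
(Y^N)'(x)=\sum_{j=0}^N \xi_j\,h_j(x),\qquad x\in D.
\end{equation*}
I would then record the two elementary properties of the hat basis that drive everything: (i) $h_j(u_i)=\delta_{ij}$ (Kronecker delta), because $h_j$ is centred at $u_j$ with support of half-width $\delta$ equal to the mesh size; and (ii) for $x\in[u_j,u_{j+1}]$ only $h_j$ and $h_{j+1}$ are nonzero, with $h_j(x)+h_{j+1}(x)=1$ and both values in $[0,1]$. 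Consequently $(Y^N)'$ is the continuous piecewise-linear function that linearly interpolates the values $\xi_0,\ldots,\xi_N$ at the knots $u_0,\ldots,u_N$.

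For the sufficiency direction ($\Leftarrow$), I would simply note that if all $\xi_j\le 0$ (resp. $\ge 0$), then on each cell $[u_j,u_{j+1}]$ the value $(Y^N)'(x)=(1-t)\xi_j+t\,\xi_{j+1}$ with $t=(x-u_j)/\delta\in[0,1]$ is a convex combination of nonpositive (resp. nonnegative) numbers, hence nonpositive (resp. nonnegative) on all of $D$; integrating yields that $Y^N$ is non-increasing (resp. non-decreasing). For the necessity direction ($\Rightarrow$), assuming $Y^N$ is non-increasing, $(Y^N)'(x)\le 0$ for every $x\in D$, and evaluating at $x=u_j$ gives $(Y^N)'(u_j)=\sum_{i}\xi_i h_i(u_j)=\xi_j\le 0$ by property (i); the non-decreasing case is identical up to a sign.

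There is no real obstacle here: the proof is essentially a one-line observation once the derivative is written in the hat-function basis. The only minor point to handle carefully is the behaviour at the boundary knots $u_0$ and $u_N$, where only one hat function is nonzero — but property (i) covers these as well, so the equivalence holds at every knot, and the piecewise-linear structure propagates it to the whole of $D$.
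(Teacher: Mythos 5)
Your proof is correct and follows essentially the same route as the paper: necessity by evaluating $(Y^N)'(u_k)=\sum_j \xi_j h_j(u_k)=\xi_k$ at the knots, and sufficiency from the sign of the derivative (the paper argues directly that each $\xi_j\phi_j$ is monotone since $\phi_j$ is increasing, while you phrase the same fact via the convex-combination structure of the piecewise-linear derivative). The extra detail you supply about the partition of unity and the boundary knots is a harmless elaboration of the same argument.
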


\begin{proof}
If $\xi_j, \ j=0,\ldots,N$ are nonpositive then, since $\phi_j$ are increasing, then $Y^N$ is increasing. For the reverse implication, let us first notice that
the derivative of the basis functions $\phi_j, \ j=0,\ldots,N$ are such that
\begin{eqnarray*}
\phi_j'(u_k)=h_j(u_k)=\indic{j=k}=\left\{\begin{array}{ll}
1 & \mbox{if} \ j=k\\
0 & \mbox{if} \ j\neq k 
\end{array}\right.
\end{eqnarray*}
Thus, the derivative of the process $Y^N$ at any knots $u_k, \ k=0,\ldots,N$ is
\begin{equation*}
\left(Y^N\right)'(u_k)=\sum_{j=0}^N\xi_j\phi'_j(u_k)=\xi_k,
\end{equation*}
which concludes the proof.
\end{proof}
The choice of the basis functions $\phi_j$ and of $\Gamma^N$ depends on the type of shape-preserving constraints. Other type of basis function can be used for other constraints  \citep[for more details, see][]{maatouk:hal-01096751}.\\

In order to construct curves which are compatible with market quotes, linear equality constraints like the one given in Subsection~\ref{ILC} have to be imposed on the process $Y^N$ at some points $x^{(1)},\ldots,x^{(m)}$ in $\Dom$. Then, if $Y^N(X)=\left(Y^N\left(x^{(1)}\right), \ldots,Y^N\left(x^{(m)}\right)\right)^\top$ denotes the vector of values involved in the curve construction and given Equation~\eqref{propModel}, the condition $A\cdot Y^N(X) = \boldsymbol{b}$ translates into the following linear equality constraint on the Gaussian vector $\boldsymbol{\xi}$~:
\begin{equation}\label{EqCond}
A \cdot \Phi \cdot \boldsymbol{\xi}=\boldsymbol{b},
\end{equation}
where $\Phi$ is a $m\times (N+2)$ matrix defined as
\begin{eqnarray*}
\Phi_{i,j}:=\left\{\begin{array}{ll}
1 & \mbox{for} \ i=1,\ldots,m \ \text{and} \ j=1,\\
\phi_{j-2}\left(x^{(i)}\right) & \mbox{for} \ i=1,\ldots,m \ \text{and} \ j=2,\ldots,N+2.
\end{array}\right.
\end{eqnarray*}
Note that, generally speaking,  the linear equality condition \eqref{EqCond} on $\boldsymbol{\xi}$ admits solutions only when $N+2\geq n$ as  $A \cdot \Phi$ is a matrix of dimension $n\times (N+2)$.

\paragraph{Curve simulation.}  Conditional GP satisfying both monotonicity and linear equality constraints can be sampled by  generating truncated Gaussian vector $\boldsymbol{\xi}$ restricted to~: 
\begin{eqnarray*}\left\{\begin{array}{ll}
B \cdot \boldsymbol{\xi}=\boldsymbol{b} & \qquad \mbox{linear equality condition}\\
\boldsymbol{\xi}\in \Cxi & \qquad \mbox{monotonicity constraint}
\end{array}\right.
\end{eqnarray*}  
where,  $B  = A \cdot \Phi$ and for instance $\Cxi=\left\{\boldsymbol{\xi}\in\mathbb{R}^{N+2} \ : \ \xi_j\leq 0, \ j=0,\ldots,N\right\}$ for non-increasing constraints. 
 Then, simulated paths can be sampled in two steps.  First, the conditional  distribution of the vector $\boldsymbol{\xi}$ given $B \cdot \boldsymbol{\xi}=\boldsymbol{b}$ is still Gaussian with mean 
 $$
 (B\Gamma^N)^\top\left(B\Gamma^N  B^\top\right)^{-1}\boldsymbol{b}
 $$
 and covariance matrix
 $$
 \Gamma^N-\left(B\Gamma^N\right)^\top\left(B\Gamma^N  B^\top\right)^{-1}B\Gamma^N,
 $$
so that it can be simulated very efficiently.
Then, the simulation of truncated Gaussian vectors restricted to, for instance, negativity of the components (here $\boldsymbol{\xi}\in \Cxi$) can be done by using improved rejection sampling algorithm such as the one described in \cite{MaatoukMCQMC2014} and \cite{Robert}.
 By Equation~\eqref{propModel} we get sample paths that fulfil both constraints. These simulations can be used to construct confidence intervals for the value of the curve at each point $x$ in $\Dom$.

\paragraph{Most likely curve.} 

Given a covariance kernel $K$ and its estimated parameters, it is possible to determine the most likely path of the conditional Gaussian process under both linear and monotonicity constraints. This most likely curve corresponds to the mode\footnote{The maximum of the probability density function.} of the finite-dimensional truncated Gaussian vector $\boldsymbol{\xi}$ \citep[see also][for a discussion about the mode]{abrahamsen2001kriging}.
In bayesian statistics, it is known as the Maximum A Posteriori (MAP) estimator \citep[see][for more details]{maatouk:hal-01096751}. Its  expression is given by~:
\begin{equation}\label{propEstimator}
M^N_{K}\left(x \suchthat A,\boldsymbol{b}\right)=\nu+\sum_{j=0}^N\nu_j\phi_j(x),
\end{equation}
where $\boldsymbol{\nu}=(\nu,\nu_0,\ldots,\nu_N)^\top\in\mathbb{R}^{N+2}$ is the solution of the following convex optimization problem~:
\begin{equation}\label{eqNU}
\boldsymbol{\nu}=\arg\min_{\boldsymbol{c}\in \Cxi\cap \Ixi(A,\boldsymbol{b})}\left(\frac{1}{2}\boldsymbol{c}^\top\left(\Gamma^N\right)^{-1}\boldsymbol{c}\right),
\end{equation}
and where $\Gamma^N$ is the covariance matrix of the Gaussian vector $\boldsymbol{\xi}$ defined in \eqref{eq:cov_mat_one_dim}. The vector $\boldsymbol{\nu}$ can be seen as the mode of the Gaussian vector $\boldsymbol{\xi}$ restricted to $\Cxi\cap \Ixi(A,\boldsymbol{b})$, where $\Cxi$ is the set of vectors satisfying  monotonicity constraints and $\Ixi(A,\boldsymbol{b}) = \left\{\boldsymbol{\xi}\in\mathbb{R}^{N+2} \ : \ A\cdot \Phi\cdot\boldsymbol{\xi}=\boldsymbol{b}\right\}$ is the set of vectors which are compatible with the linear equality constraint.
Obviously, this curve satisfies  both constraints. 
Additionally, it does not depend on the variance  $\sigma^2$ of the  covariance function $K$ since $\sigma^2$ is a multiplicative constant in the matrix $\Gamma^N$ and then does not affect the $\arg\min$ in Equation~\eqref{eqNU}. In \cite{bay:hal-01136466} and \cite{2016arXiv160202714B}, the convergence of  the proposed estimator~(\ref{propEstimator})  as $N$ tends to infinity is studied and its limit corresponds to a constrained spline function (which depends on the underlying kernel function). By this methodology, one can thus retrieve classical spline interpolation with the additional possibility of getting confidence intervals.

\subsection{Two dimensional case}\label{TDC}

As explained in Remark~\ref{rem:surface}, the term-structure construction can be stated in a two-dimensional setting. The aim is to incorporate in the curve construction process  market information at several quotation dates. Contrary to the previous section, the output is now a surface which may represent the evolution of a reference quantity with respect to time-to-maturities and quotation dates. More formally, we consider a two-dimensional input variable $\boldsymbol{x}=(x,t)$ which is assumed to belong to a rectangle $\Dom = [\underline{x}, \overline{x}]\times [\underline{t}, \overline{t}]$ of $\R^2$. The variable $x$ may represent time-to-maturities whereas the variable $t$ may represent evolution of time or quotation dates. For the sake of simplicity, we assume that the unknown bivariate real function $f$ is monotone, say non-increasing, with respect to the first input variable only~:
\begin{equation}
\label{eq:bivariate_monotone_constraint}
x_{a}\leq x_{b} \quad \Rightarrow \quad f(x_{b},t)\leq f(x_{a},t), \qquad  \mbox{ for all }  t \in[\underline{t}, \overline{t}],\;  x_{a}, x_{b} \in[\underline{x}, \overline{x}].  
\end{equation}

More general inequality conditions can be considered in this bivariate setting \citep[see][for more details]{maatouk:hal-01096751}. The aim of this section is to explain how to construct a process that simultaneously  satisfies a series of linear equality constraints (one for each considered quotation date) and a monotonicity constraint as described in~\eqref{eq:bivariate_monotone_constraint}. As in the one-dimensional setting, we start with an original bivariate Gaussian process $Y$ with zero-mean and with a covariance function $K$. 
\\

The idea is the same as the one dimensional case presented in Section~\ref{ODC}. We begin by discretizing the input rectangle $\Dom$ in a $(N_x+1)\times (N_t+1)$ grid which for simplicity is assumed to be regular. 
The subdivision of the $x$-axis is $u_0<\ldots < u_{N_x}$ with $u_0=\underline{x}$, $u_{N_{x}}=\overline{x}$ and a constant mesh $\delta_{x}$, so that $u_i=u_0+i\delta_{x}$, $i=0, \ldots, N_{x}$. The subdivision of the $y$-axis is $v_0<\ldots < v_{N_t}$ with $v_0=\underline{t}$, $v_{N_{t}}=\overline{t}$ and a constant mesh $\delta_{t}$, so that $v_j=v_0+j\delta_{t}$, $j=0, \ldots, N_{t}$.
The following developments  can be easily extended to irregular grids. As in the one-dimensional case, we consider an original Gaussian  process $Y$ with covariance function $K$. We then  define the finite-dimensional approximation of $Y$ as the process $Y^N$ such that 
\begin{equation}\label{2Dmodel}
Y^N(x,t)=\sum_{i=0}^{\Nx}\sum_{j=0}^{\Nt}\xi_{i,j}g_i(x)h_j(t), \qquad  \mbox{ for all }  (x,t)\in\Dom,
\end{equation} 
where $g_i(x)=\max\left(1-\frac{|x-u_i|}{\delta_x},0\right)$ and $h_j(t)=\max\left(1-\frac{|t-v_j|}{\delta_t},0\right)$ are  hat functions centered at the knots $u_i$ and $v_j$, for $i=0, \ldots, N_x$ and $j=0, \ldots, N_t$ and $\boldsymbol{\xi}=(\xi_{0,0},\xi_{0,1},\ldots,\xi_{i,j},\ldots,\xi_{\Nx,\Nt})^\top$ is a zero-mean Gaussian vector with components $(\boldsymbol{\xi})_{\rho_{ij}}=\xi_{i,j}$ where $\rho_{ij}=(\Nt+1)i+j+1$, $i=0,\ldots,N_x$ and $j=0,\ldots,N_t$.  Let $N_{tot}=(N_x+1) (N_t+1)$ be the size of the column vector $\boldsymbol{\xi}$. Choosing  $\xi_{i,j} = Y(u_i,v_j)$, $i=0,\ldots,N_x$ and $j=0,\ldots,N_t$ guarantees that the finite-dimensional  process $Y^N$ converges on $\Dom$ towards $Y$ almost surely as $N_x$ and $N_t$ tends to infinity \citep[see][]{maatouk:hal-01096751}. In that case, the covariance matrix $\Gamma^N\in\mathbb{R}^{N_{tot}^2}$ of the Gaussian vector $\boldsymbol{\xi}$ can be written as~:
\begin{equation*}
\Gamma^N_{\rho_{ij},\rho_{i'j'}}=\cov(\xi_{i,j},\xi_{i',j'})= K\left((u_i,v_j),(u_{i'},v_{j'})\right),
\end{equation*}
where $i,i'=0,\ldots,\Nx$ and $j,j'=0,\ldots,\Nt$. As can be shown in the next proposition, the monotonicity constraint~\eqref{eq:bivariate_monotone_constraint} reduces to a linear inequality condition on the vector $\boldsymbol{\xi}$. 
\begin{pro}[Monotonicity]
\label{pro:monotone_2D}
The process $Y^N$ defined in Equation~\eqref{2Dmodel} is non-increasing (resp. non-decreasing) with respect to the first variable $x$ \textit{if and only if} all the coefficients $\xi_j, \ j=0,\ldots,N$ are such that
\begin{equation}\label{Dim2LinConst}
\xi_{i-1,j}\leq  \xi_{i,j},\; (\text{resp. } \xi_{i-1,j}\geq  \xi_{i,j}) \qquad i=1,\ldots,\Nx \ \mbox{and} \ j=0,\ldots,\Nt.
\end{equation}
\end{pro}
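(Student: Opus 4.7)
The plan is to mirror the argument used in Proposition~\ref{pro:monotone_1D} by exploiting three elementary properties of the tensorized hat basis $\{g_i h_j\}$: (i) the interpolatory property $g_i(u_k)=\indic{i=k}$ and $h_j(v_\ell)=\indic{j=\ell}$, which gives $Y^N(u_i,v_j)=\xi_{i,j}$ at every grid node; (ii) on each subinterval $x\in[u_{i-1},u_i]$, only $g_{i-1}$ and $g_i$ are nonzero, and they satisfy $g_{i-1}(x)+g_i(x)=1$; (iii) $h_j(t)\geq 0$ for all $j$ and $t\in[\underline{t},\overline{t}]$.

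For the ``only if'' direction I would fix $j\in\{0,\ldots,\Nt\}$ and read off the value of $Y^N$ at the two successive knots $(u_{i-1},v_j)$ and $(u_i,v_j)$ along the horizontal line $t=v_j$. Property (i) yields $Y^N(u_{i-1},v_j)=\xi_{i-1,j}$ and $Y^N(u_i,v_j)=\xi_{i,j}$, and monotonicity of $Y^N$ in the first variable on this line translates directly into the stated inequality between $\xi_{i-1,j}$ and $\xi_{i,j}$.

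For the ``if'' direction I would fix $t\in[\underline{t},\overline{t}]$ and restrict attention to a generic subinterval $x\in[u_{i-1},u_i]$. Using property (ii) the process simplifies to
\begin{equation*}
Y^N(x,t)=g_{i-1}(x)\sum_{j=0}^{\Nt}\xi_{i-1,j}h_j(t)+g_i(x)\sum_{j=0}^{\Nt}\xi_{i,j}h_j(t),
\end{equation*}
which is affine in $x$ with derivative $\frac{1}{\delta_x}\sum_{j=0}^{\Nt}(\xi_{i,j}-\xi_{i-1,j})h_j(t)$. By property (iii) the nonnegative weights $h_j(t)$ preserve the common sign of the differences $\xi_{i,j}-\xi_{i-1,j}$, so the assumed inequality forces the $x$-derivative to have the desired sign uniformly on every subinterval. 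Because $Y^N(\cdot,t)$ is globally continuous across the interior knots $u_1,\ldots,u_{\Nx-1}$, piecewise monotonicity on $[u_{i-1},u_i]$ glues into global monotonicity on $[\underline{x},\overline{x}]$, valid for every $t$.

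No step is genuinely difficult; the only delicate point is to notice that the nonnegativity of the tent functions $h_j$ is precisely what allows the sign of each finite difference $\xi_{i,j}-\xi_{i-1,j}$ to survive the averaging in the $t$-direction. This is the bivariate counterpart of the one-dimensional argument, with the sums $\sum_j \xi_{i,j}h_j(t)$ playing the role of the one-dimensional coefficients $\xi_i$ and the piecewise affine structure in $x$ replacing the primitive-based representation used in Proposition~\ref{pro:monotone_1D}.
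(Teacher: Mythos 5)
Your proof is correct, and it is in fact more than the paper provides: the paper's ``proof'' of this proposition is a single sentence deferring to the one-dimensional case (and the cross-reference there is even circular, pointing back to the proposition itself). Your argument supplies the content that such a deferral glosses over. The necessity direction via the interpolation property $Y^N(u_i,v_j)=\xi_{i,j}$ is exactly the right move, and the sufficiency direction genuinely needs a different mechanism from Proposition~\ref{pro:monotone_1D}: there the basis in $x$ consists of increasing primitives $\phi_j$ and one reads off $(Y^N)'(u_k)=\xi_k$, whereas here the basis consists of the hat functions $g_i$ themselves, so the correct device is the one you use --- partition of unity on each cell $[u_{i-1},u_i]$, affinity of $Y^N(\cdot,t)$ there with slope $\frac{1}{\delta_x}\sum_j(\xi_{i,j}-\xi_{i-1,j})h_j(t)$, and nonnegativity of the $h_j$ to preserve the sign through the $t$-averaging. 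One caveat you should make explicit: your computation shows that $\xi_{i-1,j}\leq\xi_{i,j}$ for all $j$ yields a \emph{nonnegative} $x$-derivative, i.e.\ a \emph{non-decreasing} process, whereas the proposition as printed pairs that inequality with ``non-increasing'' (and the paper repeats this pairing when defining $\Cxi$ later). Writing ``the desired sign'' hides this mismatch; you should state plainly that the correct pairing is $\xi_{i-1,j}\geq\xi_{i,j}$ for non-increasing, and note the apparent sign slip in the statement rather than silently absorbing it.
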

\begin{proof}
The proof is similar to the one of Proposition~\ref{pro:monotone_2D}.
\end{proof}

In order to construct surfaces which are compatible with market quotes observed at times  $t=t_1, \ldots, t_{I}$, $I$ different linear equality constraints like the one given in Section~\ref{ILC} has to be imposed on the bivariate process $Y^N$. Let us consider that the market-fit conditions involve $m$ points $x^{(1)},\ldots,x^{(m)}$ for any quotation time $t=t_1, \ldots, t_I$. Then, if $Y^N(X,t)$ $=$ $\left(Y^N\left(x^{(1)},t\right), \ldots,\right.$ $\left.Y^N\left(x^{(m)},t\right)\right)^\top$ denotes the vector of values involves in the curve construction at time $t$ and given Equation~\eqref{2Dmodel}, the condition $A_t\cdot Y^N(X,t)=\boldsymbol{b}_t$ translates into the following linear equality constraint on the Gaussian vector $\boldsymbol{\xi}$~:
\begin{equation}
\label{EqCond_2D}
A_t\cdot H_t\cdot \boldsymbol{\xi}=\boldsymbol{b}_t
\end{equation}
where the $m\times N_{tot}$ matrix $H_t$ has components $(H_t)_{k,\rho_{ij}}=g_i\left(x^{(k)}\right)h_j(t)$, $k=1, \ldots, m$ and $\rho_{ij}=(\Nt+1)i+j+1$, $i=0,\ldots,N_x$ and $j=0,\ldots,N_t$. The former condition has to hold simultaneously for every time $t=t_1, \ldots, t_I$, which can be summarized as one single linear equality constraint  
\begin{equation}
\label{EqCond_2D_bis}
B \cdot \boldsymbol{\xi} = \boldsymbol{b},
\end{equation}
where the $nI \times N_{tot}$ matrix $B$ (resp.  $\boldsymbol{b}$) is formed by  vertical concatenation of matrices $A_t \cdot H_t$ (resp. $\boldsymbol{b}_t$) for $t=t_1, \ldots, t_I$.
Notice that, generally speaking, this linear system  admits  solutions  when $N_{tot} \geq nI$.



\paragraph{Surface simulation.} 
As in the one-dimensional setting, the simulation of the bivariate Gaussian process $Y^N$ conditionally to both linear equality  and monotonicity constraints reduces to the simulation of the Gaussian vector $\boldsymbol{\xi}$ restricted to 
\begin{eqnarray*}\left\{\begin{array}{ll}
B\cdot \boldsymbol{\xi}=\boldsymbol{b} & \qquad \mbox{linear equality condition}\\
\boldsymbol{\xi}\in \Cxi & \qquad \mbox{monotonicity constraint}
\end{array}\right.
\end{eqnarray*}
where, for instance, $\Cxi=\left\{\boldsymbol{\xi}\in\mathbb{R}^{N_{tot}} \ : \ \xi_{i-1,j}\leq \xi_{i,j}\right\}$ for a non-increasing constraint. 
The simulation procedure is the same as the one presented in Section~\ref{ODC}. 
These simulations can be used to construct confidence intervals for the surface value at each point $(x,t)$ in $\Dom$.



\paragraph{Most likely surface.} The methodology can be easily extended from Section~\ref{ODC}.

\subsection{Parameters estimation}\label{PE}

The most likely path of the constrained process (mode estimator) 
depends  on the choice of the underlying Gaussian process or equivalently on its covariance function $K$. In this section, we investigate the estimation of its parameter, i.e., the length and the variance hyper parameters. 
In the literature,  estimation of the covariance function hyper parameters is usually done using two types of methods~: Maximum Likelihood (ML) estimators as in \cite{sant:will:notz:2003} and Cross Validation (CV) methods  as in \cite{Bachoc201355}, \cite{Cressie1993Wiley} and \cite{Roustant:Ginsbourger:Deville:2012:JSSOBK:v51i01}. Both methods ML and CV are not suited to monotonicity constraints.\\

Recently, an Adapted Cross-Validation (ACV) technique has be proposed by  \cite{Maatouk201538} to estimate covariance hyper-parameters of Gaussian processes in the presence of inequality constraints. The main idea is to consider the mode curve (as opposed to the mean curve) as the estimator to be used in the cross-validation method. 
The principle of cross-validation is to select the set of parameters that minimizes a distance between observed values and their estimates while successively omitting some set of observations. 
Thus, cross-validation is similar in spirit to backtesting \citep[see][]{kerkhof2004backtesting}, but omitted data are not necessarily taken in chronological order.

\paragraph{Length parameters.}

In classical kriging, the usual cross-validation estimator of the covariance  length parameter $\boldsymbol{\theta}$  is constructed from the so-called Leave One Out (LOO) mean square error criterion. Given that the unknown function $f$ takes values $y_1, \ldots, y_n$ at points $x^{(1)}, \ldots, x^{(n)}$ (pure interpolation constraints), the cross-validation estimator $\hat{\boldsymbol{\theta}}_{CV}$ of  $\boldsymbol{\theta}$ is defined as 
\begin{equation}
\hat{\boldsymbol{\theta}}_{CV}=\arg\min_{\boldsymbol{\theta}\in\Theta}\sum_{i=1}^n\left(y_i-\hat{y}_{i,\boldsymbol{\theta}}(\boldsymbol{y}_{-i})\right)^2,
\end{equation}
where $\boldsymbol{y}_{-i}=(y_1,\ldots,y_{i-1},y_{i+1},\ldots,y_n)^\top$ and $\Theta$ is a compact subset of $\mathbb{R}^d$. The estimator 
$\hat{y}_{i,\boldsymbol{\theta}}(\boldsymbol{y}_{-i})=\mathbb{E}\left(Y(x^{(i)})\suchthat Y(X^{(-i)})=\boldsymbol{y}_{-i}\right)$ is the kriging mean at point  $x^{(i)}$ obtained by removing observation $y_i$ in the estimation process (see Equation \eqref{eq:kriging_mean}). The vector $Y\left(X^{(-i)}\right)$ is the same vector as $Y(X)$ without component $Y\left(x^{(i)}\right)$.\\

%


%
 
The previous criterion cannot be used in the presence of monotonicity constraints since 
the classical kriging mean estimate does not respect such kind of constraints.  As suggested in \cite{Maatouk201538} under pure interpolation constraints,  the most likely curve (mode estimator) defined in Section \ref{ODC} can be used instead of the kriging mean since the former satisfies the monotonicity constraints. Then, a new LOO criterion adapted to monotonicity constraints can be defined as~:
\begin{equation}\label{LOO}
\hat{\boldsymbol{\theta}}_{ACV}=\arg\min_{\boldsymbol{\theta}\in\Theta}\sum_{i=1}^n\left(y_i-M^N_{K}\left(x^{(i)} \suchthat I_{n-1},\boldsymbol{y}_{-i}\right)\right)^2,
\end{equation} 
where $M^N_{K}\left(x^{(i)} \suchthat I_{n-1},\boldsymbol{y}_{-i}\right)$ is the mode estimator (defined in Equation~(\ref{propEstimator})) of $y_i$ based on all observations but $y_i$. 
As explained in  Section \ref{ODC}, the latter does not depend on the variance $\sigma^2$. The matrix
$I_{n-1}$ is the identity matrix with dimension $n-1$ (the same as the vertical dimension as the column vector $\boldsymbol{y}_{-i}$). This methodology cannot be applied immediately in our setting since the output values $y_i$ in the LOO criterion (\ref{LOO})
are not available under linear equality constraints as defined in (\ref{EqCond}). To this end, we consider the following formulation of the LOO criterion~:
\begin{equation}\label{LOOLConst}
\hat{\boldsymbol{\theta}}_{ACV}=\arg\min_{\boldsymbol{\theta}\in\Theta}\sum_{i=1}^{n}\left(b_i-\left(A\cdot M^N_{K}\left(X \suchthat A_{-i},\boldsymbol{b}_{-i}\right)\right)_i\right)^2,
\end{equation}
where the subscript $i$ refers to the $i$-th component, $A_{-i}$ and $\boldsymbol{b}_{-i}$ are respectively the matrix and the vector without the $i^{\text{th}}$ row.\\

\paragraph{Variance parameter.}

In the case of pure interpolation constraints and without monotonicity constraints, several classical criterions can be used for the estimation of $\sigma$. For instance, in \cite{Bachoc201355}, an estimator $\hat{\sigma}_{CV}$ is defined as the parameter $\sigma$ such that the following equality holds 
\begin{equation}\label{sigSCV}
\frac{1}{n}\sum_{i=1}^n\frac{\left(y_i-\hat{y}_{i,\hat{\boldsymbol{\theta}}_{CV}}(\boldsymbol{y}_{-i})\right)^2}{\mathds{E}\left(\left(Y\left(x^{(i)}\right)-\hat{y}_{i,\hat{\boldsymbol{\theta}}_{CV}}(\boldsymbol{y}_{-i})\right)^2\suchthat Y(X^{(-i)})=\boldsymbol{y}_{-i}\right)}=1.
\end{equation}
Given our linear equality conditions and monotonicity constraints, we instead propose to define $\hat{\sigma}_{ACV}$ as the parameter $\sigma$ such that the following equality holds
\begin{equation}\label{sigSCV_2}
\frac{1}{n}\sum_{i=1}^n\frac{\left(b_i-\left(A\cdot M^N_{K}\left(X \suchthat A_{-i},\boldsymbol{b}_{-i}\right)\right)_i\right)^2}{\mathds{E}\left(\left(AY\left(X\right)-A M^N_{K}\left(X\suchthat A_{-i},\boldsymbol{b}_{-i}\right)\right)_i^2 \suchthat \mathcal{D}_i \right)}=1,
\end{equation}
where $\mathcal{D}_i$ is the set of monotonicity constraints on the whole domain and the additional linear constraints without the $i^{\text{th}}$ one, i.e. $A_{-i} Y(X^{(-i)}) = \boldsymbol{b}_{-i}$ where $A_{-i}$ is the matrix $A$ without the $i$-th row.
The expectation is estimated by simulation, approximating the process $Y$ by $Y^N$.



\section{Empirical investigation}\label{EmpInv}

In this section, the construction method developed in Section \ref{IntModIConst} is illustrated in different financial applications. Based on  market quotes observed at different  dates, we construct curves together with confidence intervals for term-structures of OIS discount rates, term-structure of zero-coupon swaps rates and term-structure of CDS implied default probabilities. Using the bivariate setting of Section \ref{TDC}, we also build interest-rate and default probability surfaces by considering time (quotation dates) as an additional dimension.



\subsection{Interest-rate curves based on Swaps versus Euribor}\label{CD}

We apply the kriging procedure to construct zero-coupon swap curves  based on market quotes of fixed-vs-floating interest-rate swaps   for different standard maturities.
Market observations are given as par rates of Swaps vs Euribor 6M. We consider the 10 quotation dates given in Table~\ref{EPOV}. For each quotation date, the term-structure is built  on $14$ swap rates $S_1, \ldots, S_{14}$ associated with standard maturities in years belonging to the set $E:=\{1,\ldots,10,15,20,30,40\}^\top$. As explained in Section~\ref{Sec:term-structure}, each observed par rate provides an indirect  information on the curve. This information takes the
  form of a linear relation given by~\eqref{eq:OIS}. For each standard maturity $T\in E$, this relation involves the value $P(t,k)$ of the curve at time horizon $k=1, \ldots, T$.
As a result, the curves are compatible with observed quotes if, for each observation date $t$, the vector of discount factors $P(t, X) := (P(t, 1), \ldots, P(t,40))^{\top}$ satisfies a linear system of the form 
\begin{equation}
\label{eq:market_fit_Swaps_Euribor}
A_t \cdot P(t, X)=\boldsymbol{b}_t,
\end{equation}
where $A_t$ is  a $14\times 40$ real matrix and $\boldsymbol{b}_t= (1,\ldots,1)^\top\in \mathbb{R}^{14}$. In this case, we have $n=14$ products whose value depends on $m=40$ points of the curve.\\

We consider that the associated discount factor curve $P(t,X)$ is one realization of a decreasing spatial process which starts from $1$ ($P(t,0)=1$) and satisfies the linear condition~\eqref{eq:market_fit_Swaps_Euribor}.  Section~\ref{IntModIConst} explains how to construct and simulate a  process with monotonicity and linear equality constraints. This construction involves a finite-dimensional approximation $Y^N$ of Gaussian processes as defined in~\eqref{propModel}.  The latter depends on a $N+2$-dimensional Gaussian vector $\boldsymbol{\xi}$ and  a set of basis functions $\phi$ defined on a subdivision of the input domain $D=[0,40]$. We consider here a regular subdivision $u_j=u_0+j\delta$, $j=0, \ldots, N$, where $u_0=0$, $N=50$ sub-intervals and $\delta=1$.
Since the curve $T\rightarrow P(t,T)$ is known to start from 1, i.e., $P(t,0)=1$, the linear equality condition on $\boldsymbol{\xi}$ defined in (\ref{EqCond}) can be reformulated as follows~:
\begin{equation}\label{EqXiCond}
\left(\begin{matrix}
1 & \phi_0(0) & \ldots & \phi_N(0)\\
& & A_t\cdot \Phi& &  
\end{matrix}
\right)\boldsymbol{\xi}=\left(\begin{matrix}
1\\
\boldsymbol{b}_t
\end{matrix}\right).
\end{equation}
Then the simulation of the GP $Y^N$ conditionally to the linear equality condition and non-increasing constraints is equivalent to generating a truncated zero-mean Gaussian vector $\boldsymbol{\xi}$ restricted to Equation~(\ref{EqXiCond}) and to non-positive components.

\paragraph{Parameters estimation.}
We first consider the correlation-length hyper parameter $\theta$. It has been estimated for each quotation dates in Table~\ref{EPOV}  by using the adapted cross-validation (ACV) method described in Section \ref{PE}. We consider covariance functions of the form $K(x,x')=\sigma^2 C(x-x',\theta)$ and we discuss two alternative kernels $C$, that is the Gaussian and the Mat\'ern 5/2 kernel (see Table~\ref{kernel}).
We denote  by $\hat{\theta}_G$ (resp. $\hat{\theta}_M$) the estimated length parameter of the Gaussian (resp. Mat\'ern 5/2) covariance function. 
As shown in Table~\ref{EPOV}, the estimated length parameter $\hat{\theta}$ remains stable for both Gaussian and Mat\'ern 5/2 covariance functions (value around $25$ for $\hat{\theta}_G$ and $30$ for $\hat{\theta}_M$). Note also that the minimal value of the objective function in the LOO criterion \eqref{LOOLConst} is slightly smaller when using a Gaussian covariance function. However, as can be seen in Figure~\ref{LOOMatGauss}, the objective function (cross-validation error)  reach $0.12$ for the Gaussian covariance function whereas it never goes above $0.02$ for the Mat\'ern 5/2 covariance function. Additionally, the global minimum is easier to find in the Mat\'ern 5/2 case.

\begin{table}[hptb]
\centering
\caption{Parameters estimation using ACV methods (Swap versus Euribor 6M).}
\label{EPOV}
\begin{tabular}{ccccc}
\hline \hline
Date  &  $\hat{\theta}_G$ &  $\hat{\theta}_M$ & Gaussian optimal value & Mat\'ern 5/2 optimal value  \\   
\hline
02/06/2010   &  25.8   &  30.8  & 4.0e-06   &  \textbf{1.1e-06}   \\
05/07/2010   &  26.2   &  25.2  & \textbf{3.8e-06}   &  1.1e-05   \\
03/08/2010   &  26.6   &  27.0  & \textbf{4.1e-06}   &  4.7e-06   \\
29/11/2010   &  23.5   &  39.9  & \textbf{5.4e-07}   &  2.3e-05   \\ 
30/12/2010   &  27.5   &  28.0  & 4.5e-06   &  \textbf{1.5e-06}   \\
31/01/2011   &  29.2   &  29.2  & 9.2e-06   &  \textbf{1.2e-06}   \\
10/05/2011   &  28.8   &  27.6  & 2.5e-06   &  \textbf{1.2e-06}   \\
10/06/2011   &  26.0   &  30.7  & \textbf{5.0e-07}   &  2.8e-06   \\
30/12/2011   &  20.3   &  30.0  & 6.8e-06   &  \textbf{3.6e-06}   \\
\hline
\end{tabular}
\end{table}

\begin{figure}[hptb]
\begin{minipage}{.5\linewidth}
\centering
\includegraphics[scale=.4]{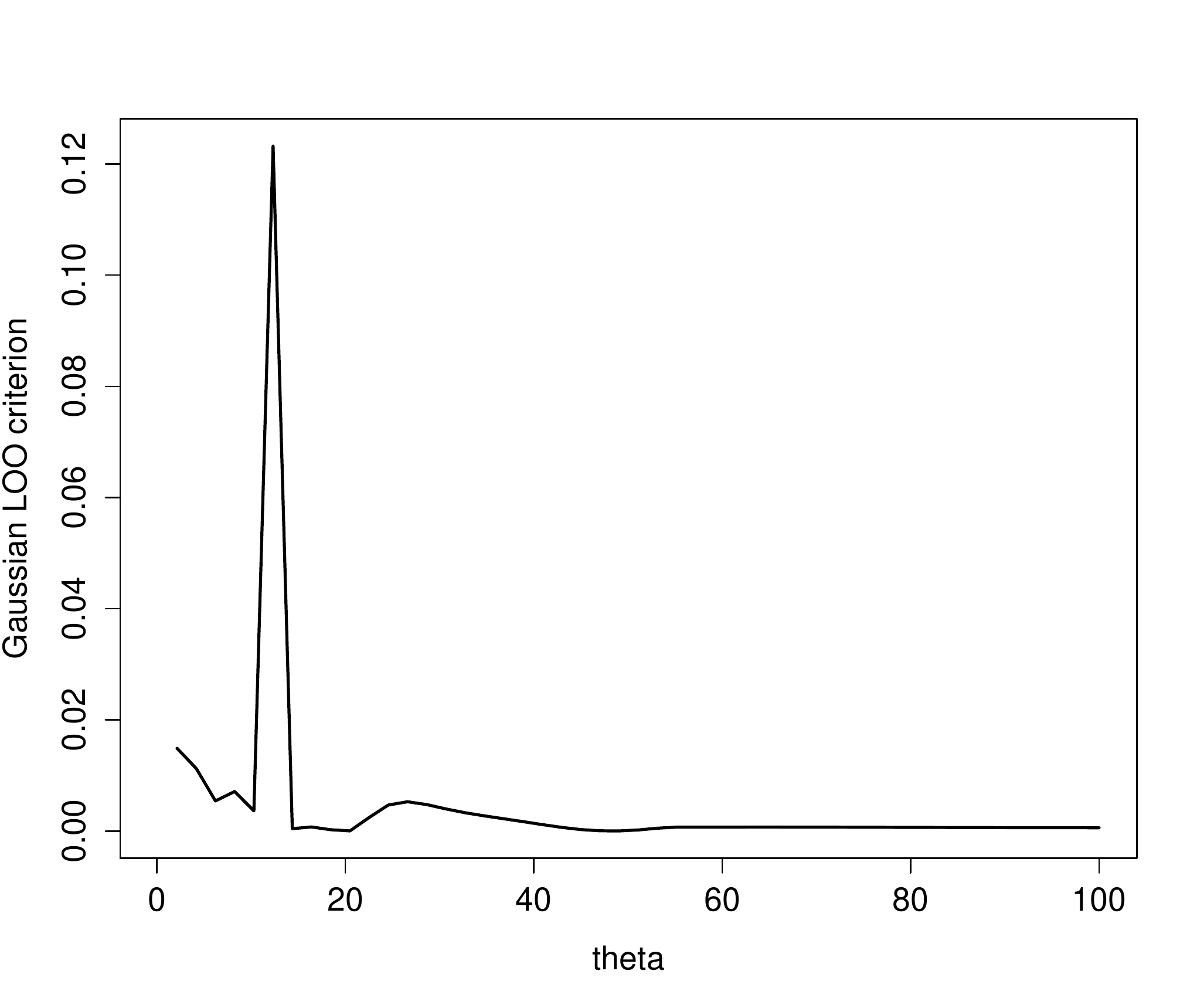}
\end{minipage}%
\begin{minipage}{.5\linewidth}
\centering
\includegraphics[scale=.4]{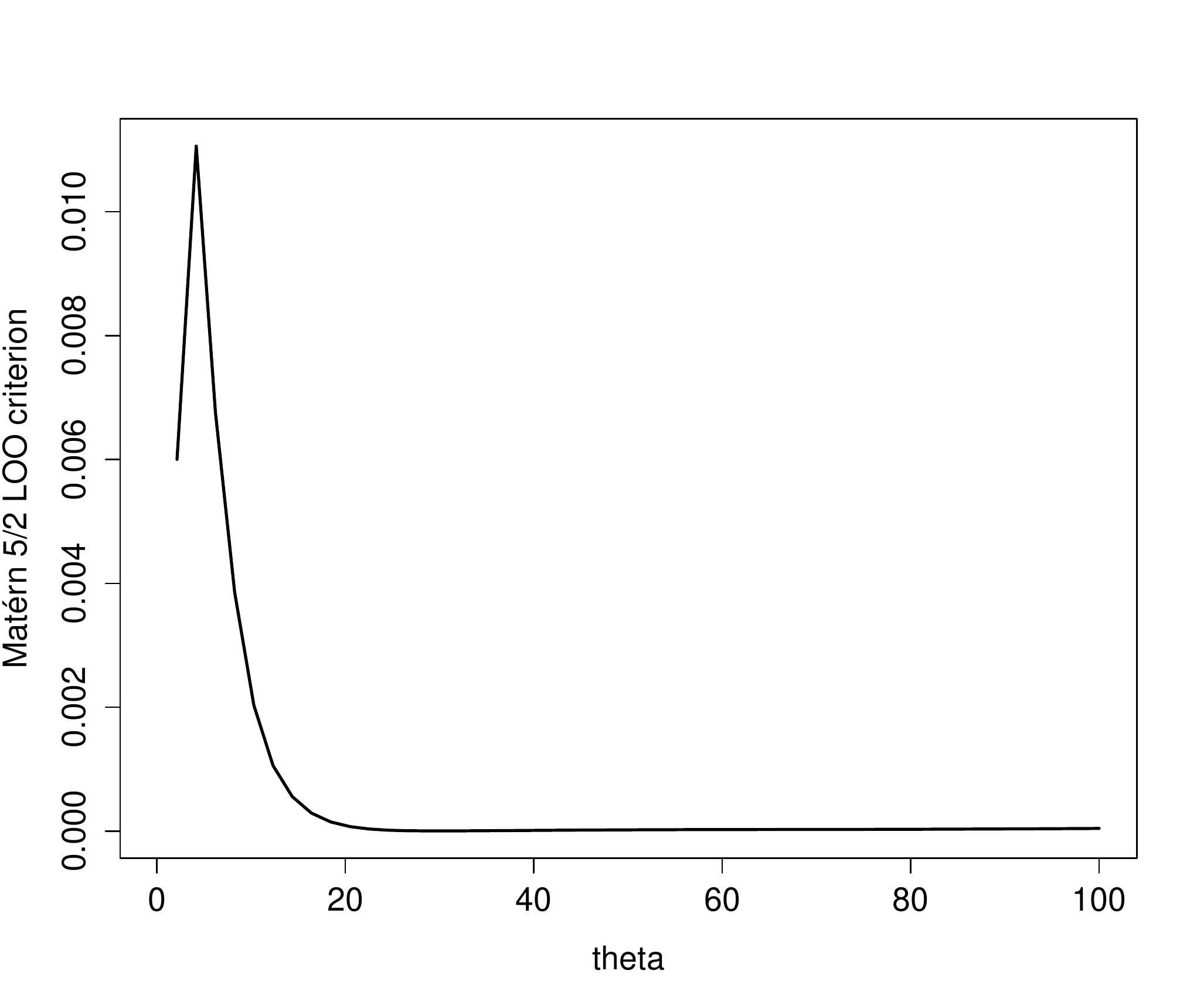}
\end{minipage}
\caption{The function to be optimized in the LOO criterion (\ref{LOOLConst}) using the Gaussian (left) and the Mat\'ern 5/2 covariance function (right). Swap versus Euribor 6M on $30/12/2011$.}
\label{LOOMatGauss}
\end{figure}

Once the length parameter $\theta$ has been estimated, the standard deviation parameter $\sigma$ is estimated using Equation~\eqref{sigSCV_2}. For instance, at quotation date $30/12/2011$, we obtain for respective Gaussian and Mat\'ern 5/2 covariance kernel $\hat{\sigma}_G=2.89$ and $\hat{\sigma}_M=0.93$.

\paragraph{Curve construction at a single quotation date.}
We now illustrate the curve construction method described in Section \ref{ODC} in a one-dimensional setting. The construction is based on market quotes as of 
$30/12/2011$. For this particular date, the estimated length parameter for the Gaussian and the Mat\'ern 5/2 kernel are given in Table~\ref{EPOV} ($\hat{\theta}_G=20.3$ and $\hat{\theta}_M=30.0$). In that case, using the method described in Section~\ref{PE}, the estimated variance parameter is equal to $\hat{\sigma}_G=2.89$ for the Gaussian kernel and to $\hat{\sigma}_M=0.93$ for the Mat\'ern 5/2 kernel. 
Figure~\ref{GaussMatParEst} compares  the sample paths of discount factors for the Gaussian and the Mat\'ern 5/2 covariance function using the corresponding estimated parameters. In both cases, we generate 100 sample paths taken from model~\eqref{propModel} conditionally  to  linear equality constraints~\eqref{EqXiCond} and non-increasing constraints. 
Note that the simulated curves (gray lines) are non-increasing in the entire domain. 
Additionally, the black solid line represents the most likely curve, i.e., the mode of the conditional GP. Recall that, by construction, this curve satisfies the given constraints. The black dashed-lines represent the 95\% point-wise confidence intervals quantified by simulation. 
Figure~\ref{SRGaussMatEURO} and Figure~\ref{TFGaussMatEURO} give the corresponding spot rates and instantaneous forward curves.\\

In order to compare our results with some models commonly used in most central banks, all figures are given together with the associated best-fitted Nelson-Siegel curves  \citep[see][]{nelson1987parsimonious} and the associated best-fitted Svensson curves \citep[see][]{Svensson:1994}. Parameters have been estimated by minimizing the sum of squared errors between market and model prices. We use a gradient descent algorithm with randomly chosen starting values as described in \cite{gilli2010calibrating}. The optimal parameters are given in Table~\ref{ParamNS_SWAP}. Discount factors and forward rates have been deduced from Nelson-Siegel and Svensson yield curves.\\

\begin{table}[hptb]
\centering
\caption{Parameters estimation for Nelson-Siegel and Nelson-Siegel-Svensson model (Swap versus Euribor 6M on 30/12/2011).}
\label{ParamNS_SWAP}
\begin{tabular}{l|cccccc}
\hline \hline
 & $\lambda_1$ & $\lambda_2$ & $\beta_1$ & $\beta_2$ & $\beta_3$ & $\beta_4$ \\ 
\hline 
Nelson-Siegel          & 7.4615 &  - & 0.0189 &  -0.0160 & 0.0487 & - \\ \hline 
Nelson-Siegel-Svensson & 4.0486 & 28.4285 & 0.1719 & -0.1590 & -0.1101 & -0.4093 \\ 
\hline 
\end{tabular}
\end{table}

\begin{figure}[H]
\begin{minipage}{.5\linewidth}
\centering
\includegraphics[scale=.37]{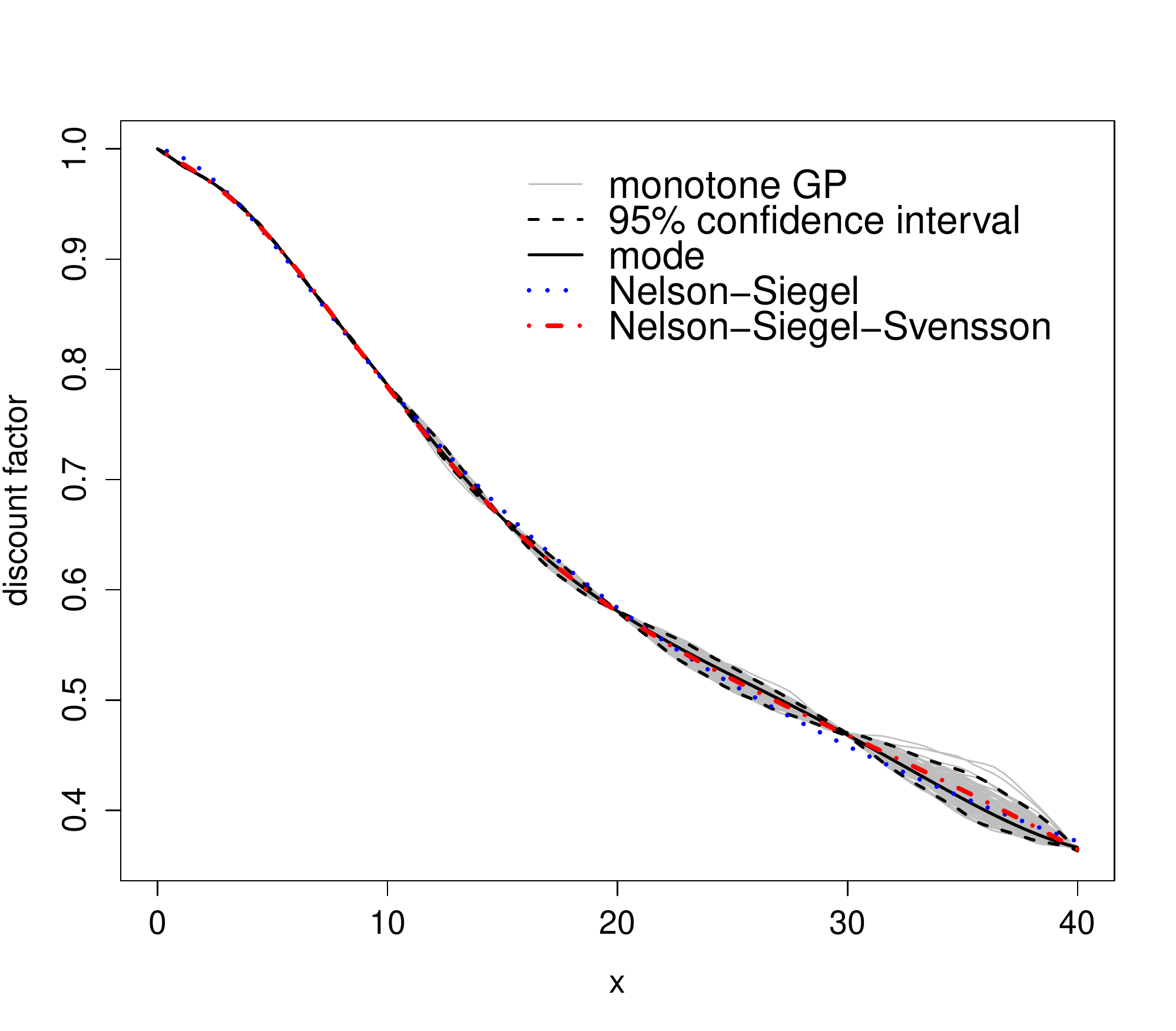}
\end{minipage}%
\begin{minipage}{.5\linewidth}
\centering
\includegraphics[scale=.37]{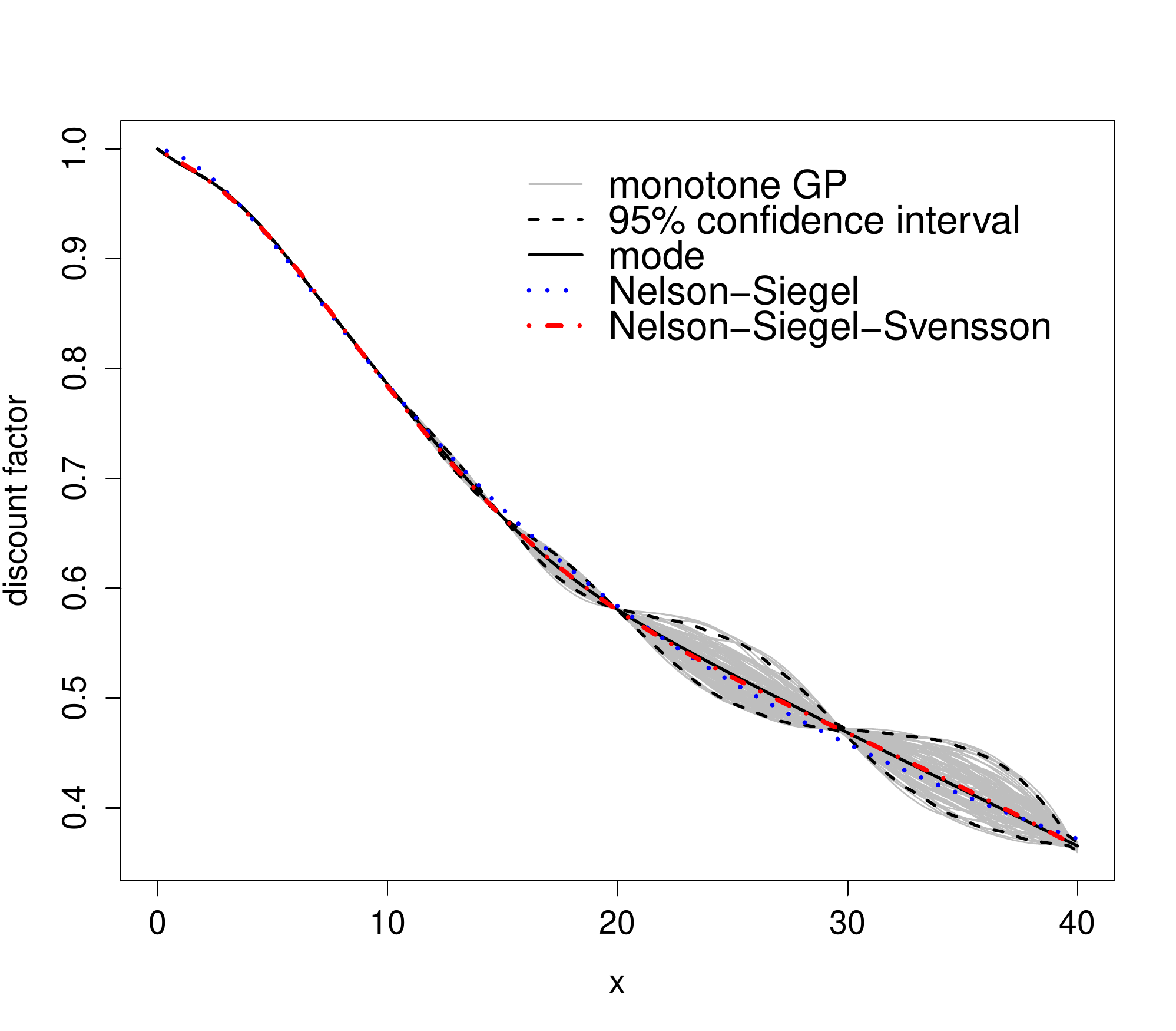}
\end{minipage}
\caption{Simulated paths (gray lines) taken from the conditional GP with non-decreasing constraints and market-fit constraints using the Gaussian covariance function with nugget equal to $10^{-5}$ (left) and the Mat\'ern 5/2 covariance function without nugget (right).
Swap vs Euribor 6M market quotes as of $30/12/2011$.}
\label{GaussMatParEst}
\end{figure}

\begin{figure}[H]
\begin{minipage}{.5\linewidth}
\centering
\includegraphics[scale=.37]{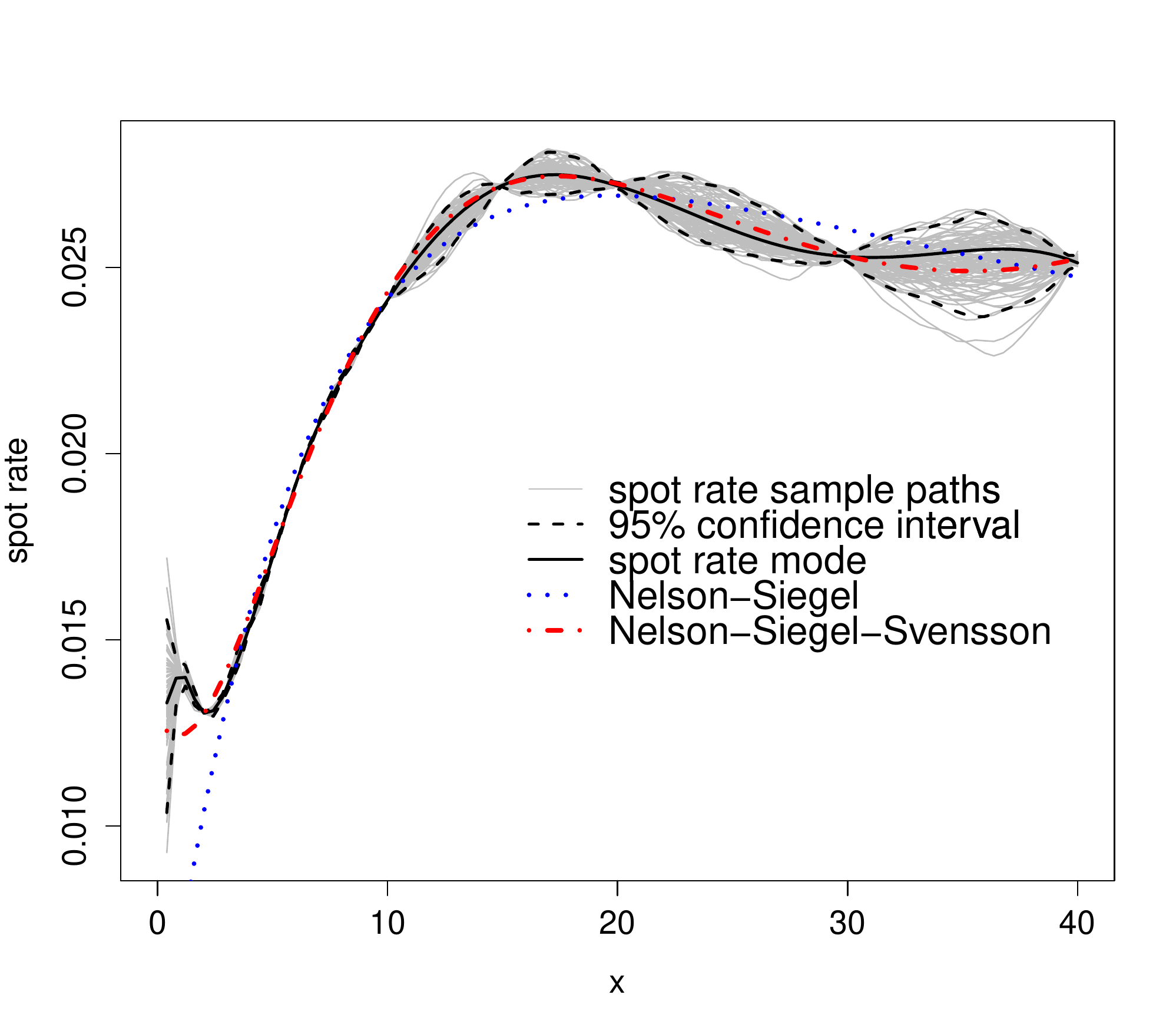}
\end{minipage}%
\begin{minipage}{.5\linewidth}
\centering
\includegraphics[scale=.37]{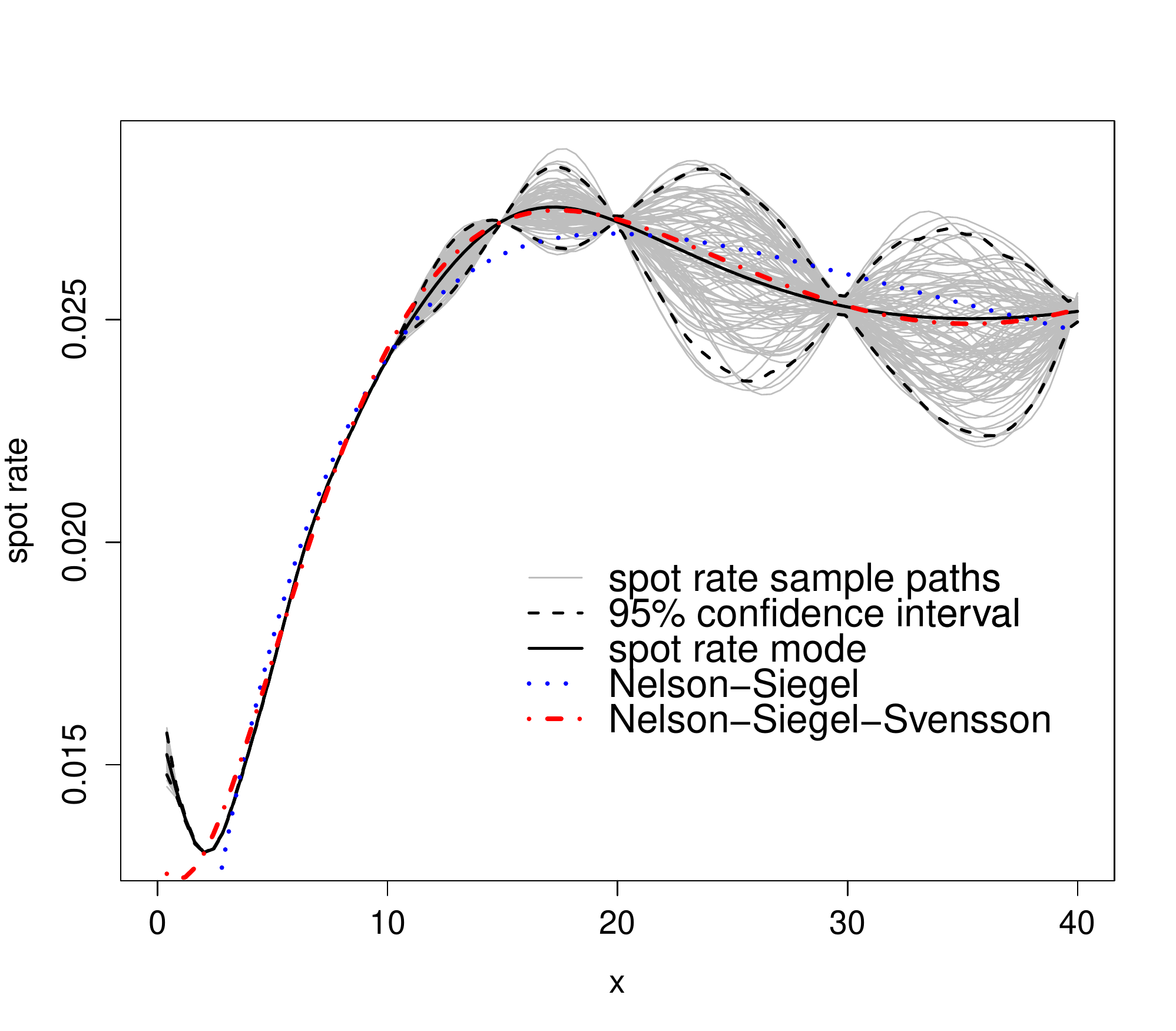}
\end{minipage}
\caption{Spot rates obtained from sample paths of Figure \ref{GaussMatParEst} with Gaussian covariance function (left) and Mat\'ern 5/2 covariance function (right). Gray lines represent $-\frac{1}{x}\log Y^N(x)$ for each sample path. The black solid line is the most likely spot rate curve $-\frac{1}{x}\log M^N_K\left(x \suchthat A,\boldsymbol{b}\right)$.}
\label{SRGaussMatEURO}
\end{figure}

\begin{figure}[H]
\begin{minipage}{.5\linewidth}
\centering
\includegraphics[scale=.37]{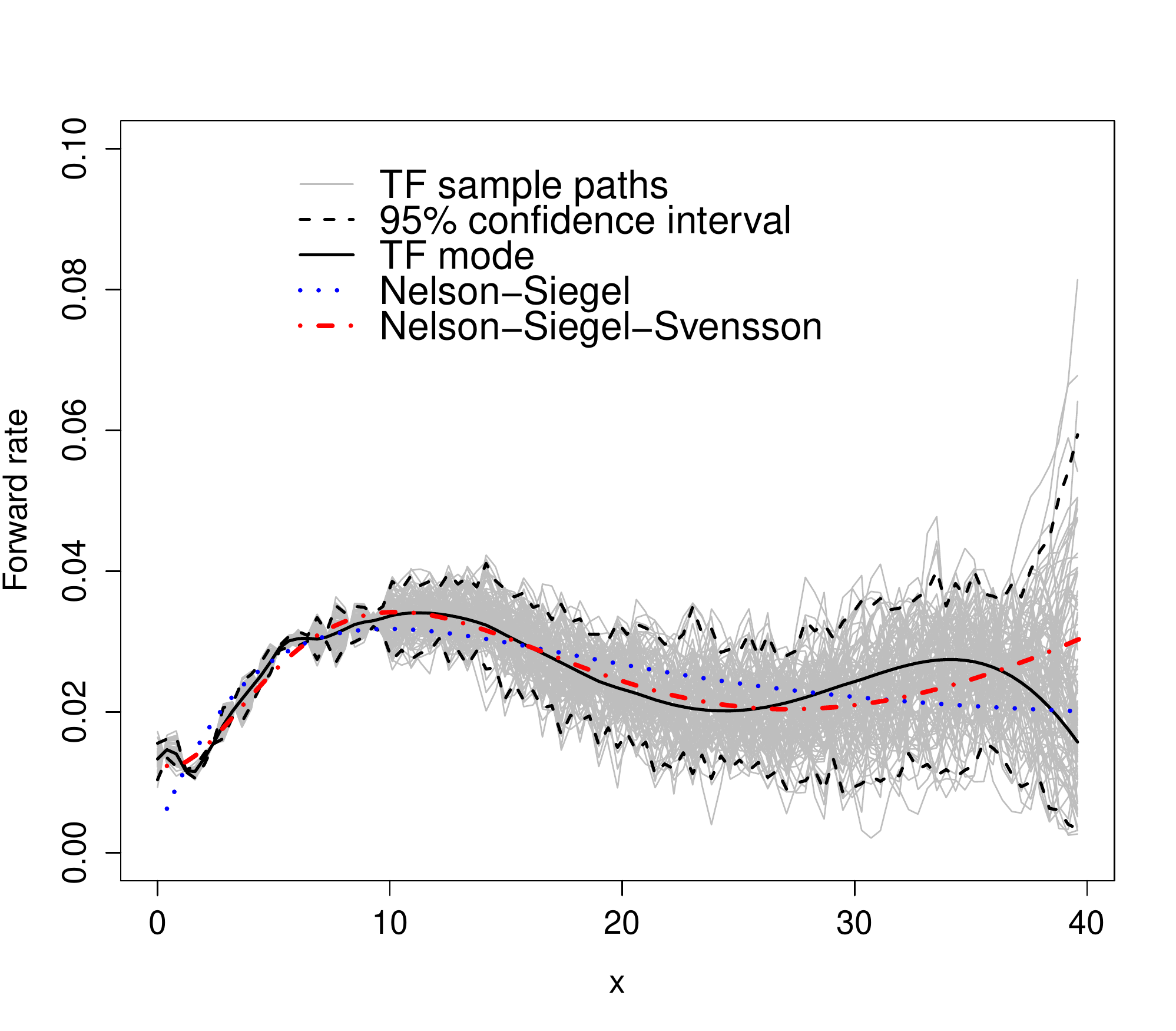}
\end{minipage}%
\begin{minipage}{.5\linewidth}
\centering
\includegraphics[scale=.37]{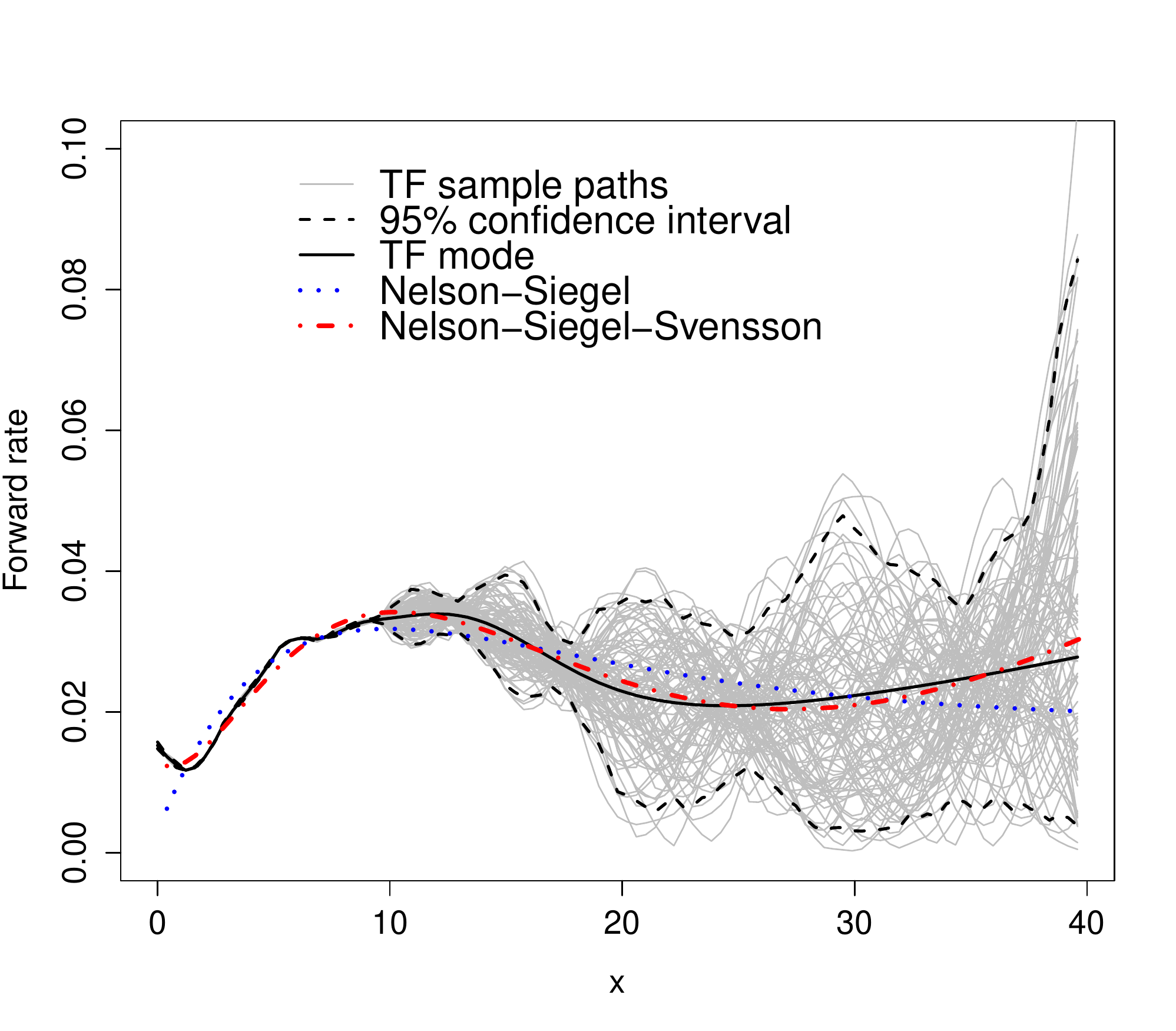}
\end{minipage}
\caption{Forward rates obtained from sample paths of Figure \ref{GaussMatParEst}  with Gaussian covariance function (left) and Mat\'ern 5/2 covariance function (right). Gray lines represent $-\frac{d}{dx}\log Y^N(x)$ for each sample path. The black solid line is the most likely forward rate curve $-\frac{d}{dx}\log M^N_K\left(x\suchthat A,\boldsymbol{b}\right)$.} 
\label{TFGaussMatEURO}
\end{figure}

\begin{figure}[hptb]
\begin{minipage}{.5\linewidth}
\centering
\includegraphics[trim={0 5cm 0 0}, scale=.37]{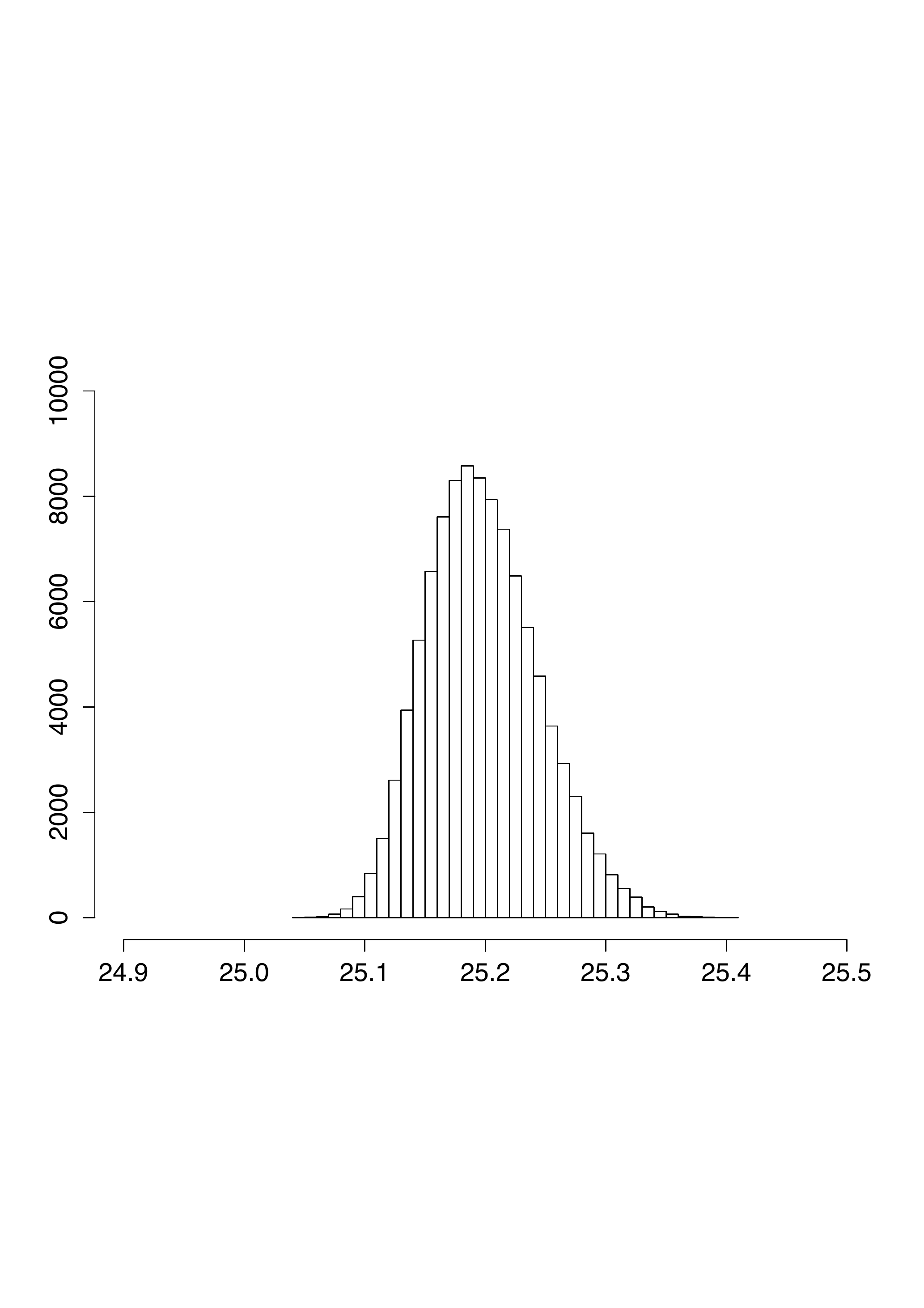}
\end{minipage}%
\begin{minipage}{.5\linewidth}
\centering
\includegraphics[trim={0 5cm 0 0}, scale=.37]{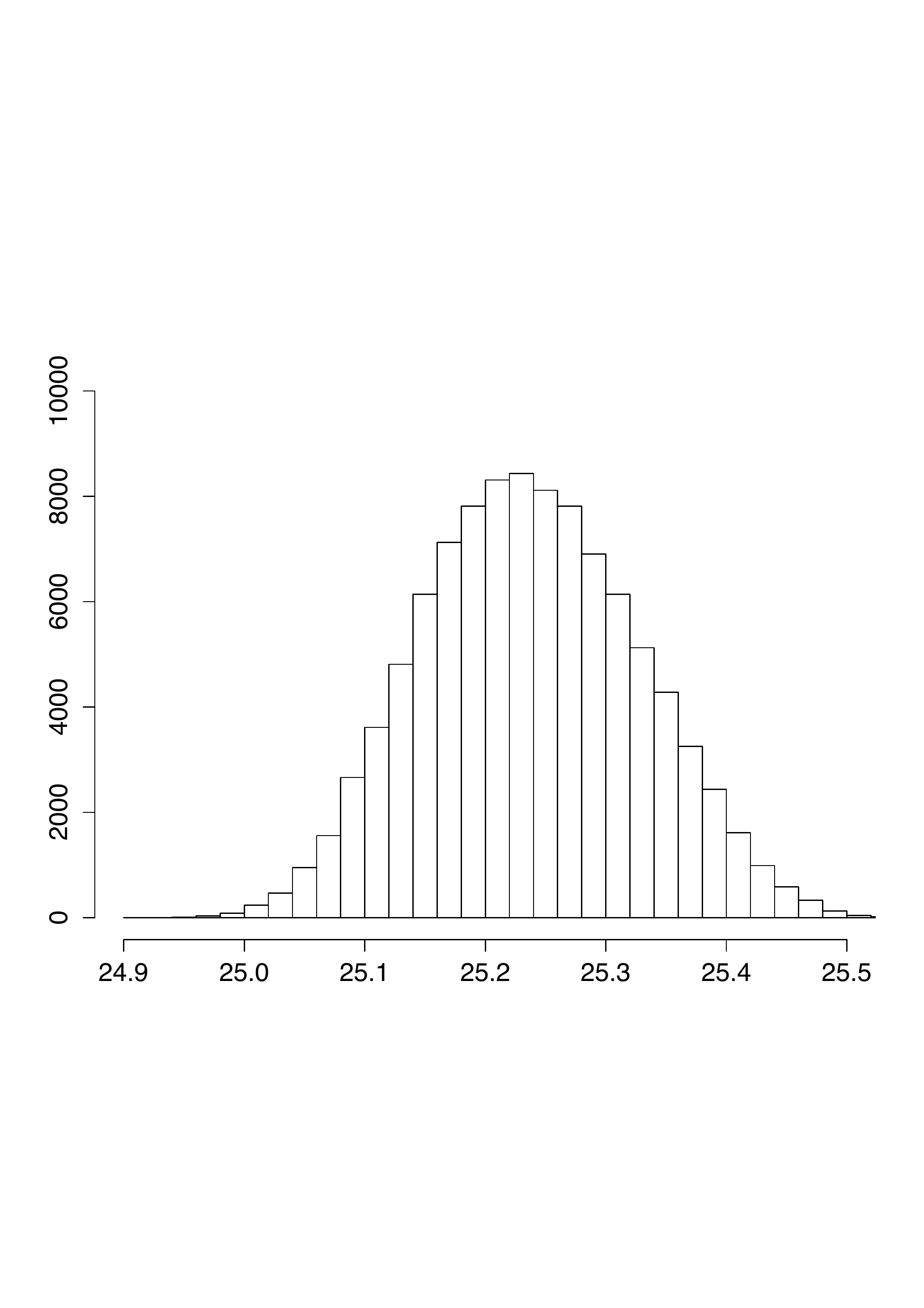}
\end{minipage}
\caption{ Histogram of a periodic annuity-due present values constructed from 100 000 independent simulations of discount factor curves under a Gaussian covariance function (left) and under a  Matérn covariance function (right). 
}
\label{renteGaussMatSwap}
\end{figure}

It can be seen on Figures~\ref{GaussMatParEst},~\ref{SRGaussMatEURO} and~\ref{TFGaussMatEURO} that the Nelson-Siegel model and the Nelson-Siegel-Svensson model are close to the monotonic kriging mode, this is especially the case for the Svensson extension. However, these models do not always fulfil the given market constraints (especially for the Nelson-Siegel model). Furthermore, they do not provide confidence intervals.

\begin{rem}
\label{rem:comparison_kernels}
As can be seen on Figure~\ref{TFGaussMatEURO}, the shape of the 
most probable forward curves (black solid lines) differs  under the two considered covariance kernels. When using a Gaussian covariance kernel, the mode curve features an extraneous hump for maturities greater than 30 years  whereas, in the Mat\'ern 5/2 case, it is slightly increasing and follows the fitted Nelson-Siegel-Svensson curve. This difference comes from the fact that $\mathcal{C}^{\infty}$ forward curves generated by the Gaussian kernel are  less flexible to adjust to local tendency of the data than  $\mathcal{C}^{1}$ forward curves generated by the Mat\'ern 5/2 kernel. Note moreover that, in the Gaussian kernel case, inverting the covariance matrix without a nugget effect  may involve numerical instabilities due to the restriction of infinite differentiability. This issue has also been reported in \cite{doi:10.1137/140976741}, where the authors state in particular that ``\textit{this choice of the Mat\'ern covariance function over the commonly used squared exponential family (Sacks et al., 1989) avoids numerical instability, often observed when inverting the covariance matrix, by removing the restriction of infinite differentiability''}.

 \end{rem}

The previous simulations can be used to estimate the distribution of other  financial assets whose values depend on the curve. As an example, in Figure~\ref{renteGaussMatSwap}, we plot an histogram of the present value $\text{\"a}_{\annu{n}}^{(p)}=\sum_{k=0}^{pn-1}\frac{1}{p}Y^N(k/p)$ of a periodic annuity-due using 100 000 simulations of discount factors sample paths, where $n=40$ and $p=12$ (monthly payments). Notice that, despite the variability of the simulated sample paths of the conditional Gaussian process, the present value of the periodic annuity-due remains stable with 95\% confidence intervals $[25.12, 25.30]$ using the Gaussian kernel and $[25.07, 25.41]$ using the Mat\'ern kernel.%



\paragraph{Several quotation dates.}

We now illustrate the building procedure in dimension two, when  data observed at different quotation dates are incorporated. We then construct a surface representing the evolution of discount factors as a function of time-to-maturities and quotation dates. To do this, we use the approach described in Section~\ref{TDC}. In Figure~\ref{Dim2Gauss}, the  surface represents the mode estimator of the conditional GP in dimension two. 
The construction relies on the swap quotations at the 9 dates given in Table \ref{EPOV}. We choose $N_x=40$ and $N_t=20$ and we consider a two-dimensional Gaussian kernel written as
\begin{equation*}
K(\boldsymbol{x},\boldsymbol{x}')=\exp\left(-\frac{(x-x')^2}{2\theta_1}-\frac{(t-t')^2}{2\theta_2}\right),
\end{equation*}
where $\boldsymbol{x}=(x,t)$ and $\boldsymbol{x}'=(x',t')$. For each vector $\boldsymbol{x}=(x,t)$, the first component $x$ represents a time-to-maturity and the second component $t$ represents a quotation date. Without loss of generality, the distance $t'-t$ between two quotation dates has been expressed in percentage of the length between the two extreme dates of the sample. The parameters $\theta_1$ and $\theta_2$ are fixed respectively to $25$ and $0.5$. Notice that the constructed discount factor surface is non-increasing with respect to time-to-maturities.

\begin{figure}[H]
\centering
\includegraphics[scale=0.35]{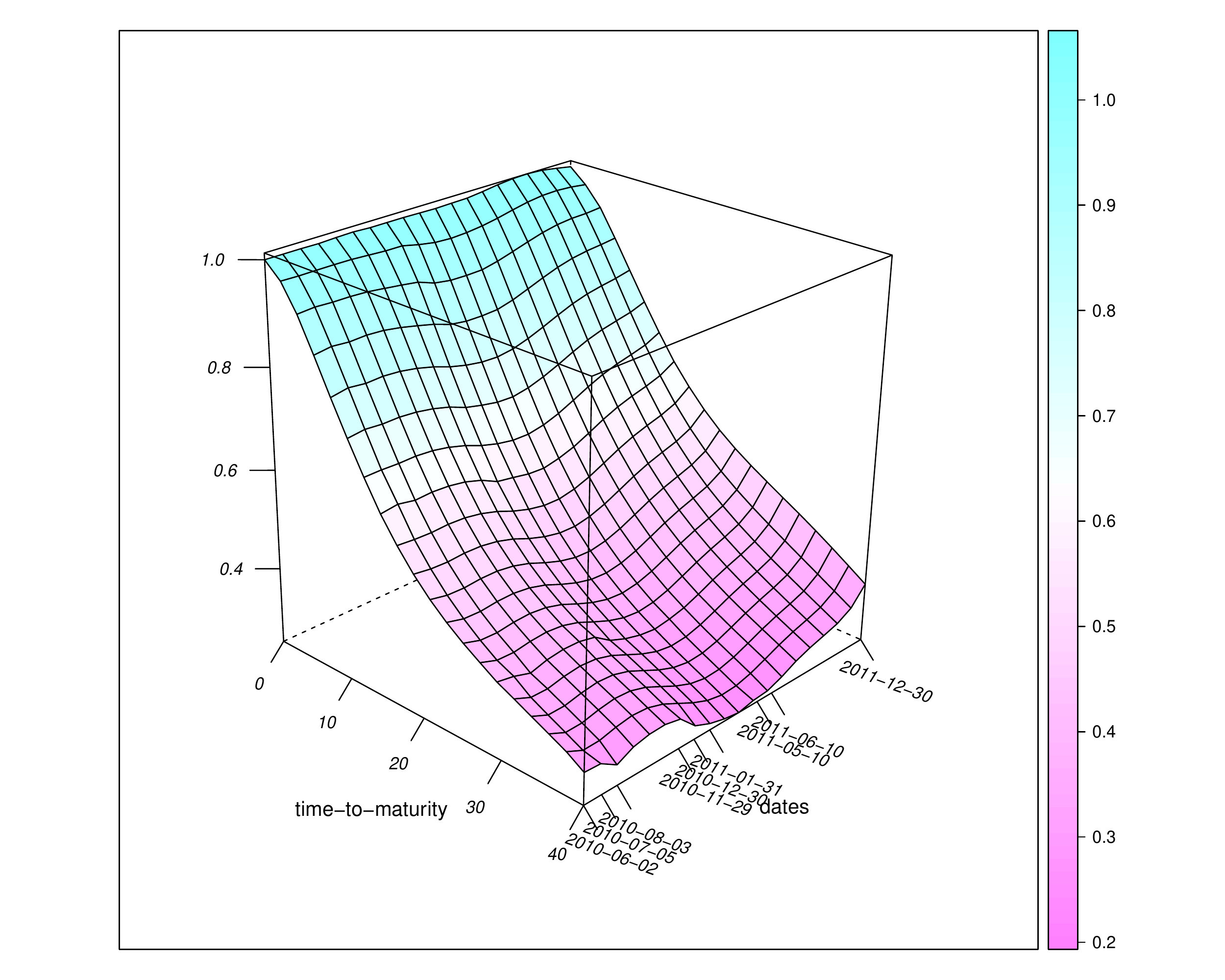}
\caption{Swap-vs-Euribor discount factors as a function of  time-to-maturities and quotation dates.}
\label{Dim2Gauss}
\end{figure}

\subsection{OIS discount curves}\label{sec:OIS}

We now apply the kriging method to construct OIS discount curves. The aim is to construct, at different quotation dates $t$, discount curves $T\rightarrow P^D(t,T)$ based on market quotes of Overnight Indexed Swaps associated with different standard maturities.
We consider par rates of OIS at the 10 quotation dates given in Table~\ref{EPOVOIS}. 
For each of these quotation dates, the term-structure is built  from $14$ swap rates $S_1, \ldots, S_{14}$ associated with standard maturities in the set $E:=\{1,\ldots,10,15,20,30,40\}^\top$. 
For each standard maturity $T\in E$, the value $P(t,k)$ of the curve at time horizons $k=1, \ldots, T$ are linked through the linear relation~\eqref{eq:OIS}. Then, the curve is compatible with market quotes if the vector of discount factors $P^D(t, X) := (P^D(t, 1), \ldots, P^D(t,40))^{\top}$ satisfies a linear system of the form 
\begin{equation}
\label{eq:market_fit_OIS}
A_t \cdot P(t, X)=\boldsymbol{b}_t,
\end{equation}
where $A_t$ is  a $14\times 40$ real matrix and $\boldsymbol{b}_t= (1,\ldots,1)^\top\in \mathbb{R}^{14}$. In this case, we have $n=14$ observations which depends on $m=40$ points of the curve. Note that the market fit condition has exactly the same form as for the previous example in Subsection~\ref{CD}.
\\

\paragraph{Parameters estimation.}
In Table~\ref{EPOVOIS}, we estimate the length hyper parameter $\hat{\theta}_G$ and $\hat{\theta}_M$ associated respectively to Gaussian and Mat\'ern 5/2 covariance functions (see Table~\ref{kernel}). The optimal values in the last two columns of Table~\ref{EPOVOIS} correspond to the values of the LOO criterion defined in (\ref{LOO}) at the global optimum. Note that the estimated parameters  $\hat{\theta}_G$ and $\hat{\theta}_M$ remain stable across the considered quotation dates.  In addition, the obtained optimal values  are close  for the two covariance functions, even if they are slightly smaller for the Gaussian covariance function. Figure~\ref{GaussMatLOOOIS} represents the function to be optimized in criterion (\ref{LOO}) using the OIS data on $03/06/2010$. Given the shape of the functions, the estimation procedure has turned to be much more straightforward using a Mat\'ern 5/2 covariance function.

%


\begin{table}[hptb]
\centering
\caption{Parameters estimation using ACV methods (OIS data).}
\label{EPOVOIS}
\begin{tabular}{ccccc}
\hline \hline
Date  &  $\hat{\theta}_G$ &  $\hat{\theta}_M$ & Optimal value Gaussian & Optimal value Mat\'ern 5/2  \\   
\hline
03/06/2010 & 26.2 & 19.1 & \textbf{2.5e-05} & 9.5e-05 \\
04/10/2010 & 27.8 & 20.6 & \textbf{1.2e-05} & 7.4e-05 \\
31/12/2010 & 26.1 & 18.7 & \textbf{2.6e-05} & 8.4e-05 \\
04/03/2011 & 28.2 & 19.0 & \textbf{1.5e-05} & {4.7e-05} \\
15/06/2011 & 27.3 & 18.2 & \textbf{1.2e-05} & 6.9e-05 \\
10/10/2011 & 26.5 & 23.8 & \textbf{1.3e-05} & 5.3e-05  \\
14/11/2011 & 26.1 & 23.8 & \textbf{1.4e-05} & 7.4e-05 \\
15/12/2011 & 25.8 & 24.2 & \textbf{1.9e-05} & 7.6e-05 \\
\hline
\end{tabular}
\end{table}

\begin{figure}[hptb]
\begin{minipage}{.5\linewidth}
\centering
\includegraphics[scale=.4]{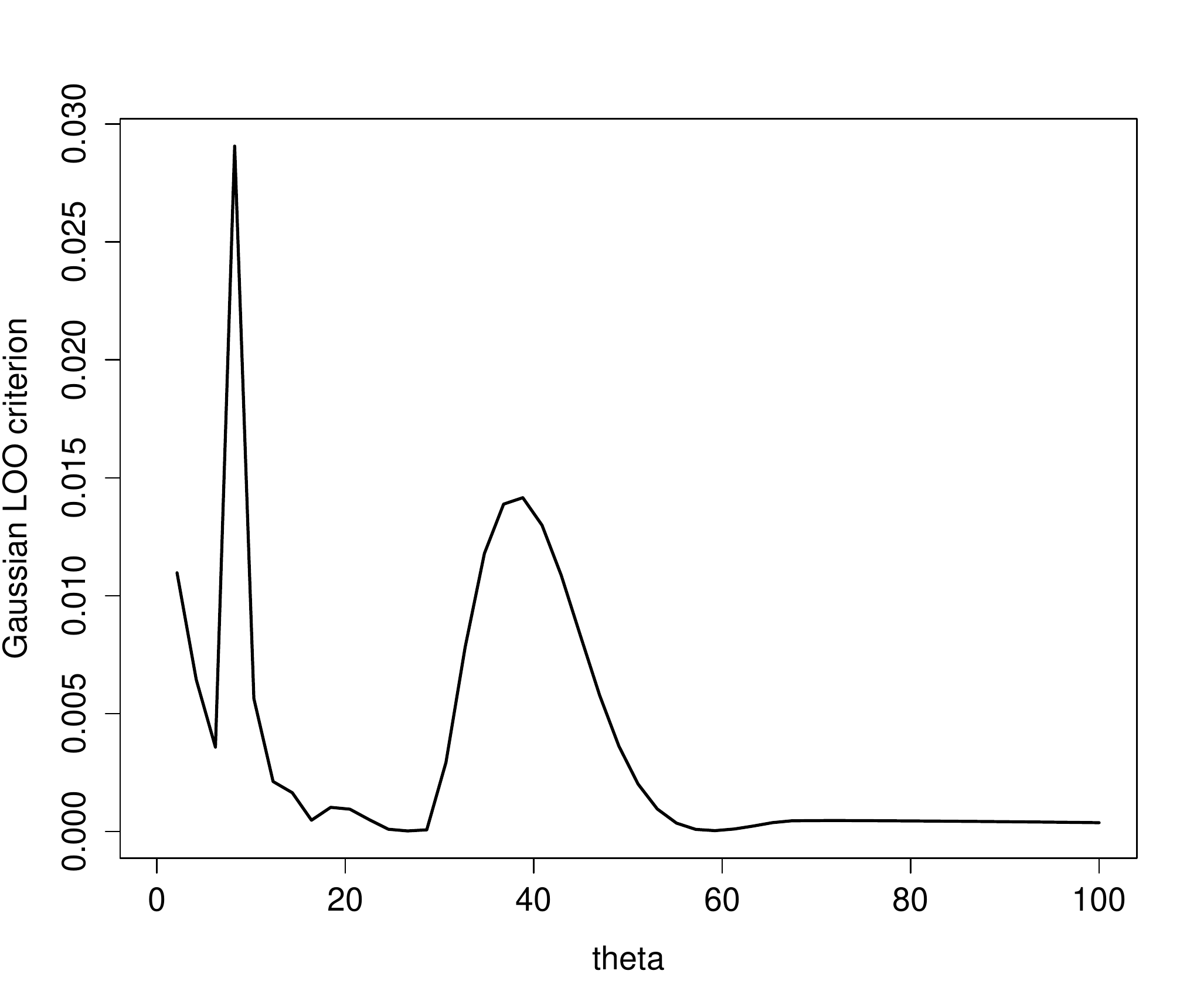}
\end{minipage}%
\begin{minipage}{.5\linewidth}
\centering
\includegraphics[scale=.4]{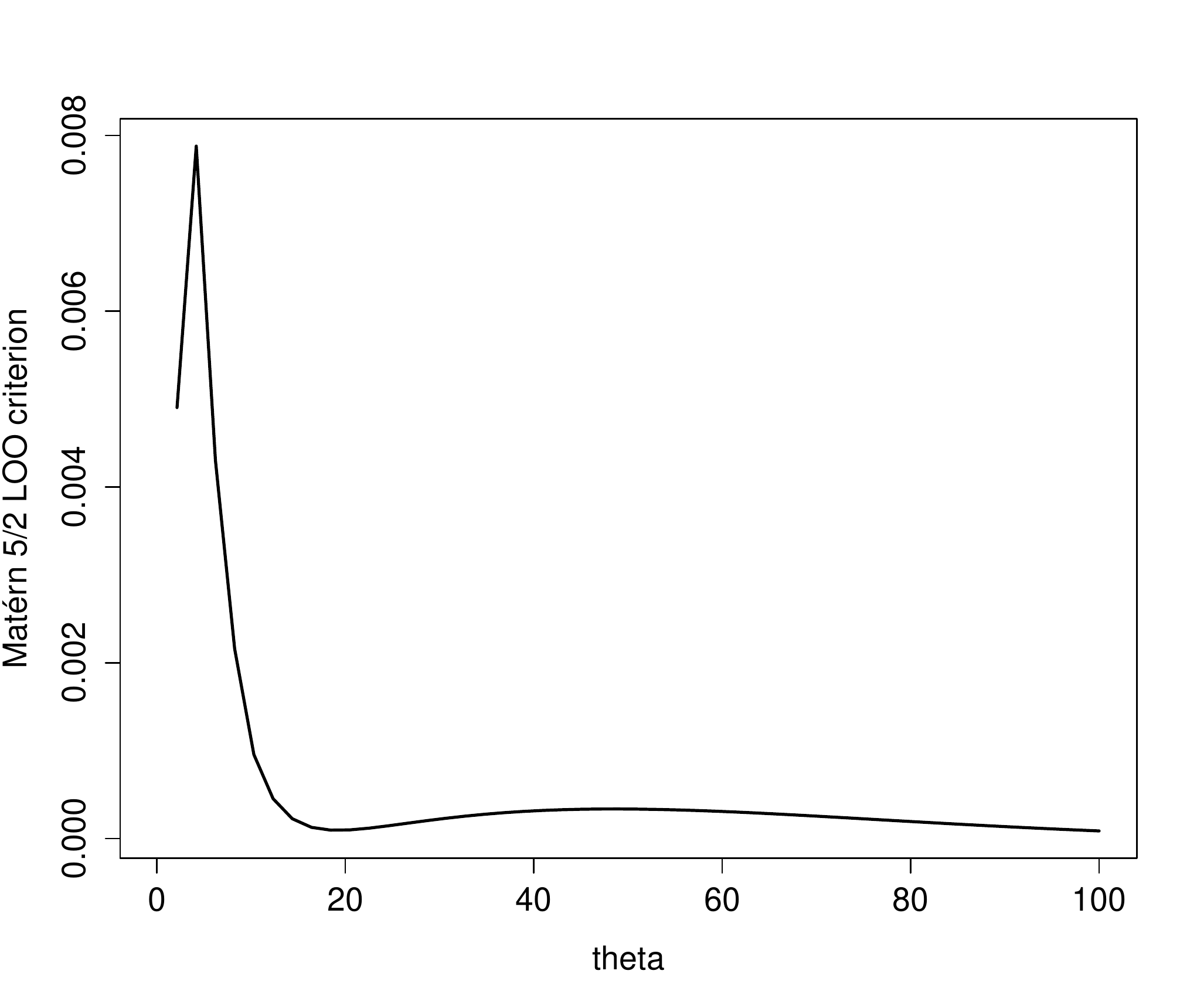}
\end{minipage}
\caption{The function to be optimized in LOO criterion (\ref{LOOLConst}) using the Gaussian covariance function (left) and the Mat\'ern 5/2 covariance function (right). OIS data on 03/06/2010.
}
\label{GaussMatLOOOIS}
\end{figure}

As previously, once the length parameter $\theta$ has been estimated, the standard deviation parameter $\sigma$ is estimated using Equation~\eqref{sigSCV_2}.

%
%

\paragraph{One single quotation date.}
In Figure~\ref{GaussMatParEstOIS}, we choose $N=50$ and generate 100 sample paths of discount factors constructed from model~\eqref{propModel} when using a Gaussian covariance function (left graph) and a Mat\'ern 5/2 covariance function (right graph). All the curves are non-increasing with respect to time-to-maturities. In addition, they all are perfectly compatible with  OIS data as of $03/06/2010$. The Gaussian process hyper-parameters have been estimated by the ACV method described in Section~\ref{PE}.
The estimated hyper-parameters are given by $(\hat{\theta}_G, \hat{\sigma}_G)=(26.2,\, 4.24)$ when using a Gaussian covariance function and by $(\hat{\theta}_M, \hat{\sigma}_M)=(19.1,\,0.24)$ when using a Mat\'ern 5/2 covariance function.
The black solid line represents the most likely curve, i.e., the mode of the conditional GP. Recall that, by construction, this curve satisfies the given constraints. The black dashed-lines represent the 95\% point-wise confidence intervals quantified by simulation. Figure~\ref{SRGaussMat} and Figure~\ref{TFGaussMat} gives the corresponding spot rate and forward curves.\\

Again, all figures~\ref{GaussMatLOOOIS},~\ref{SRGaussMat} and~\ref{TFGaussMat} are given together with the associated best-fitted Nelson-Siegel curves  \citep[see][]{nelson1987parsimonious} and the associated best-fitted Svensson curves \citep[see][]{Svensson:1994}.  Parameters have been estimated by minimizing the sum of squared errors between market and model prices. We use a gradient descent algorithm with randomly chosen starting values as described in \cite{gilli2010calibrating}. The optimal parameters are given in Table~\ref{ParamNS_OIS}.
\begin{table}[hptb]
\centering
\caption{Parameters estimation for Nelson-Siegel and Nelson-Siegel-Svensson model (OIS data, 03/06/2010).}
\label{ParamNS_OIS}
\begin{tabular}{l|cccccc}
\hline \hline
 & $\lambda_1$ & $\lambda_2$ & $\beta_1$ & $\beta_2$ & $\beta_3$ & $\beta_4$ \\ 
\hline 
Nelson-Siegel          & 1.0890 &  - & 0.0341 &  -0.0171 & -0.0601 & - \\ 
\hline 
Nelson-Siegel-Svensson & 1.0938 & 15.1891 & 0.0502 & -0.0164 & -0.1074 & -0.0494 \\ 
\hline 
\end{tabular}
\end{table}

\begin{figure}[hptb]
\begin{minipage}{.5\linewidth}
\centering
\includegraphics[scale=.4]{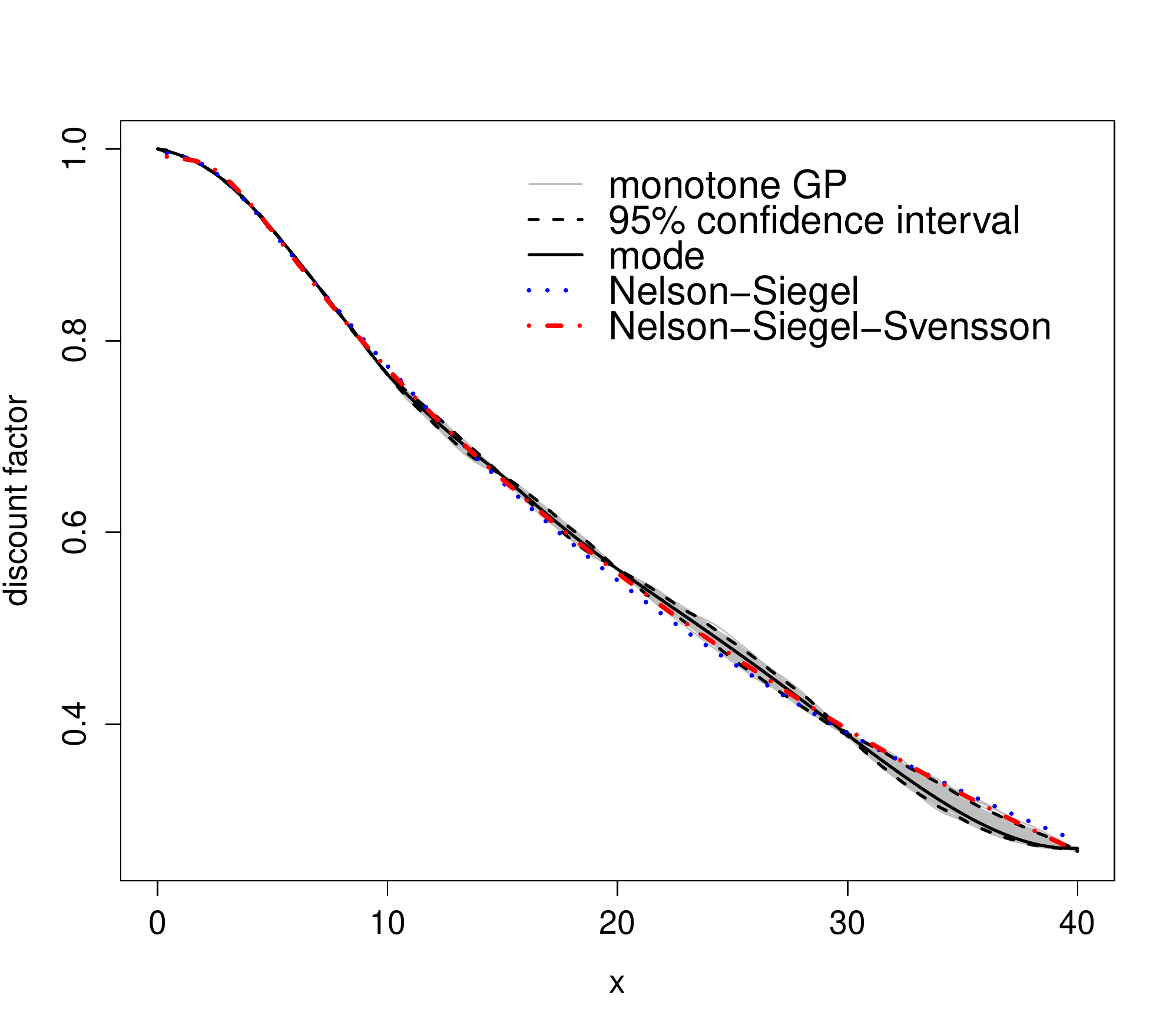}
\end{minipage}%
\begin{minipage}{.5\linewidth}
\centering
\includegraphics[scale=.4]{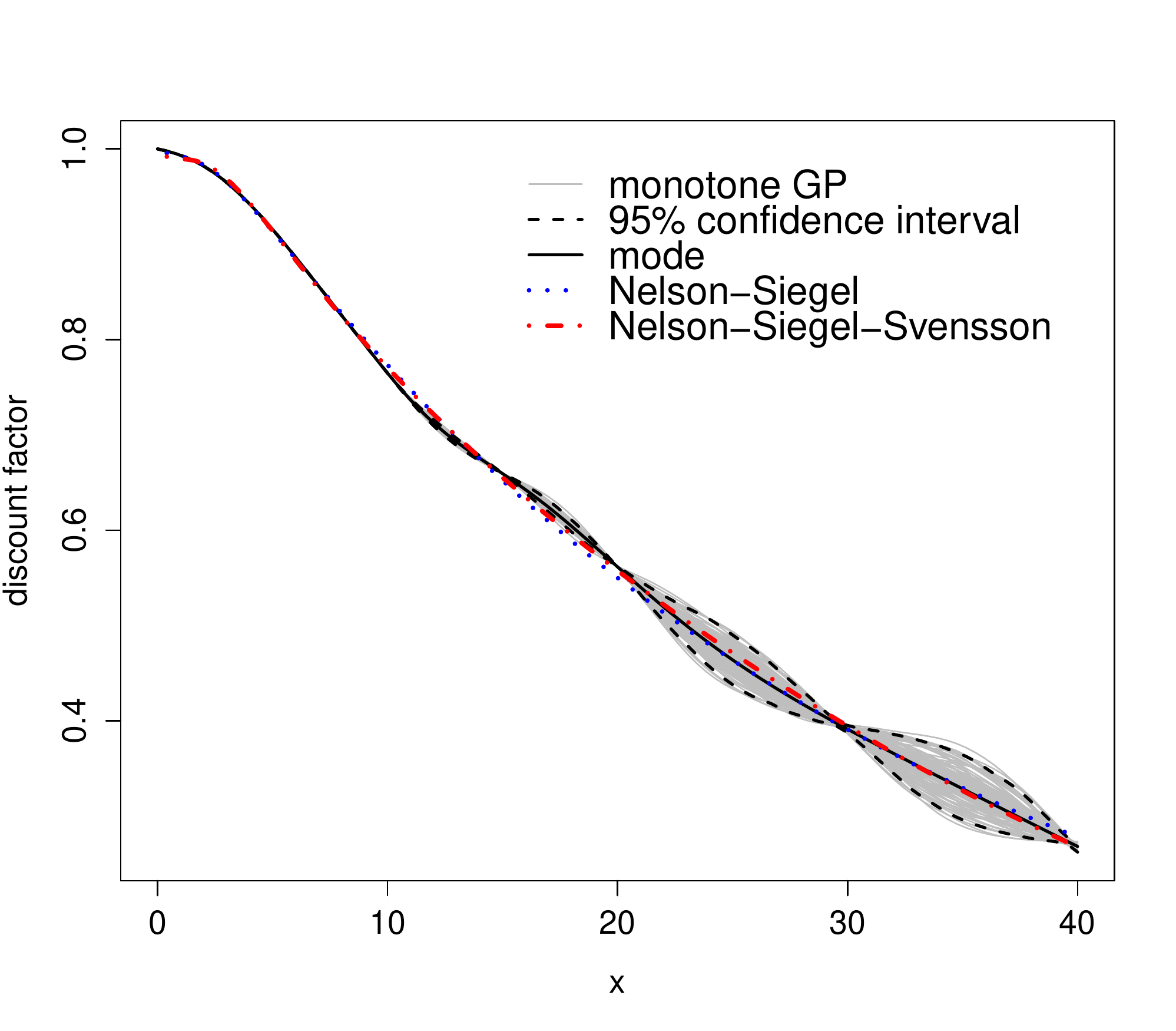}
\end{minipage}
\caption{OIS discount factor curves (gray lines) given as simulated paths of a conditional GP with non-increasing constraints using a Gaussian covariance function with nugget equal to $10^{-5}$ (left) and a Mat\'ern 5/2 covariance function without nugget (right). OIS data of 03/06/2010.
}
\label{GaussMatParEstOIS}
\end{figure}

\begin{figure}[hptb]
\begin{minipage}{.5\linewidth}
\centering
\includegraphics[scale=.4]{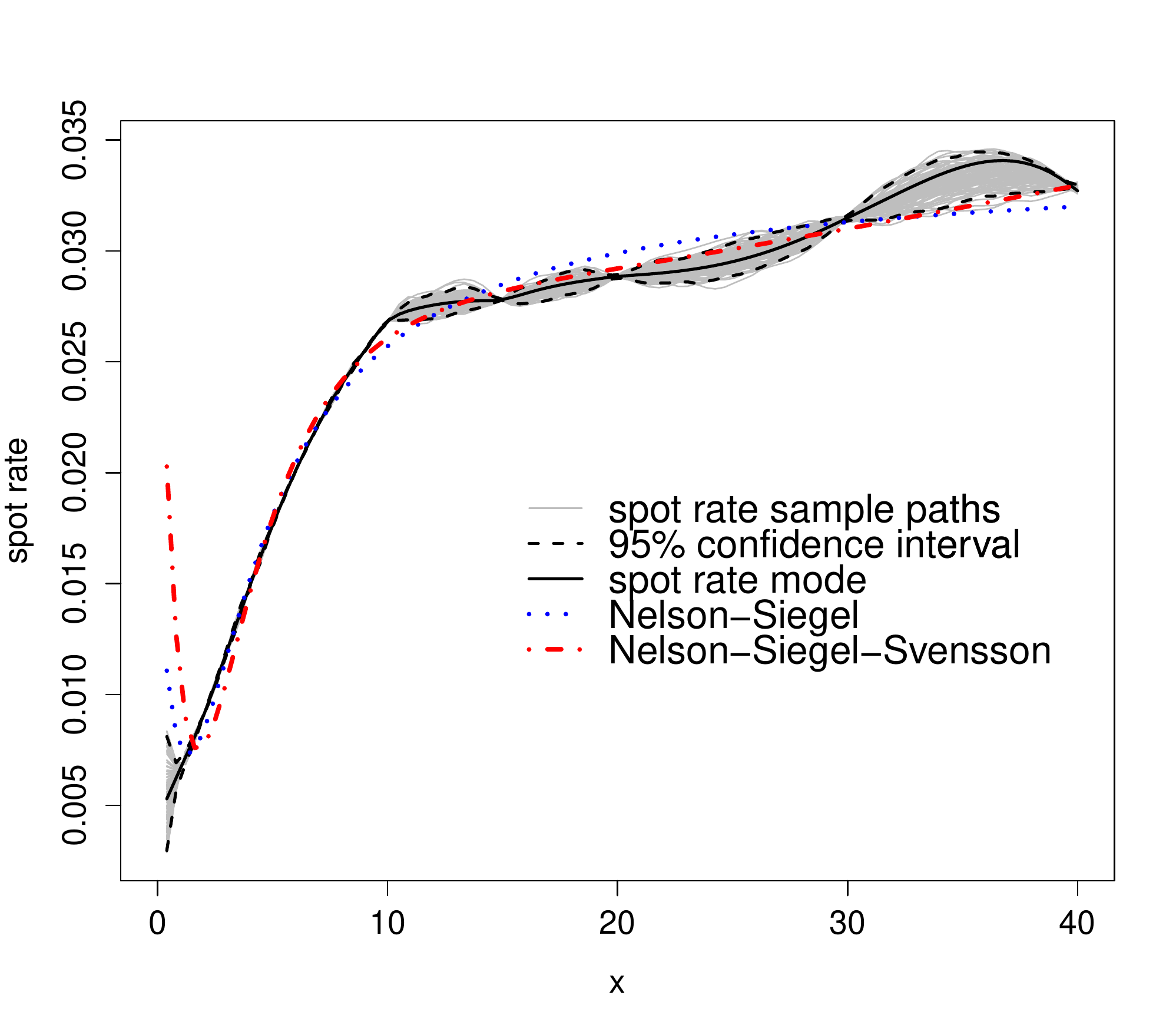}
\end{minipage}%
\begin{minipage}{.5\linewidth}
\centering
\includegraphics[scale=.4]{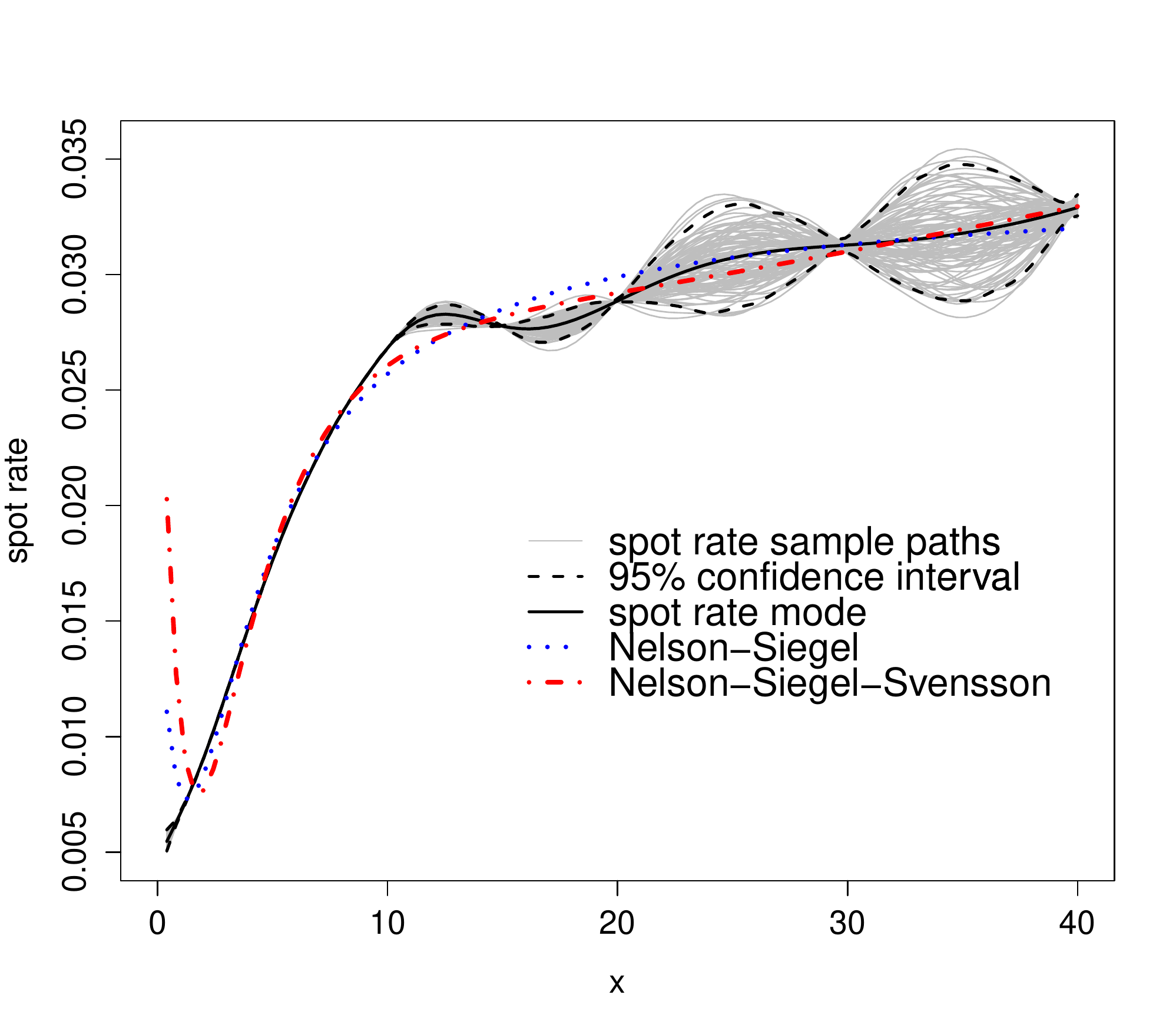}
\end{minipage}
\caption{
Spot rates obtained from sample paths of Figure \ref{GaussMatParEstOIS} with Gaussian covariance function (left) and Mat\'ern 5/2 covariance function (right). Gray lines represent $-\frac{1}{x}\log Y^N(x)$ for each sample path. The black solid line is the most likely spot rate curve $-\frac{1}{x}\log M^N_K\left(x \suchthat A,\boldsymbol{b}\right)$.
}
\label{SRGaussMat}
\end{figure}

\begin{figure}[hptb]
\begin{minipage}{.5\linewidth}
\centering
\includegraphics[trim={0 5cm 0 0}, scale=.4]{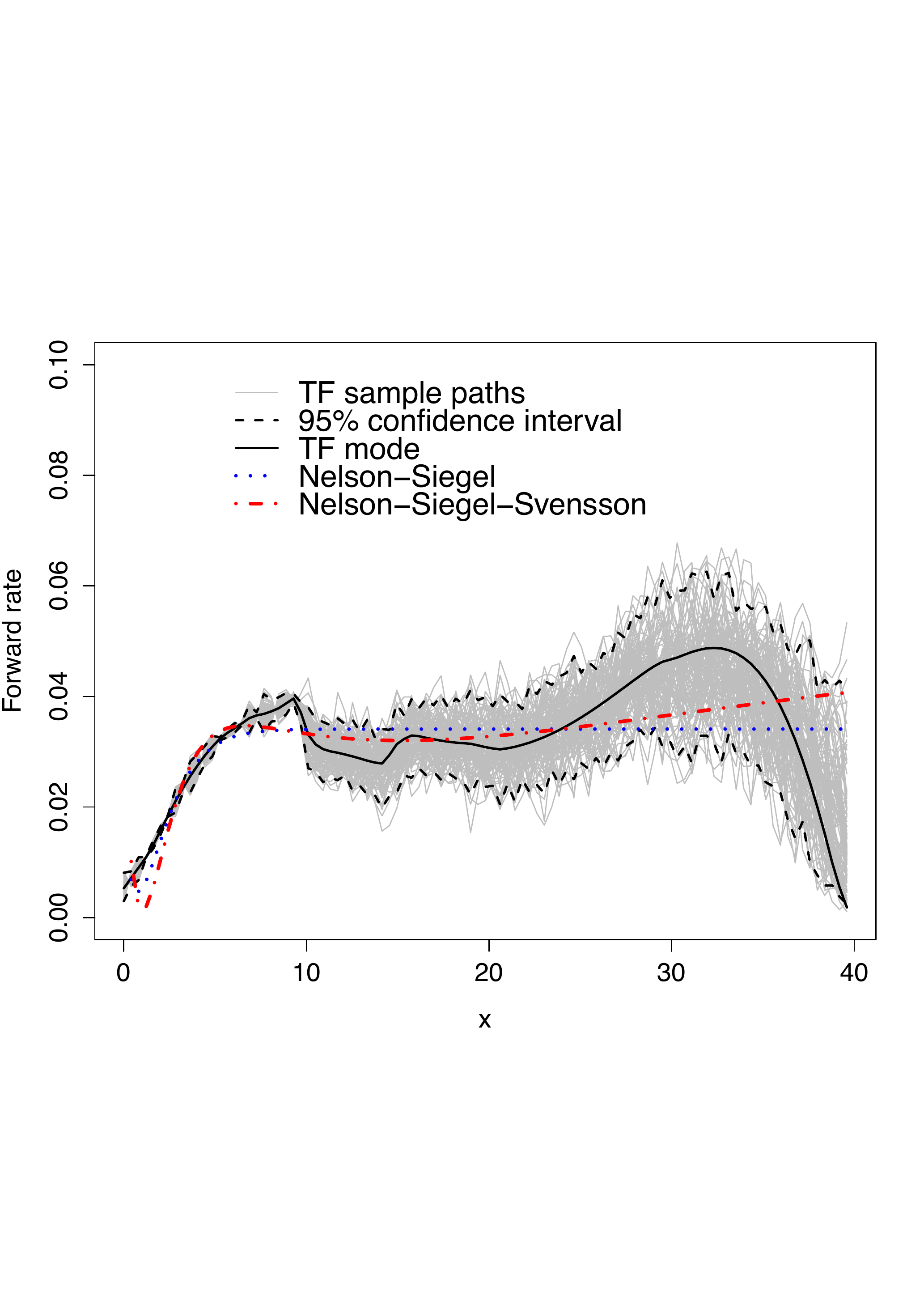}
\end{minipage}%
\begin{minipage}{.5\linewidth}
\centering
\includegraphics[trim={0 5cm 0 0}, scale=.4]{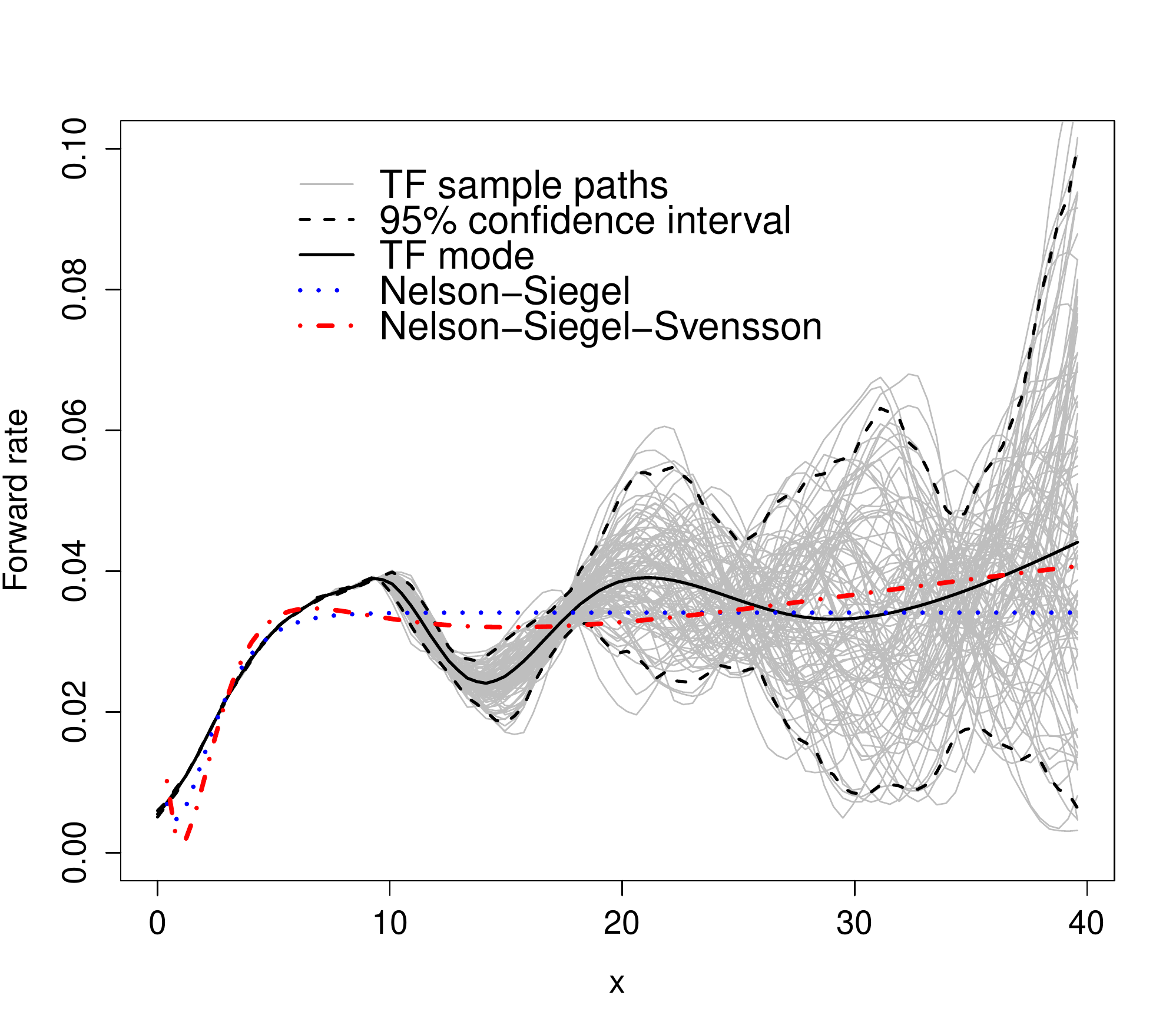}
\end{minipage}
\caption{
Forward rates obtained from sample paths of Figure  \ref{GaussMatParEstOIS}   with Gaussian covariance function (left) and Mat\'ern 5/2 covariance function (right). Gray lines represent $-\frac{d}{dx}\log Y^N(x)$ for each sample path. The black solid line is the most likely forward rate curve $-\frac{d}{dx}\log M^N_K\left(x\suchthat A,\boldsymbol{b}\right)$.
}
\label{TFGaussMat}
\end{figure}

Regarding comparison of curves constructed by the two considered covariance kernels, Remark \ref{rem:comparison_kernels} also applies here.

It appears that Nelson-Siegel and Nelson-Siegel-Svensson models do not fulfil all market constraints, contrary to the proposed methodology, which also gives confidence intervals.


\paragraph{Several quotation dates.}
Using the two dimensional approach described in Section~\ref{TDC}, we build in Figure~\ref{Dim2GaussOIS}  a surface representing OIS discount factors with respect to time-to-maturities and quotation dates. The construction relies on OIS quotations at the 8 dates given in Table \ref{EPOVOIS}. The surface corresponds to the mode of the conditional GP given market-fit equality constraints and non-increasing constraints in the direction of time-to-maturities.  We choose $N_x=40$ and $N_t=20$ and we consider a two-dimensional Gaussian kernel written as
\begin{equation*}
K(\boldsymbol{x},\boldsymbol{x}')=\exp\left(-\frac{(x-x')^2}{2\theta_1}-\frac{(t-t')^2}{2\theta_2}\right),
\end{equation*}
where $\boldsymbol{x}=(x,t)$ and $\boldsymbol{x}'=(x',t')$. For each vector $\boldsymbol{x}=(x,t)$, the first component $x$ represents a time-to-maturity and the second component $t$ represents a quotation date.
Without loss of generality, the distance  $t-t'$ between two quotation dates has been expressed in percentage of the length between the two extreme dates of the sample. The parameters $\theta_1$ and $\theta_2$ are fixed respectively to $25$ and $0.5$. Observe that the constructed discount factor surface is non-increasing with respect to time-to-maturities.

\begin{figure}[H]
\centering
\includegraphics[scale=0.35]{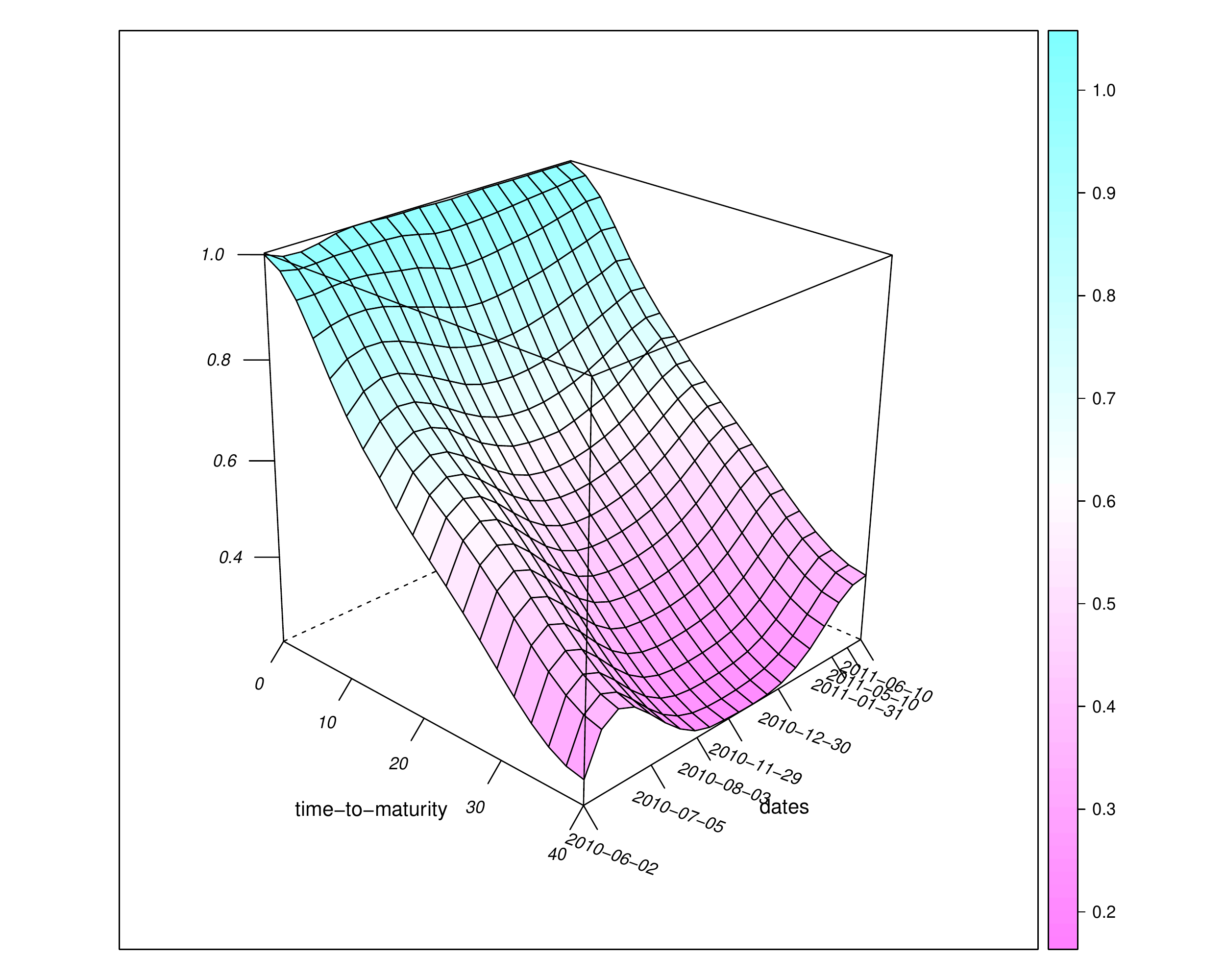}
\caption{
OIS discount factors as a function of  time-to-maturities and quotation dates.
}
\label{Dim2GaussOIS}
\end{figure}

\subsection{CDS implied default distributions}

We now apply the kriging method to build CDS implied default distributions. Indeed, CDS can be seen as an insurance product that covers the  loss of a particular debt issuer if the latter defaults over a certain protection period or maturity. The market quotes (CDS spreads) of these products provide information on the current cost for protection and, in turn,  on how the market assesses default probabilities of the underlying entity at different time horizons. 
\\

Our aim is to construct, at different quotation dates $t$, implied survival functions $T\rightarrow Q(t,T)$ of a particular debt issuer by observing the corresponding term-structure of CDS spreads (CDS spreads at increasing protection horizons). The quantity $Q(t,T)$ gives the probability at time $t$ that the underlying debt issuer does not default before the time horizon $T$. In this numerical illustration, we consider CDS  on the Russian sovereign debt for the 10 quotation dates given in Table~\ref{EPOVCDS}. 
For each of these quotation dates, this corresponds to $7$ CDS spreads $S_1, \ldots, S_{7}$ associated with protection maturities (in years) in the set $E:=\{1,2,3,4,5,7,10\}^\top$. 
For each standard maturity $T\in E$, the values of the survival probabilities at each premium payment dates $\ta_1<\cdots<\ta_n=T$ are linked through the linear relation~\eqref{eq:discrete_linear_CDS}. Note that the  premium payment dates are separated by a quarter period and these dates coincide for all quoted CDS maturities. 
In our numerical illustrations, the expected recovery rate $R$ is fixed to $40\%$ and the discount factors $P^D(t, \tau_k)$ are constructed by linear interpolation of treasury constant maturity rates given (for all considered quotation dates) by the Federal Reserve Bank of St. Louis. Then, if $\delta = 1/4$ represents a quarter period, the implied default distribution is compatible with market quotes if the vector of survival probabilities  $Q(t, X) := (Q(t, \delta), Q(t, 2\delta), \ldots, Q(t,10))^{\top}$ satisfies a linear system of the form 
\begin{equation}
\label{eq:market_fit_CDS}
A_t \cdot Q(t, X)=\boldsymbol{b}_t
\end{equation}
where $A_t$ is  a $7\times 40$ real matrix and $\boldsymbol{b}_t= (1-R,\ldots,1-R)^\top\in \mathbb{R}^{7}$. In this case, we have $n=7$ observations which depends on $m=40$ points of the curve.

\paragraph{Parameters estimation.}
In Table~\ref{EPOVCDS}, we compare the estimation of  the length hyper parameter  for a Gaussian kernel ($\hat{\theta}_G$) and for a Mat\'ern 5/2 kernel  ($\hat{\theta}_M$). The estimation has been performed using the ACV method described in Section~\ref{PE}. 
As can be seen on Figure \ref{LOOMatGaussCDS}, the LOO objective function looks similar for these two covariance functions. 

\begin{table}[hptb]
\centering
\caption{Length parameters estimation using ACV methods (CDS data).}
\label{EPOVCDS}
\begin{tabular}{ccccc}
\hline \hline
Date  &  $\hat{\theta}_G$ &  $\hat{\theta}_M$ & Optimal value Gaussian & Optimal value Mat\'ern 5/2   \\   
\hline
06/01/2005 & 6.6  & 10.5  & 6.3e-06 & \textbf{4.9e-07} \\
02/02/2006 & 4.9  & 15.5  & {2.7e-07} & \textbf{1.8e-07} \\
20/03/2007 & 4.5  & 19.8  & 7.6e-06 & \textbf{7.1e-07} \\
04/04/2008 & 9.7  & 11.4  & 5.0e-06 & \textbf{2.6e-06} \\
11/05/2009 & 4.9 & 10.1  & 1.3e-04  & \textbf{4.2e-05} \\
21/06/2010 & 11.1 & 10.1  & 1.1-05  & \textbf{3.4e-06}  \\
14/07/2011 & 14.5 & 17.1  & \textbf{1.2e-06} & 1.8e-06 \\
23/08/2012 & 4.1  & 10.8  & 8.2e-06 & \textbf{1.5e-06} \\
\hline
\end{tabular}
\end{table}

\begin{figure}[hptb]
\begin{minipage}{.5\linewidth}
\centering
\includegraphics[scale=.4]{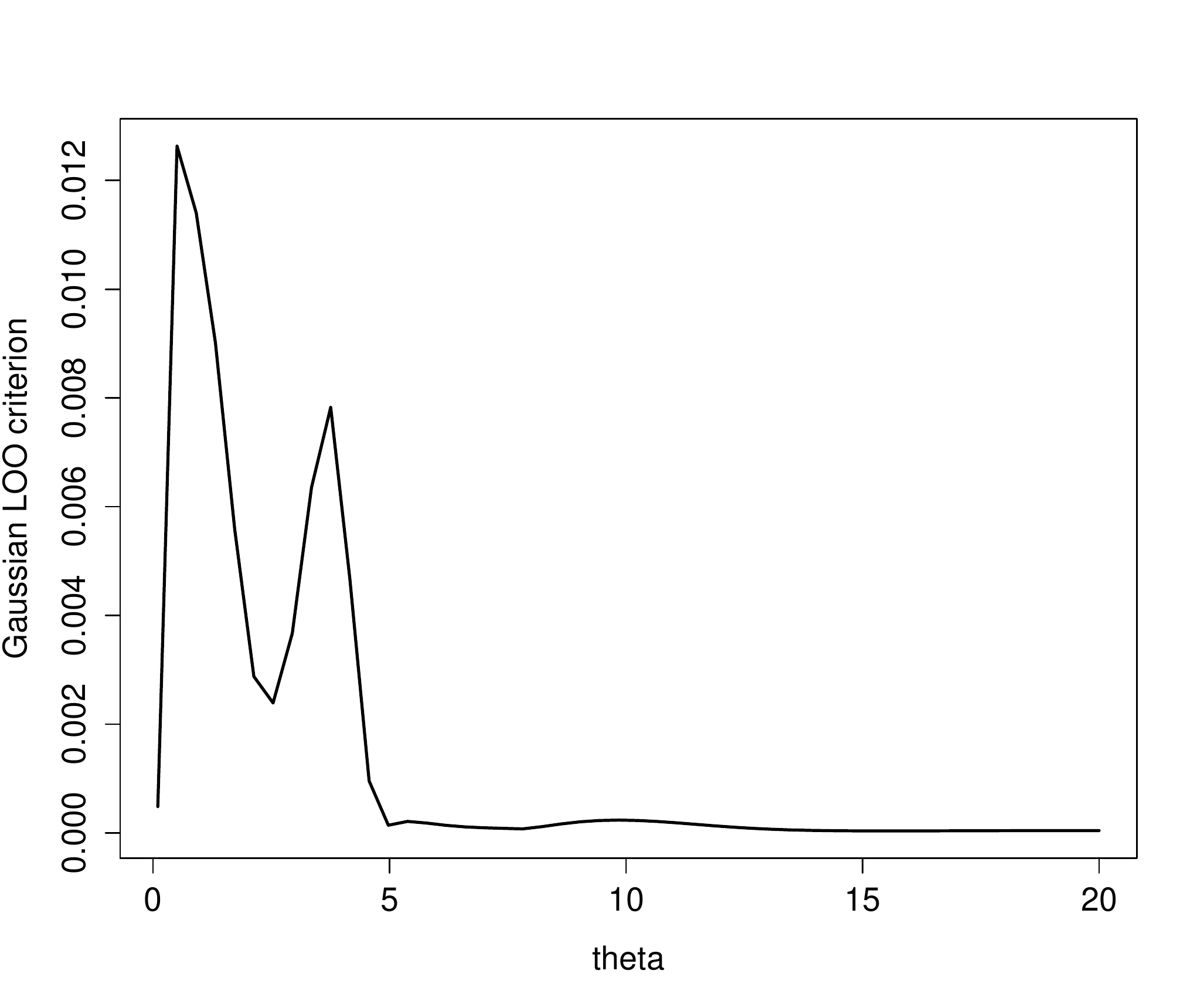}
\end{minipage}%
\begin{minipage}{.5\linewidth}
\centering
\includegraphics[scale=.4]{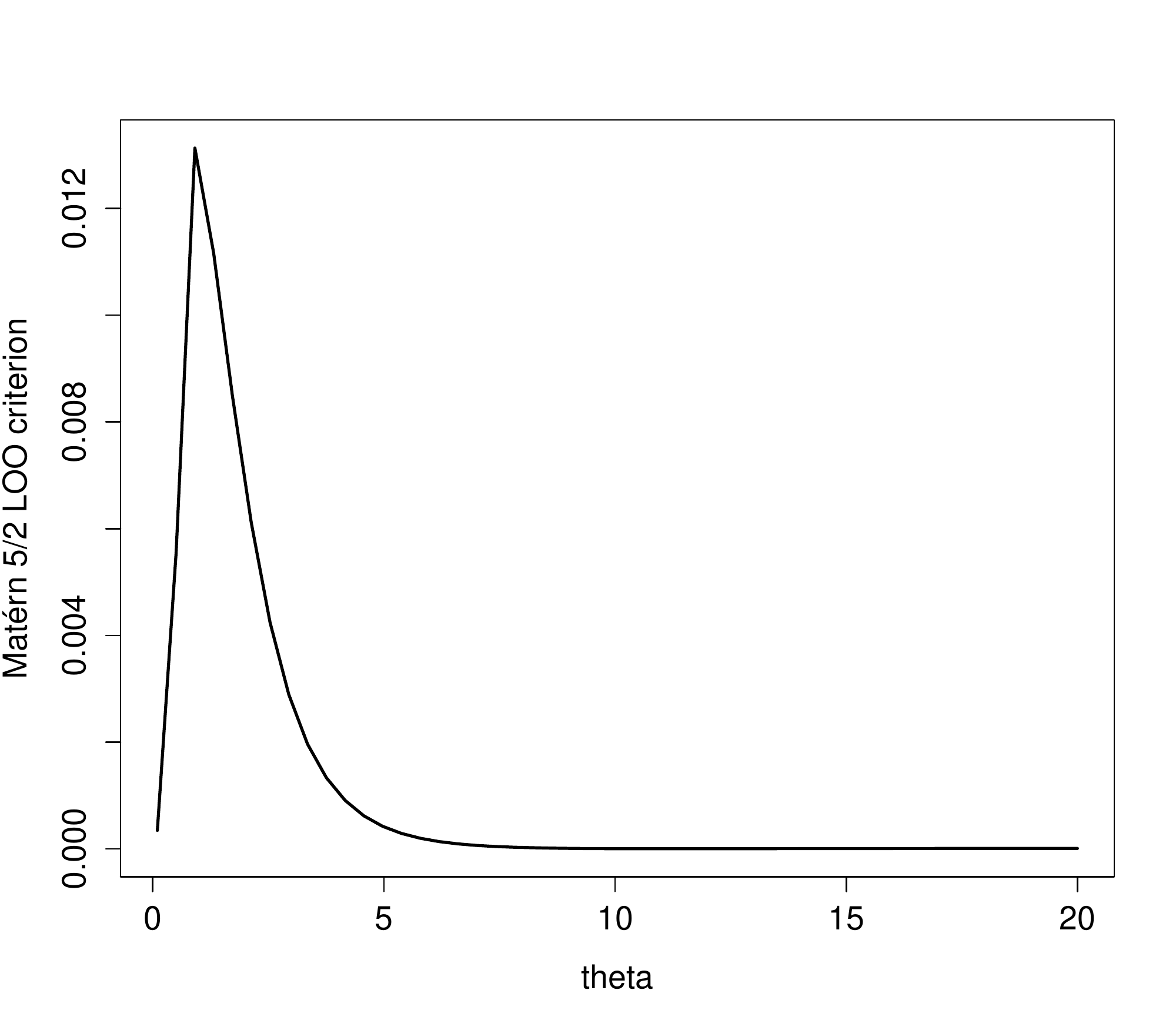}
\end{minipage}
\caption{The function to be optimized in the LOO criterion (\ref{LOOLConst}) using the Gaussian (left) and the Mat\'ern 5/2 (right) covariance function. CDS quotes as of $06/01/2005$. 
}
\label{LOOMatGaussCDS}
\end{figure}


As previously, once the length parameter $\theta$ has been estimated, the standard deviation parameter $\sigma$ is estimated using Equation~\eqref{sigSCV_2}.

\paragraph{One single quotation date.}

In Figure~\ref{GaussMatCDS}, we choose $N=50$ and generate 100 sample paths of CDS implied survival curves, constructed from model~\eqref{propModel} when using a Gaussian covariance function (left graph) and a Mat\'ern 5/2 covariance function (right graph). All the curves are non-increasing with respect to time horizons. In addition, they all are perfectly compatible with  CDS data as of $06/01/2005$. 
The Gaussian process hyper-parameters have been estimated by the ACV method described in Section \ref{PE}.
The estimated hyper-parameters are given by  $(\hat{\theta}_G, \hat{\sigma}_G)=(4.9,\, 0.09)$ when using a Gaussian covariance function and by $(\hat{\theta}_M,\hat{\sigma}_M)=(10.5,\,0.22)$ when using a Mat\'ern 5/2 covariance function.
The black solid line represents the most likely curve, i.e., the mode of the conditional GP. Recall that, by construction, this curve satisfies the given constraints. The black dashed-lines represent the 95\% point-wise confidence intervals quantified by simulation. 

\begin{figure}[hptb]
\begin{minipage}{.5\linewidth}
\centering
\includegraphics[trim=0 5cm 0 0 0, scale=.4]{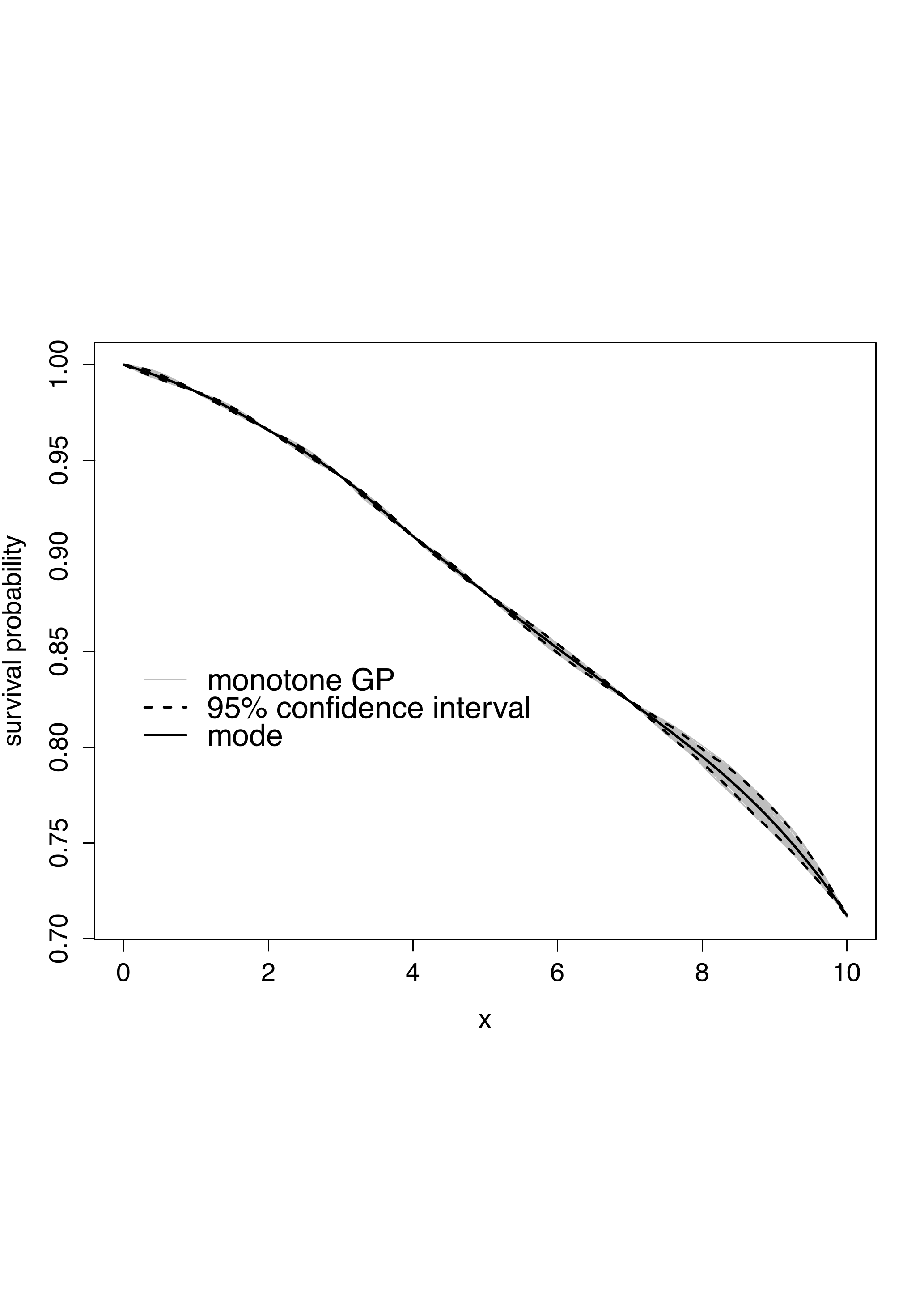}
\end{minipage}%
\begin{minipage}{.5\linewidth}
\centering
\includegraphics[trim=0 5cm 0 0 0, scale=.4]{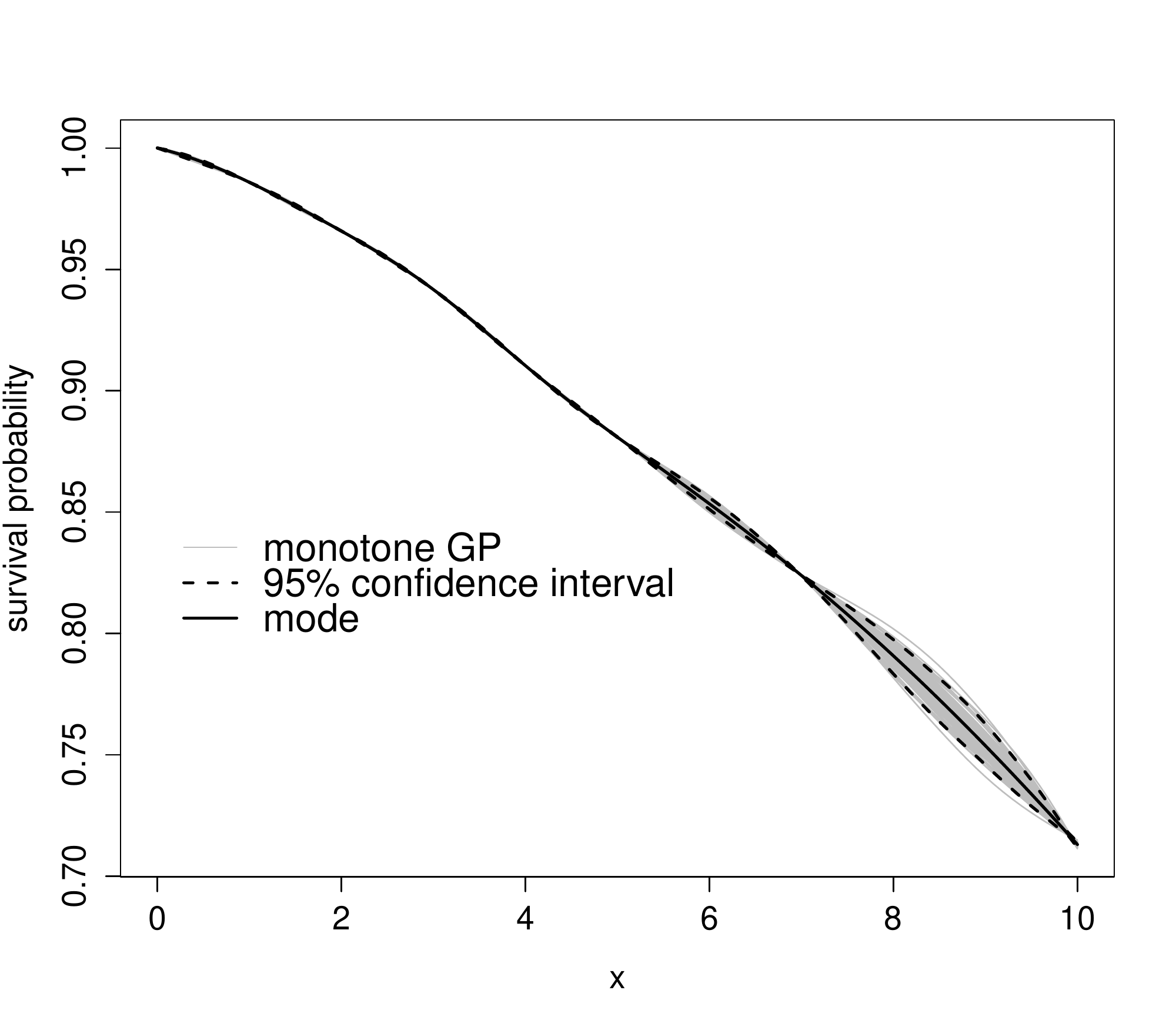}
\end{minipage}
\caption{
CDS implied survival curves (gray lines) given as simulated paths of a conditional GP with non-increasing constraints using a Gaussian covariance function (left)  or a  Mat\'ern 5/2 covariance function (right). 
}
\label{GaussMatCDS}
\end{figure}


\paragraph{Several quotation dates.}
Using the two dimensional approach described in Section~\ref{TDC}, we build in Figure~\ref{Dim2GaussCDS} a surface representing CDS implied survival curves as a function of time horizons and quotation dates.  The construction relies on CDS quoted spreads at the 8 dates given in Table~\ref{EPOVCDS}. 
The surface corresponds to the mode of the conditional GP given market-fit equality constraints and non-increasing constraints in the direction of time-to-maturities.  We choose $N_x=40$ and $N_t=20$ and we consider a two-dimensional Gaussian kernel written as
\begin{equation*}
K(\boldsymbol{x},\boldsymbol{x}')=\exp\left(-\frac{(x-x')^2}{2\theta_1}-\frac{(t-t')^2}{2\theta_2}\right),
\end{equation*}
where $\boldsymbol{x}=(x,t)$ and $\boldsymbol{x}'=(x',t')$. For each vector $\boldsymbol{x}=(x,t)$, the first component $x$ represents a time-to-maturity and the second component $t$ represents a quotation date.
Without loss of generality, the distance  $t-t'$ between two quotation dates has been expressed in percentage of the length between the two extreme dates of the sample. The parameters $\theta_1$ and $\theta_2$ are fixed respectively to $8$ and $1.7$. Notice that the constructed discount factor surface is non-increasing with respect to time-to-maturities. Considering several quotation dates simultaneously offers the advantage of increasing the data set for a better estimation of hyper-parameters  and of creating a consistent interpolation procedure across two directions (time horizons and quotation dates).


\begin{figure}[H]
\centering
\includegraphics[scale=0.35]{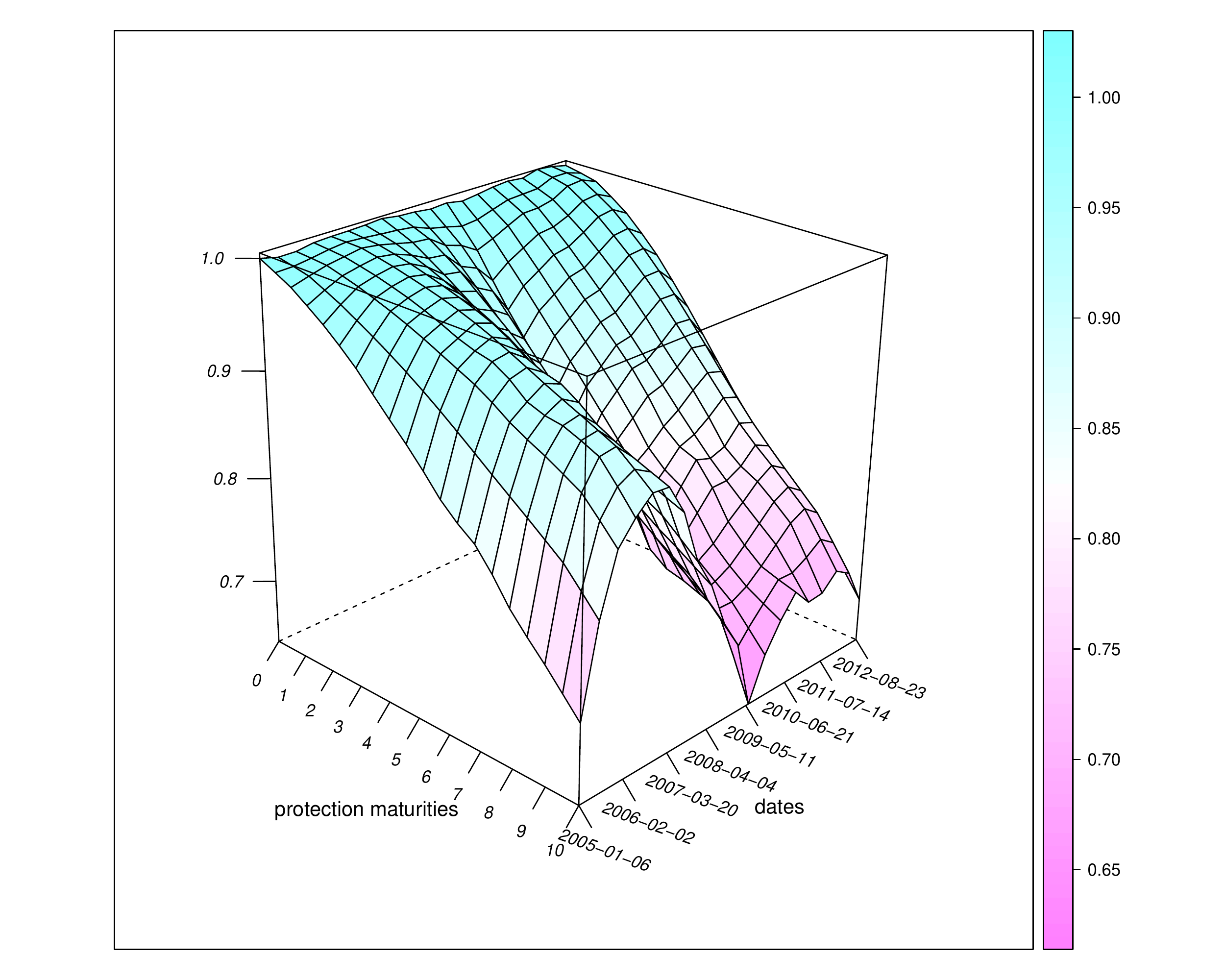}
\caption{
CDS implied survival probabilities as a function of  time-to-maturities and quotation dates.
}
\label{Dim2GaussCDS}
\end{figure}

\subsection{Need for monotonic kriging techniques}\label{sec:motivMonotonie}


At the beginning of the Section~\ref{IntModIConst}, we justified monotonic techniques developed in this paper by some no-arbitrage constraints.
A natural question is whether a monotonic condition is required to construct realistic term-structures. Let us stress that the construction problems we have considered rely on a market information summarized by an ill-posed system of equations of the form
\begin{equation*}
A \mathbf{Y}(\mathbf{X})=\mathbf{b}\, ,
\end{equation*}
where $A$ is not necessarily a square matrix. Obviously, these constraints do not lead to a unique set of possible values for $Y(x^{(i)})$, $i=1, \ldots, m$, so that spot rates or implied default probabilities are not directly available at point $x^{(1)}, \ldots, x^{(m)}$.
In the absence of arbitrage opportunities, the process $Y$ representing default-free zero-coupon bonds, discount factors or implied survival probabilities shall be monotone. It is thus natural to consider a constrained interpolation technique to construct curves based on such quantities.\\

Consider here the previous swap curve construction problem but relax the monotonicity constraint on the default-free zero-coupon bond process $Y$. As can be seen on Figure~\ref{fig:motivMonotoneDF} (left panel), the resulting \emph{kriging mean  is not  a decreasing function}. In addition, even when the average curve is monotonic, some \textit{sample curves are clearly not monotonic}, leading to wide and \textit{unrealistic confidence intervals}. On the right panel of Figure~\ref{fig:motivMonotoneDF} one can see that corresponding spot rates may seem reasonable in average, but omitting the monotonicity constraints leads to wide confidence intervals.
Notice that Figure~\ref{fig:motivMonotoneDF} does just aim at illustrating some problems that can be encountered when using unconstrained techniques, so that other choices of kernel or parameters in an unconstrained setting are not discussed here.\\

\begin{figure}[hptb]
\begin{minipage}{.5\linewidth}
\centering
\includegraphics[scale=0.4]{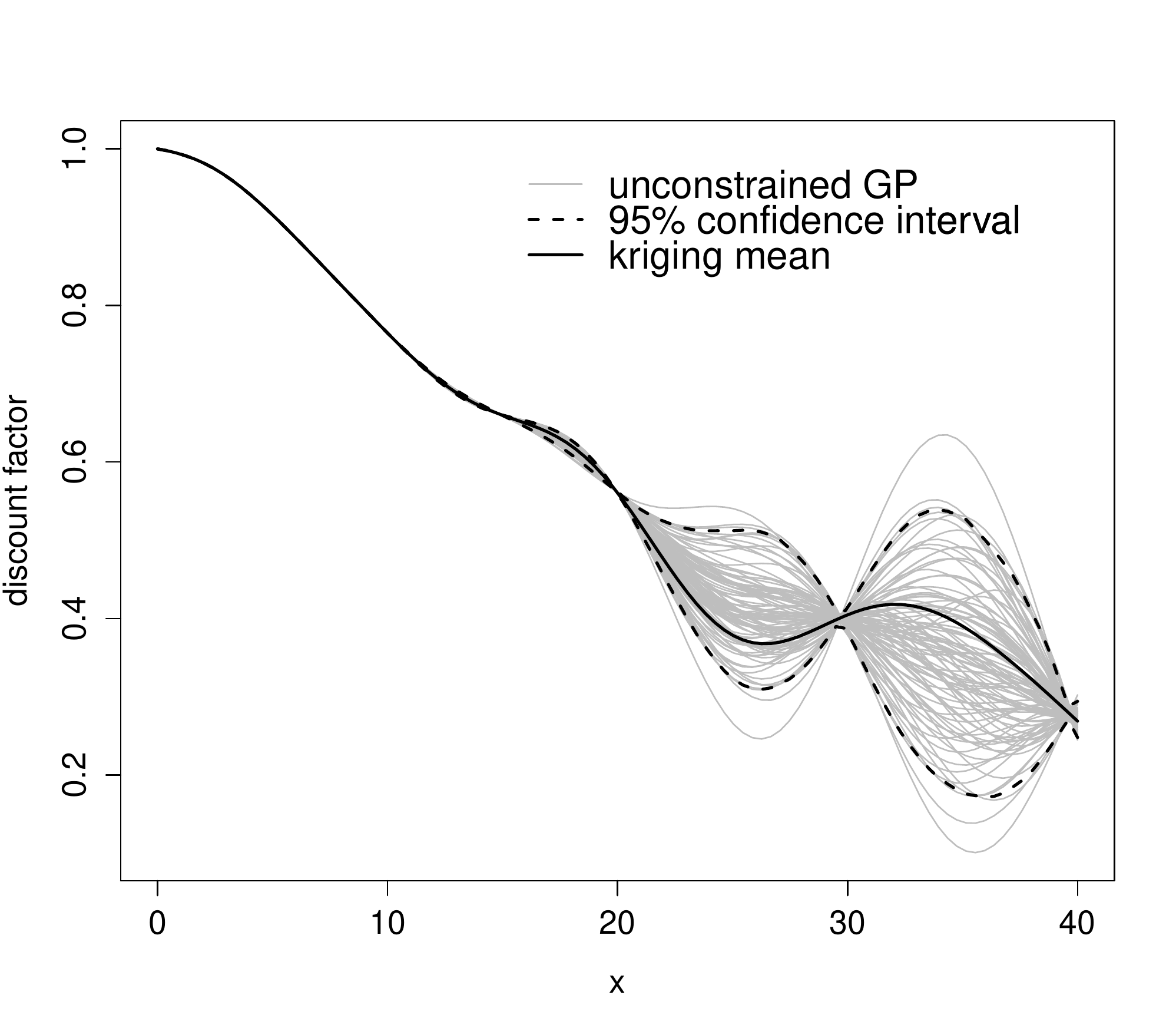}
\end{minipage}%
\begin{minipage}{.5\linewidth}
\centering
\includegraphics[scale=0.4]{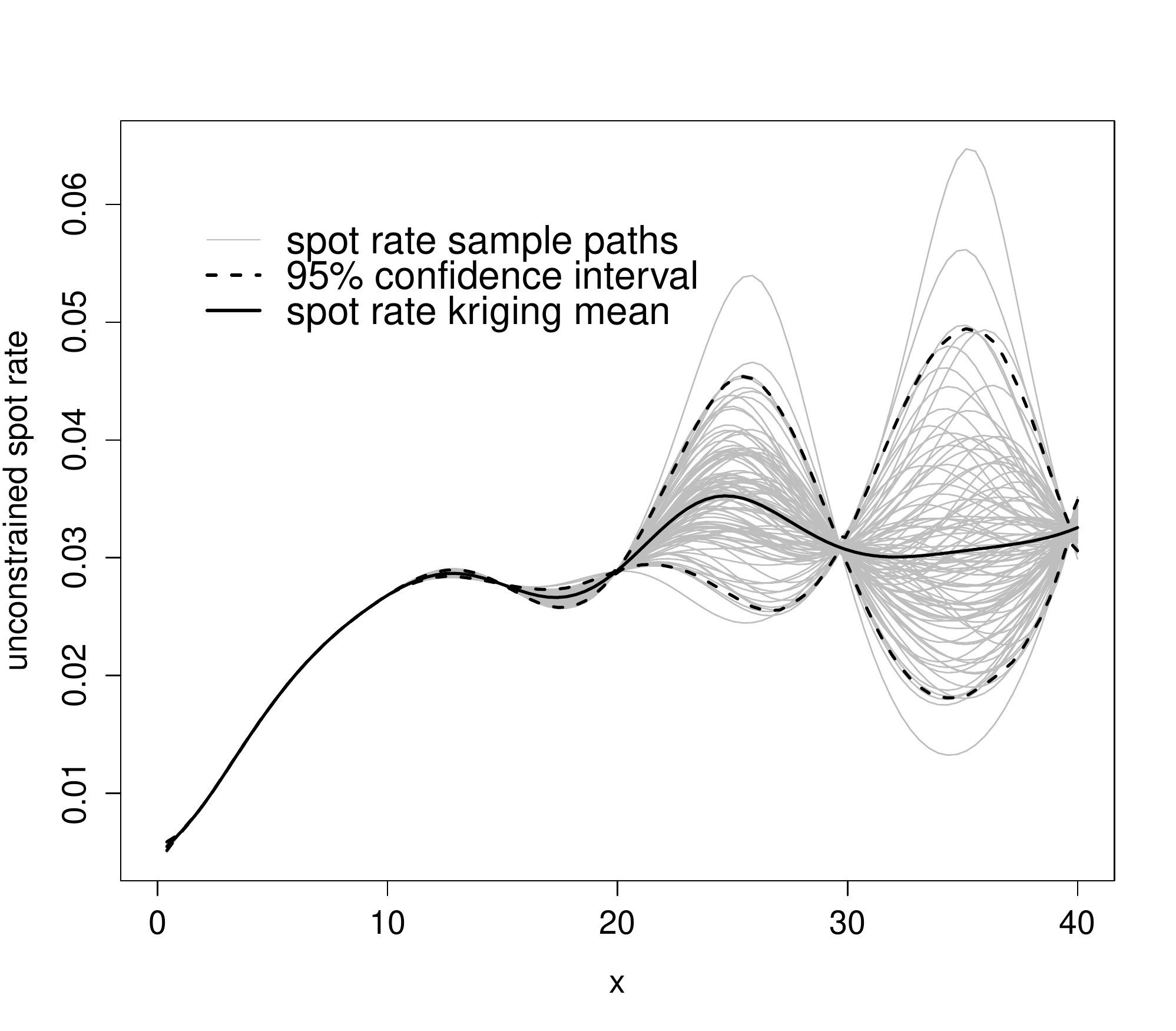}
\end{minipage}
\caption{Interpolation of some discount factors (left panel) using standard kriging techniques without monotonicity constraints. OIS Data, 03/06/2010, Gaussian kernel with nugget $10^{-7}$
. Corresponding spot rates are given in the right panel.
In this figure only, linear constraints $A \mathbf{Y}(\mathbf{X})=\mathbf{b}$ has been approximated by point-wise constraints $Y(x^{(i)})= M_K^N(x^{(i)}|A,\mathbf{b})$, $i=1, \ldots, m$, in order to use standard unconstrained R package \textit{DiceKriging} \citep[see][]{Roustant:Ginsbourger:Deville:2012:JSSOBK:v51i01}. Estimated parameters from this package are $(\theta,\sigma^2)=(5.8, 0.22)$.}
\label{fig:motivMonotoneDF}
\end{figure}


If one wants to avoid using monotone interpolation techniques, a natural idea is to interpolate some quantities that are deduced from $Y(x)$ and that are not necessarily monotonic.
As an example, one can work on a function $\zeta(x)$ deduced in a bijective way from each path of $Y(x)$. For example in a financial setting, if $Y(x)$ are discount factors, the function $\zeta(x)$ could be spot rates, $\zeta(x)= -\frac{1}{x}\log Y(x)$, or forward rates, $\zeta(x)= -\frac{d}{dx} \log Y(x)$. Obviously these functions does not suffer from monotonicity constraints, so that this could avoid using monotone interpolation. In the literature, some studies interpolate interest rates without considering any monotonic constraints. \cite{Steeley:2008} concludes that it is better to directly interpolates spot rates rather than discount factors: ``\textit{better yield curves estimates are obtained by fitting to the yield curve directly rather than fitting first to the discount function}''. 
 Some authors also propose to use kriging techniques to some non-monotonic curves, as~\cite{Benth:2015} who applies kriging to futures curves from energy futures prices. In~\cite{Kanevski2008}, tools from spatial statistics and machine learning are applied to produce some interest rate mapping  in
a two dimensional feature space (maturity, time). However, in these studies, rates or prices to be interpolated are assumed to be directly observed. In our setting, we do not necessarily observe  spot rates or even discount factors, and one aims at fitting market data in a non-parametric setting. We can make general objections to the interpolation of non-monotonic deduced quantities : 
\begin{itemize}
\item The first objection is that even if $\zeta(x)$ is not monotone, it is \emph{still constrained}: as an example, quantities such as discount rates or forward rates  are expected to be positive in absence of arbitrage opportunity. As for the discount rates, interpolating spot rates could lead to locally negative average spot rates, and confidence intervals obtained by kriging can reach the threshold 0, which is not desirable: the constraints are translated from monotonicity constraints to positivity constraints, and classical kriging techniques cannot handle these constraints.
\item A second objection is that in our setting, observations are $A \mathbf{Y}(\mathbf{X})=\mathbf{b}$, where $A$ is not necessarily a square and invertible matrix. Thus \emph{we cannot directly observe}, with our data, spot rates or forward rates. Even in the very special case where $A$ is a square and invertible matrix, spot rates can be deduced from $\mathbf{Y}(\mathbf{X})$, but not forward rates: knowing a function at some abscissas does not give straightforward constraints on its derivative.
\item At last, using kriging interpolation, even if omitting the positivity of $\zeta(x)$, would lead to \emph{non linear conditions} in terms of $\zeta(X)$. One has seen that conditionally to $A \mathbf{Y}(\mathbf{X})=\mathbf{b}$, $Y(x)$ is still a Gaussian Process, but it would not be the case any more for the process $\zeta(x)$.
Even in the simple case where $A$ is invertible, the process $\zeta(x)$ must be positive and given $A \mathbf{Y}(\mathbf{X})=\mathbf{b}$, it is not Gaussian any more, so that suited interpolation techniques still have to be introduced.
\end{itemize}


\section*{Conclusion}
In this paper, we show how suitable kriging techniques can be used to quantify model uncertainty embedded in the construction of financial term-structures. We consider that the curve under construction is an unobservable path of a conditional (spatial) Gaussian process satisfying some linear equality constraints and monotone properties. A suitable cross-validation method is proposed to estimate the Gaussian process covariance  parameters that control the level of uncertainty. We then investigate the efficiency of the proposed approach on some illustrative examples in one and two dimensions. The generated curves are all compatible with market quotes and respect non-arbitrage conditions. The conditional Gaussian process also allows to derive confidence bands for financial term-structures and related quantities. We compare the Gaussian  and the Mat\'ern 5/2 covariance kernels over different data sets : swaps vs Euribor, overnight indexed swaps, credit default swaps. We conclude that, for these applications, the Mat\'ern 5/2 covariance kernel seems to be the more appropriate since it generates more realistic forward curves in comparison with fitted  Nelson-Siegel or Svensson models.
In this work, we do not fully investigate the impact of curve uncertainty on the assessment of related  products and their hedging strategies. Moreover, in the illustrative example we have presented, we  consider that  market information is observed without uncertainty. It may be the case  that, due to  lack of liquidity, market quotes cannot be considered to be reliable. The kriging techniques we have proposed could then be adapted to the presence of noisy observations  (see section \ref{subsec:noisy}).
Improvements on the rejection sampling algorithm could also be useful, especially for the estimation of hyper-parameters in dimension $2$ where a large number of quotation dates has to be considered. These points are leaved for future research.

\section*{Acknowledgement}
We would like to thank the three anonymous reviewers and the editor for the time spent on this article and for useful suggestions. The authors also thank Xavier Bay (EMSE) and Nicolas Durrande (EMSE) for helpful discussions. This work has been conducted within the frame of the ReDice Consortium,
gathering industrial (CEA, EDF, IFPEN, IRSN, Renault) and academic (Ecole des Mines de Saint-Etienne, INRIA, and the University of Bern) partners around advanced methods for Computer Experiments. It also benefits from the support of the GRI in Financial Services and the Louis Bachelier Institute.  One author also thanks the ANR Lolita research project and DAMI research project.
\bibliography{bibliographie}
\bibliographystyle{apalike}

\end{document}